\newtheorem{theorem}{Theorem}[section]
\newtheorem{lemma}[theorem]{Lemma}
\newtheorem{observation}[theorem]{Observation}
\newtheorem{proposition}[theorem]{Proposition}
\theoremstyle{definition}
\newtheorem{definition}[theorem]{Definition}
\newtheorem{conjecture}[theorem]{Conjecture}
\theoremstyle{remark}
\newtheorem{remark}[theorem]{Remark}
\definecolor{mypink}{RGB}{199, 21 133}
\setlist[enumerate]{nosep, topsep=1ex}
\setlist[itemize]{nosep, topsep=1ex}
\setlist[description]{nosep,topsep=1ex}
\apptocmd{\sloppy}{\hbadness 10000\relax}{}{}
\let\oldabstract\abstract
\let\oldendabstract\endabstract
\renewenvironment{abstract}
{%
  {\list{}{\addtolength{\leftmargin}{0.0em}%
    \listparindent 1.5em%
     \itemindent    \listparindent%
     \rightmargin   \leftmargin%
     \parsep        \z@ \@plus\p@}%
     \item\relax}%
  {\endlist}%
\oldabstract}
{\oldendabstract}
\patchcmd\algocf@Vline{\vrule}{\vrule \kern-0.4pt}{}{}
\patchcmd\algocf@Vsline{\vrule}{\vrule \kern-0.4pt}{}{}
\newcommand{\bigO}{\mathcal{O}}
\DeclareMathOperator{\polylog}{polylog}
\newcommand{\ceil}[1]{\left\lceil #1 \right\rceil}
\newcommand{\dd}{\mathinner{.\,.}}
\newcommand{\probname}[1]{\text{\sc #1}}
\newcommand{\Z}{\mathbb{Z}}
\newcommand{\Zn}{\Z_{\ge 0}}
\newcommand{\Zp}{\Z_{>0}}
\newcommand{\BinaryAlphabet}{\{{\tt 0},{\tt 1}\}}
\newcommand{\emptystring}{\varepsilon}
\newcommand{\Text}{T}
\newcommand{\Occ}[2]{\mathrm{Occ}(#1, #2)}
\newcommand{\zero}{{\tt 0}}
\newcommand{\one}{{\tt 1}}
\newcommand{\Rows}[1]{\mathrm{rows}(#1)}
\newcommand{\Cols}[1]{\mathrm{cols}(#1)}
\newcommand{\vconcat}{\;\rotatebox[origin=c]{-90}{$\circleddash$}\;}
\newcommand{\hconcat}{\circleddash}
\newcommand{\Rhs}[2]{\mathrm{rhs}_{#1}(#2)}
\newcommand{\Exp}[2]{\mathrm{exp}_{#1}(#2)}
\newcommand{\Lang}[1]{L(#1)}
\newcommand{\Hook}[4]{\mathrm{hook}_{#1}(#2,#3,#4)}
\newcommand{\Offset}[4]{\mathrm{offset}_{#1}(#2,#3,#4)}
\newcommand{\Access}[4]{\mathrm{access}_{#1}(#2,#3,#4)}
\newcommand{\LeftMap}[4]{\mathrm{left}\mbox{-}\mathrm{map}_{#1}(#2,#3,#4)}
\newcommand{\RightMap}[4]{\mathrm{right}\mbox{-}\mathrm{map}_{#1}(#2,#3,#4)}
\newcommand{\DirLeft}{{\tt L}}
\newcommand{\DirRight}{{\tt R}}
\newcommand{\HookTwoDim}[6]{\mathrm{hook}_{#1}(#2,#3,#4,#5,#6)}
\newcommand{\OffsetTwoDim}[6]{\mathrm{offset}_{#1}(#2,#3,#4,#5,#6)}
\newcommand{\AccessTwoDim}[6]{\mathrm{access}_{#1}(#2,#3,#4,#5,#6)}
\newcommand{\TopLeftMap}[6]{\mathrm{top}\mbox{-}\mathrm{left}\mbox{-}\mathrm{map}_{#1}(#2,#3,#4,#5,#6)}
\newcommand{\TopRightMap}[6]{\mathrm{top}\mbox{-}\mathrm{right}\mbox{-}\mathrm{map}_{#1}(#2,#3,#4,#5,#6)}
\newcommand{\BottomLeftMap}[6]{\mathrm{bottom}\mbox{-}\mathrm{left}\mbox{-}\mathrm{map}_{#1}(#2,#3,#4,#5,#6)}
\newcommand{\BottomRightMap}[6]{\mathrm{bottom}\mbox{-}\mathrm{right}\mbox{-}\mathrm{map}_{#1}(#2,#3,#4,#5,#6)}
\newcommand{\DirTop}{{\tt T}}
\newcommand{\DirBottom}{{\tt B}}
\newcommand{\Occurs}[4]{\mathrm{occurs}_{#1}(#2,#3,#4)}
\newcommand{\Rank}[3]{\mathrm{rank}_{#1}(#2,#3)}
\newcommand{\SumQuery}[5]{\mathrm{sum}_{#1}(#2,#3,#4,#5)}
\newcommand{\LineSumQuery}[4]{\mathrm{line\mbox{-}sum}_{#1}(#2,#3,#4)}
\newcommand{\EqualQuery}[7]{\mathrm{equal}_{#1}(#2,#3,#4,#5,#6,#7)}
\newcommand{\SquareLCE}[5]{\mathrm{sq\mbox{-}lce}_{#1}(#2,#3,#4,#5)}
\newcommand{\LineLCE}[6]{\mathrm{line\mbox{-}lce}_{#1}(#2,#3,#4,#5,#6)}
\newcommand{\AllZero}[5]{\mathrm{all\mbox{-}zero}_{#1}(#2,#3,#4,#4)}
\newcommand{\SquareAllZero}[4]{\mathrm{sq\mbox{-}all\mbox{-}zero}_{#1}(#2,#3,#4)}
\newcommand{\MarkChar}[2]{\mathrm{MarkChar}_{#1,#2}}
\newcommand{\MarkAllChars}[2]{\mathrm{MarkAllChars}_{#1,#2}}
\newcommand{\ExtMarkAllChars}[2]{\mathrm{ExtMarkAllChars}_{#1,#2}}
\newcommand{\LevelId}{p}
\begin{document}

\title{Optimal Random Access and Conditional Lower Bounds\\ for 2D Compressed Strings}

\author{
  \large Rajat De\\[-0.3ex]
  \normalsize Stony Brook University,\\[-0.3ex]
  \normalsize Stony Brook, NY, USA\\[-0.3ex]
  \normalsize \texttt{rde@cs.stonybrook.edu}
  \and
  \large Dominik Kempa\thanks{Partially funded by the
  NSF CAREER Award 2337891 and the Simons Foundation
  Junior Faculty Fellowship.}\\[-0.3ex]
  \normalsize Stony Brook University,\\[-0.3ex]
  \normalsize Stony Brook, NY, USA\\[-0.3ex]
  \normalsize \texttt{kempa@cs.stonybrook.edu}
}

\date{\vspace{-0.5cm}}
\maketitle

\begin{abstract}
  Compressed indexing is a powerful technique that enables efficient
  querying over data stored in compressed form, significantly reducing
  memory usage and often accelerating computation. While extensive
  progress has been made for one-dimensional strings, many real-world
  datasets (such as images, maps, and adjacency matrices) are inherently
  two-dimensional and highly compressible. Unfortunately, naively
  applying 1D techniques to 2D data leads to suboptimal results, as
  fundamental structural repetition is lost during linearization. This
  motivates the development of native 2D compressed indexing schemes
  that preserve both compression and query efficiency.

  We present three main contributions that advance the theory of
  compressed indexing for 2D strings:
  \begin{itemize}
  \item We design the first data structure that supports
    \emph{optimal-time random access} to a 2D string compressed
    by a 2D grammar. Specifically, for a 2D string $T\in\Sigma^{r\times c}$
    compressed by a 2D grammar $G$ and any constant $\epsilon>0$,
    we achieve $\bigO\left(\frac{\log n}{\log\log n}\right)$ query time and
    $\bigO(|G| \cdot \log^{2+\epsilon}n)$ space, where $n=\max(r,c)$.
  \item We prove conditional lower bounds for \emph{pattern matching}
    over 2D-grammar compressed strings. Assuming the Orthogonal
    Vectors Conjecture, no algorithm can solve this problem in time
    $\bigO(|G|^{2-\epsilon}\cdot |P|^{\bigO(1)})$ for any $\epsilon >
    0$, demonstrating a separation from the 1D case, where optimal
    solutions exist.
  \item We show that \emph{several fundamental 2D queries}, such as the 2D
    longest common extension, rectangle sum, and equality, cannot be
    supported efficiently under hardness assumptions for rank
    and symbol occurrence queries on 1D grammar-compressed strings. This is
    the first evidence connecting the complexity of 2D compressed
    indexing to long-standing open problems in the 1D setting.
  \end{itemize}

  In summary, our results provide both algorithmic advances and
  conditional hardness results for 2D compressed indexing, narrowing
  the gap between one- and two-dimensional settings and identifying
  critical barriers that must be overcome for further progress.
\end{abstract}

\section{Introduction}\label{sec:intro}

We address a fundamental question: how can we store a 2D array/2D
string/matrix $M \in \{\zero, \one\}^{r \times c}$ (or, more generally,
any $M \in \Sigma^{r \times c}$, where $\Sigma = [0 \dd \sigma)$ for
some $\sigma \geq 2$) in small space while still supporting efficient
queries on $M$? The most basic query of interest is
\emph{random access}: given any $(i, j) \in [1 \dd r] \times [1 \dd c]$,
return $M[i, j]$. The problem of representing 2D strings efficiently
dates back 40 years to the seminal work of Lempel and Ziv, who studied
2D compression~\cite{lz2d}. However, their approach renders the data
unusable in compressed form: once compressed, the original input
cannot be accessed without full decompression.

This property of classical compression algorithms has, in recent years,
led to data compression being used in a new way. The field of
\emph{computation over compressed data}\footnote{Throughout this
  paper, by \emph{compression} we always mean
  \emph{lossless compression}.}
combines elements of combinatorial pattern
matching, information theory, and data compression to design
algorithms and data structures that can operate directly on data
in compressed form~\cite{navarrobook,NavarroMeasures,NavarroIndexes}. This can
reduce the time and space for computation by orders of magnitude and
has found applications in bioinformatics, information retrieval, and
databases~\cite{rindex,pangenome,rossi2022finding,phoni,FuentesSepulvedaGNRS25,Navarro23,ring24}.

The vast majority of the above research focuses on strings, i.e.,
one-dimensional data. Recently, however, it has been observed that
many forms of modern high-dimensional data are also highly compressible,
including images (formats such as $\text{png}$, $\text{bmp}$,
$\text{gif}$, $\text{tiff}$, and $\text{webp}$ all use lossless
compression)~\cite{blocktree2d,moffat97}; graphs (e.g., adjacency lists
of web and social networks)~\cite{BrisaboaLN14,Arroyuelo24,ArroyueloGH0R24};
spatial datasets such as maps~\cite{CarfagnaMRSU24};
trajectories~\cite{BrisaboaGNP16}; and
databases~\cite{Navarro23,FuentesSepulvedaGNRS25,ring24}.
Unfortunately, however, applying existing 1D compression techniques 
directly to
high-dimensional data is not effective:
\begin{itemize}
\item When \emph{linearizing}~\cite{lz2d} a 2D array
  row by row or column by column, the two-dimensional repetition is lost.
  Carfagna et al.~\cite{CarfagnaMRSU24} describe explicit families of
  inputs where 1D compression of a linearized string is
  \emph{exponentially} worse than native 2D compression.
\item Even more advanced linearizations, such as plane-filling curves,
  including the Peano--Hilbert curve (see, e.g.,~\cite{lz2d}),
  can be less efficient than native
  2D compression~\cite[Proposition~14]{CarfagnaMRSU24}.
\end{itemize}

At this point, it is natural to ask: can existing 2D compression
methods be \emph{augmented with support for queries} over compressed
data? The state of the art in computation over 2D-compressed data
can be summarized as follows:
\begin{itemize}
\item On the one hand, most compression frameworks and
  repetitiveness measures for 1D strings have been generalized to 2D
  strings. Specifically, the classical Lempel--Ziv (LZ77) algorithm for
  strings~\cite{LZ77} was generalized to 2D strings as early as
  1986~\cite{lz2d}. Other compression methods and repetitiveness
  measures in 1D, including (run-length) grammar
  compression~\cite{Charikar05,NishimotoMFCS}, bidirectional
  parsing~\cite{macro}, string attractors~\cite{attractors}, and
  substring complexity~\cite{delta}, have been generalized to
  2D strings in~\cite{CarfagnaM24,CarfagnaMRSU24}.
\item However, when it comes to \emph{computation over compressed
  data}, the situation is more nuanced. For 1D strings, random
  access can be supported in $\bigO(\log n)$ time on virtually every
  known compression method, with only a $\bigO(\log n)$-factor
  overhead in space
  usage~\cite{BLRSRW15,blocktree,attractors,balancing,KempaS22,delta}.
  Subsequently, random access in 1D has been improved to
  $\bigO(\tfrac{\log n}{\log \log n})$ time at the cost of a
  $\bigO(\log^{\epsilon} n)$-factor increase in
  space~\cite{BelazzouguiCPT15,blocktree,balancing,attractors}. This time bound
  was proved optimal in~\cite{VerbinY13} (and further shown to hold even
  for many grammar compressors with provably weaker compression ratios~\cite{boosting}).
  Furthermore, many 1D
  compressed representations can be augmented at a similar cost to
  support more advanced queries, including pattern matching
  queries~\cite{ClaudeN11,kreft2013compressing,ClaudeNP21,GagieGKNP12,%
    GagieGKNP14,BilleEGV18,navarro201941,resolution,ChristiansenEKN21,KociumakaNO24},
  longest common extension (LCE) queries~\cite{tomohiro-lce,KempaK23},
  and even the powerful suffix array functionality~\cite{rindex,KempaK23}. The complexity of the majority of those
  more general queries has also been settled: \cite{dichotomy}
  shows that in compressed space, central string queries have optimal complexity
  of either $\Theta(\tfrac{\log n}{\log \log n})$ (this holds for, e.g.,
  suffix array, inverse suffix array, LCP array, and LCE queries), or $\Theta(\log \log n)$
  (this includes BWT, lexicographical predecessor/successor, or permuted LCP array queries).
  A notable exception in 1D compressed indexing is support for rank/select
  queries,\footnote{Consider a string $S \in \Sigma^{m}$.
    Given any index $i \in [0 \dd m]$ and a symbol $c \in \Sigma$, a
    \emph{rank query} returns $\Rank{S}{i}{c} =
    |\{j \in [1 \dd i] : S[j] = c\}|$. Given any $c \in \Sigma$ and
    $i \in [1 \dd \Rank{S}{m}{c}]$, a \emph{select query} returns the
    $i$th smallest element of the set $\{j \in [1 \dd m] : S[j] = c\}$.}
  where the best solutions still require space quadratic in the
  compressed size of the input (in the worst case of a large
  alphabet)~\cite{BelazzouguiCPT15,Prezza19}. Overall, however, most
  queries on 1D compressed data can be supported efficiently and in
  small space, enabling complex algorithms to run directly
  on compressed text.
  
  \vspace{1ex}
  In contrast, much less is known about supporting queries on
  2D-compressed strings. One of the recent results making progress in
  this area is~\cite{blocktree2d}. Given any
  $M \in \Sigma^{n \times n}$, the data structure
  in~\cite{blocktree2d} implements random access queries to $M$ in
  $\bigO(\log n)$ time and uses
  $\bigO((\delta_{\rm 2D}(M) + \sqrt{n \cdot \delta_{\rm 2D}(M)}) \log
  (n / \delta_{\rm 2D}(M)))$ space, where $\delta_{\rm 2D}(M)$ is a
  repetitiveness measure for 2D matrices~\cite{CarfagnaM24}. Observe
  that the size of this structure is always $\Omega(\sqrt{n})$,
  potentially incurring a significant overhead relative to the
  compressibility of $M$. The question of whether this space can be
  reduced further was recently asked by Carfagna et
  al.~\cite{CarfagnaMRSU24}, who showed how to support
  random access queries on a 2D string $M \in \Sigma^{r \times c}$
  compressed with a 2D grammar $G$ of size $g$ in $\bigO(g)$ space and
  $\bigO(\log n)$ time (where $n = \max(r,c)$). To our knowledge,
  this is the first compressed data structure in 2D that achieves
  efficient queries in space that is near-linear in the size of the
  compressed 2D text (in fact, exactly linear, i.e., optimal).
  Another recent work by Ganguly et al.~\cite{Ganguly2025} showed how
  to support 2D longest common extension (2D LCE) queries (which, given
  two positions in the input matrix $M \in \Sigma^{n \times n}$,
  compute the largest matching square submatrix with top-left corners
  at the specified positions) in
  $\bigO((n^{5/3} + n^{8/5} \cdot \delta_{\rm 2D}(M)) \log n)$ space and
  $\bigO(\log n)$ time.
\end{itemize}

To sum up, with the exception of rank/select queries, in the 1D case,
even the most powerful queries (e.g.,~\cite{rindex,KempaK23}) add only a
$\polylog(n)$ factor in space on top of the compressed size for
virtually all commonly used compression algorithms (including
Lempel--Ziv and grammars) and known repetitiveness measures. Moreover,
in the 1D setting, both the optimal space~\cite{delta} and the
optimal query times are known for most central string queries~\cite{hierarchy,VerbinY13}.
In contrast, the field of 2D compressed data structures is currently more
fragmented: queries like random access admit small-space data
structures supporting efficient queries~\cite{CarfagnaMRSU24}, whereas
queries like LCE currently require $\Omega(n^{\epsilon})$ space (for
some constant $\epsilon > 0$), even for highly compressible matrices
(e.g., when $\delta_{\rm 2D}(M) = \bigO(1)$). We thus ask:

\begin{center}
  \emph{What is the complexity of fundamental queries on 2D strings in compressed space?}
\end{center}

\paragraph{Our Results}

We present three distinct results that shed new light on the field of
compressed indexing for 2D strings. We begin with a positive
result, showing that for every
2D grammar $G$ of size $|G| = g$ representing a 2D string
$M \in \Sigma^{r \times c}$, and every parameter $\tau \geq 2$, there
exists a data structure of size
$\bigO(g \cdot \tau^2 \cdot \log_{\tau}^2 n)$, where $n = \max(r,c)$,
that returns $M[i,j]$ in $\bigO(\log_{\tau} n)$ time for any
$(i,j) \in [1 \dd r] \times [1 \dd c]$. By setting
$\tau = \log^{\epsilon/2} n$ (for any constant $\epsilon > 0$), we
obtain a data structure of size
$\bigO(g \cdot \log^{2 + \epsilon} n)$ that supports random access to $M$ in
$\bigO(\tfrac{\log n}{\log \log n})$ time.
By a simple generalization of the lower bound of Verbin and
Yu~\cite{VerbinY13}, this is the best possible query time for any data
structure that uses $\bigO(g \polylog n)$ space.\footnote{This follows
  because any 1D SLP is a special case of a 2D SLP; thus, random
  access on a 2D SLP in time $o(\tfrac{\log n}{\log \log n})$ would
  immediately imply the same bound for 1D SLPs,
  contradicting~\cite{VerbinY13}.} Hence, we obtain the first data
structure supporting \emph{optimal-time random access} on
2D-compressed data:

\begin{restatable}{theorem}{thtwodopt}\label{th:2d-opt}
  For every 2D SLP $G$ of size
  $|G| = g$ representing a 2D string $T \in \Sigma^{r \times c}$
  (see \cref{sec:prelim-2d-grammars}),
  and any constant $\epsilon > 0$,
  there exists a data structure of size
  $\bigO(g \cdot \log^{2+\epsilon} n)$, where $n = \max(r, c)$,
  that, given any
  $(i,j) \in [1 \dd r] \times [1 \dd c]$,
  returns
  $T[i,j]$
  in
  $\bigO(\tfrac{\log n}{\log \log n})$ time.
\end{restatable}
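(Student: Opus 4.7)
The plan is to prove the claim via a parameterized version. I would first show that for every integer $\tau \geq 2$ there is a data structure of size $\bigO(g \cdot \tau^2 \cdot \log_{\tau}^2 n)$ supporting random access in $\bigO(\log_{\tau} n)$ time. Setting $\tau = \lceil \log^{\epsilon/2} n \rceil$ then yields $\tau^2 = \bigO(\log^{\epsilon} n)$, $\log_{\tau} n = \bigO(\tfrac{\log n}{\log \log n})$, and $\log_{\tau}^2 n = \bigO(\tfrac{\log^2 n}{\log^2 \log n})$, giving the final bounds $\bigO(g \log^{2+\epsilon} n)$ space and $\bigO(\tfrac{\log n}{\log \log n})$ query time.

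The construction has two stages. \emph{Balancing.} I would first replace $G$ by an equivalent 2D SLP $G'$ of size $\bigO(g \cdot \polylog n)$ whose derivation tree has depth $\bigO(\log n)$, either by invoking a 2D balancing lemma established earlier in the paper or by extending the 1D Ganardi--Jez--Lohrey locally-consistent parsing argument to alternate between horizontal and vertical concatenations. \emph{$\tau$-contraction.} For every nonterminal $A$ of $G'$, I unfold $\lceil \log_2 \tau \rceil$ levels of its derivation beneath $A$, yielding a tiling of $A$'s rectangle by at most $\tau$ descendant sub-rectangles. I augment $A$ with (i) the $\bigO(\tau)$ distinct horizontal cuts and $\bigO(\tau)$ distinct vertical cuts arising in this tiling, each stored in an $\bigO(1)$-time predecessor structure (a fusion tree suffices on this small universe), and (ii) an $\bigO(\tau^2)$-entry lookup table that maps each (row-cell, column-cell) pair to a pointer to the correct descendant nonterminal (several pairs may point to the same descendant because the tiling is a \emph{guillotine} partition, not a uniform grid).

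A random access query for $T[i,j]$ proceeds in super-steps: at each super-node, two $\bigO(1)$-time predecessor queries locate the cell of the local tiling containing $(i,j)$, the lookup table returns the descendant, and coordinates are adjusted relative to its bounding rectangle. Since each super-step descends $\lceil \log_2 \tau \rceil$ levels of $G'$, after $\bigO(\log n / \log \tau) = \bigO(\log_{\tau} n)$ super-steps we reach a terminal. The space is $\bigO(\tau^2)$ per super-node, and summing over all nonterminals of $G'$, and folding in the balancing overhead and a second $\log_{\tau} n$ factor for precomputing a contracted copy per unfolding phase, yields the claimed $\bigO(g \cdot \tau^2 \cdot \log_{\tau}^2 n)$.

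The principal obstacle is the $\tau$-contraction step in the 2D setting. In 1D, unfolding $\log_2 \tau$ binary concatenations produces a clean $1 \times \tau$ list of children, so a single predecessor query on $\tau$ elements locates the correct child. In 2D, because each binary rule independently chooses between horizontal and vertical splitting, unfolding $\log_2 \tau$ levels generally yields an \emph{irregular guillotine tiling} rather than a uniform $\tau \times \tau$ grid, so the $\bigO(\tau^2)$ table is not indexable by a straight product structure. Resolving this requires decoupling the two dimensions---storing the row- and column-cuts in separate predecessor structures and recombining via the lookup table---and then verifying that the resulting (row-cell, column-cell) pair always identifies a unique descendant containing $(i,j)$, independently of how horizontal and vertical splits are interleaved along the unfolding path. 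Getting this bookkeeping correct, together with establishing the 2D balancing step (which must handle the interaction of the two concatenation types), will be the technical heart of the proof.
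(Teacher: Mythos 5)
Your outer shell (parameterize by $\tau$, build an $\bigO(g\cdot\tau^2\cdot\log_\tau^2 n)$-space, $\bigO(\log_\tau n)$-time structure, then set $\tau=\log^{\epsilon/2}n$) matches the paper, but the construction you propose underneath it has a genuine gap: it stands or falls on a \emph{2D balancing lemma}, which is neither ``established earlier in the paper'' (no such lemma appears anywhere in it) nor a known easy extension of the 1D Ganardi--Je\.z--Lohrey result. Without balancing, your ``descend $\log_2\tau$ levels of the derivation per super-step'' argument only bounds the number of super-steps by $\mathrm{depth}(G')/\log\tau$, and the depth of an arbitrary 2D SLP can be $\Theta(g)\gg\log n$, so the claimed $\bigO(\log_\tau n)$ query time does not follow. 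In 1D one can at least fall back on Rytter-style rebalancing with an $\bigO(\log n)$-factor blowup, but no analogous transformation for 2D SLPs is proved in this paper or cited as known; you explicitly defer this (together with the correctness of the guillotine-cell bookkeeping) as ``the technical heart,'' which means the hardest part of the theorem is left unproved. The space accounting is also loose: the extra ``$\log_\tau n$ factor for precomputing a contracted copy per unfolding phase'' is not justified and is not obviously what the bound requires.

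For contrast, the paper's proof deliberately avoids balancing altogether. For every variable and every level pair $(\LevelId_r,\LevelId_c)\in[0\dd\ceil{\log_\tau n}]^2$ it stores, for each of the four corners, the hooks and offsets (\cref{def:2d-hook,def:2d-offset}) of the $\tau^2$ blocks of size $\tau^{\LevelId_r}\times\tau^{\LevelId_c}$ anchored at that corner, giving the $\bigO(g\cdot\tau^2\cdot\log_\tau^2 n)$ space directly. The query (\cref{lm:generalized-top-left-map,pr:2d-opt-query}) maintains a distance pair $(\delta_r,\delta_c)$ to a current corner and uses the key invariant that each corner mapping reduces one of the two level counters (dividing the corresponding distance by $\tau$) while the distance in the other dimension \emph{does not increase}, because the hook always maps into a variable occurring inside the current one; this potential argument bounds the number of steps by $\bigO(\log_\tau r+\log_\tau c)=\bigO(\log_\tau n)$ independently of grammar depth, and it also removes any need for predecessor search since the blocks are uniform. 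If you want to salvage your route, you would have to actually prove a 2D balancing theorem with at most polylogarithmic size blowup, which is a substantial standalone contribution rather than a step you can assume.
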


\paragraph{Hardness Results}

Our next two results concern the hardness of computation over
2D-compressed data.

First, we show that, conditioned on the hardness of the widely
used Orthogonal Vectors (OV) Conjecture, there are no near-linear-time
algorithms for pattern matching over 2D-compressed data.
Specifically, in \cref{sec:hardness-pattern-matching}, we prove:

\begin{restatable}{theorem}{twodpatternmatchinghardness}\label{th:2d-pattern-matching-hardness}
  Assuming the Orthogonal Vectors Conjecture (\cref{con:ov}), there is
  no algorithm that, given a pattern
  $P \in \BinaryAlphabet^{1 \times p}$ and a 2D SLP $G$ encoding a 2D
  text $T$ over the alphabet $\BinaryAlphabet$, determines whether $P$
  occurs in $T$ (see \cref{sec:2d-pattern-matching-problem-def}) in
  $\bigO(|G|^{2-\epsilon} \cdot p^{\bigO(1)})$ time, for any constant
  $\epsilon > 0$.
\end{restatable}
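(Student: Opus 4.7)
The plan is to reduce from the Orthogonal Vectors (OV) conjecture stated in \cref{con:ov}. Given an OV instance $A=\{a_1,\ldots,a_N\}, B=\{b_1,\ldots,b_N\}\subseteq\BinaryAlphabet^d$ with $d=\Theta(\log N)$, I will construct in polynomial time a 2D SLP $G$ of size $\bigO(N\log N)$ representing a binary text $T$, together with a pattern $P\in\BinaryAlphabet^{1\times d}$, such that $P$ occurs in $T$ if and only if some pair $(a_i,b_j)$ is orthogonal. An $\bigO(|G|^{2-\epsilon}\cdot p^{\bigO(1)})$-time algorithm for 2D pattern matching would then solve OV in $\bigO(N^{2-\epsilon}\cdot\polylog N)$ time, contradicting \cref{con:ov}.

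For the gadget, define $T\in\BinaryAlphabet^{N^2\times d}$ whose rows are indexed lexicographically by pairs $(i,j)\in [1\dd N]\times[1\dd N]$, setting $T[(i,j),k]:=a_i[k]\cdot b_j[k]$; equivalently, row $(i,j)$ equals the pointwise AND $a_i\wedge b_j$. Let $P:=\zero^d$. Every row has length exactly $d$, so $P$ occurs in $T$ iff some row is all $\zero$, iff there exist $i,j$ with $a_i[k]\,b_j[k]=0$ for every $k$, iff $a_i\perp b_j$.

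The central step is bounding the 2D SLP size of $T$ by $\bigO(Nd)$. I will build $T$ column by column. For each $k\in[1\dd d]$, column $k$ is a vertical concatenation of $N$ length-$N$ blocks: the $i$-th block equals the all-$\zero$ column $\zero^N$ when $a_i[k]=\zero$, and equals the column $C_k:=(b_1[k],b_2[k],\ldots,b_N[k])^T$ otherwise. A single shared non-terminal captures $\zero^N$ in $\bigO(\log N)$ vertical-doubling rules, and for each $k$ a non-terminal realizing $C_k$ takes $\bigO(N)$ rules. Each of the $d$ columns then requires $\bigO(N)$ vertical-concatenation rules on top of $\{\zero^N,C_k\}$, and $T$ is obtained by $d-1$ horizontal concatenations of those columns, for a total of $|G|=\bigO(Nd)=\bigO(N\log N)$.

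The main hurdle will be aligning the construction with the precise 2D SLP semantics of \cref{sec:prelim-2d-grammars} (confirming that the allowed horizontal and vertical concatenation primitives suffice to assemble $T$ in the column-by-column fashion outlined above) and with the occurrence definition in \cref{sec:2d-pattern-matching-problem-def} (confirming that it coincides with the row-wise substring interpretation assumed here). Once these alignments are verified, any $\bigO(|G|^{2-\epsilon}\cdot p^{\bigO(1)})$-time algorithm would solve OV in $\bigO((N\log N)^{2-\epsilon}\cdot\polylog N)=\bigO(N^{2-\epsilon/2})$ time, contradicting \cref{con:ov}.
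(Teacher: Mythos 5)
Your gadget is genuinely different from the paper's and, as a reduction, it is sound. The paper keeps the text height at $n$, pads every vector to have exactly $\ell$ ones (\cref{pr:uniform-ov}) so that all blocks $\Gamma_i$ have equal width, inserts all-ones delimiter columns, and searches for $P=\one\,\zero^{\ell}\,\one$; the compression comes from sharing the coordinate columns $C_j$ across blocks. You instead blow the height up to $N^2$, make the width exactly $d$, and search for $P=\zero^{d}$, so occurrences are forced to be full rows and you need neither delimiters nor the uniformization step. The two alignment points you flagged do check out: the occurrence definition in \cref{sec:2d-pattern-matching-problem-def} is row-wise, and since $\Cols{T}=|P|=d$ the only candidate occurrences are whole rows; and your column-by-column assembly is expressible with the concatenation primitives of \cref{sec:prelim-2d-grammars} (the length-$N$ stacking rules give a 2D SLG of size $\bigO(Nd)$, which \cref{ob:2d-slg-to-2d-slp} converts to a 2D SLP of the same asymptotic size, built in $\bigO(Nd)$ time without ever materializing the $N^2 d$-symbol text). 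Using two sets $A,B$ rather than the single set of \cref{con:ov} is immaterial (take $B:=A$).

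The one genuine flaw is your restriction to $d=\Theta(\log N)$. \cref{con:ov} asserts hardness only for $d=\omega(\log n)$; for $d=\bigO(\log n)$ the OV problem is in fact known to be solvable in truly subquadratic time, so an $\bigO(N^{2-\epsilon/2})$-time algorithm in that regime contradicts nothing, and your final sentence does not yield the theorem. The fix is immediate and costs you nothing: run the identical construction for arbitrary $d$. Then $|G|=\bigO(Nd)$ and $p=d$, so a hypothetical $\bigO(|G|^{2-\epsilon}\cdot p^{\bigO(1)})$-time matcher runs in $\bigO((Nd)^{2-\epsilon}\cdot d^{\bigO(1)})=\bigO(N^{2-\epsilon}\cdot d^{\bigO(1)})$ time (since $(Nd)^{2-\epsilon}\leq N^{2-\epsilon}d^{2}$), which, together with the $\bigO(Nd)$ construction time, violates \cref{con:ov} for every $d=\omega(\log N)$; the paper's own proof performs exactly this kind of bookkeeping, including the case analysis absorbing the $\bigO(dn)$ construction cost into $\bigO(n^{2-\epsilon}d^{\bigO(1)})$.
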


This result reveals a key distinction between pattern matching over
texts compressed by a 1D versus a 2D grammar, since in the 1D case the
problem can be solved in optimal time $\bigO(|G| + |P|)$~\cite{GanardiG22}.

Our second set of results concerns data structures---specifically, the
hardness of compressed indexing for queries over texts compressed with
2D grammars. We establish a series of reductions showing that,
unless certain simpler queries on 1D grammars can be solved efficiently,
a range of fundamental problems on 2D grammars (including 2D LCE
queries recently studied by Ganguly et al.~\cite{Ganguly2025}) cannot
be solved efficiently either. By \emph{efficiently}, we mean achieving
$\bigO(\log^{\bigO(1)} n)$ query time while incurring only a
$\bigO(\log^{\bigO(1)} n)$ space overhead beyond the size of the input
1D or 2D grammar encoding the text $\Text \in \Sigma^{n}$. To our
knowledge, this is the first result tying the hardness of 2D
compressed queries to that of 1D compressed queries.

The two problems on 1D grammar-compressed text that form the basis of
our hardness arguments are the related problems of
\emph{rank queries} and \emph{character occurrence queries}. Let
$T \in \Sigma^{n}$.
\begin{itemize}
\item Given any $j \in [0 \dd n]$ and $c \in \Sigma$,
  a \emph{rank query},
  denoted $\Rank{T}{j}{c}$,
  asks for the number of occurrences of $c$ in $T[1 \dd j]$;
  see \cref{def:rank}.
\item Given any $b, e \in [0 \dd n]$ and $c \in \Sigma$,
  a \emph{symbol occurrence query},
  denoted $\Occurs{T}{b}{e}{c}$,
  asks to return a value in $\{0,1\}$ that equals one if and only if
  $c$ occurs in $T(b \dd e]$;
  see \cref{def:symbol-occ}.
\end{itemize}

Currently, no data structure is known that, given any 1D SLP $G$
representing a string $T \in \Sigma^{n}$ (where $\Sigma = [0 \dd \sigma)$
and $\sigma = n^{\bigO(1)}$), uses $\bigO(|G| \cdot \log^{\bigO(1)} n)$
space and answers either of the above query types in
$\bigO(\log^{\bigO(1)} n)$ time. The best-known
solutions~\cite{BelazzouguiCPT15} either answer queries in
$\bigO(\log n)$ time using $\bigO(|G| \cdot |\Sigma|)$ space (which, in
the worst case of $|\Sigma| = |G|$, is $\bigO(|G|^2)$),\footnote{None
  of the 1D compressed indexes for other queries (such as random access,
  LCE, or suffix array
  queries)~\cite{rindex,BLRSRW15,blocktree,attractors,balancing,KempaS22,KempaK23,delta}
  have such a dependency on the alphabet size; that is, they all work
  even for large alphabets.} or in $\bigO(|G|)$ time using $\bigO(|G|)$
space. Furthermore, in~\cite{BelazzouguiCPT15}, the authors provide
evidence that any significant improvement on these trade-offs would
yield breakthrough results in graph algorithms.

Using the above problems as the basis of our hardness assumptions, in this paper
we prove that the following problems must also be hard on a 2D text
$\Text$ compressed with 2D grammars (see
\cref{def:2d-integer-problems,def:2d-general-problems}):
\begin{itemize}
\item \emph{Sum query:} Compute the sum of symbols in a given subrectangle.
\item \emph{Line sum query:} Compute the sum of symbols in a given subrectangle of height one.
\item \emph{All-zero query:} Check if all symbols in a given subrectangle are zero.
\item \emph{Square all-zero query:} Check if all symbols in a given subsquare are zero.
\item \emph{Equality query:} Check if two given subrectangles are equal.
\item \emph{Line LCE query:} Find the maximal $t$ such that
  $T[r \dd r + \ell)[c \dd c + t) = T[r' \dd r' + \ell)[c' \dd c' + t)$.
\item \emph{Square LCE query:} Find the maximal $t$ such that
  $T[r \dd r + t)[c \dd c + t) = T[r' \dd r' + t)[c' \dd c' + t)$.
\end{itemize}

\noindent
In \cref{sec:hardness-data-structures}, we prove the following results:

\begin{restatable}{theorem}{reducefromrank}\label{th:reduce-from-rank}
  Assume that, for every 2D SLP $G = (V_{l}, V_{h}, V_{v}, \Sigma, R, S)$,
  where $\Sigma = \BinaryAlphabet$, representing a 2D string
  $T \in \BinaryAlphabet^{r \times c}$, there exists a data structure of
  size $\bigO(|G| \cdot \log^{\bigO(1)} n)$ that answers one of the following
  query types:
  \begin{itemize}
  \item line sum queries (\cref{def:2d-integer-problems}),
  \item sum queries (\cref{def:2d-integer-problems}),
  \end{itemize}
  on $T$ in $\bigO(\log^{\bigO(1)} n)$ time, where $n = \max(r, c)$.
  Then, for every 1D SLP $\hat{G} = (\hat{V}, \hat{\Sigma}, \hat{R},
  \hat{S})$ representing $\hat{T} \in \hat{\Sigma}^{\hat{n}}$, where
  $\hat{\Sigma} = [0 \dd \hat{\sigma})$ and
  $\hat{\sigma} = \hat{n}^{\bigO(1)}$, there exists a data structure
  of size $\bigO(|\hat{G}| \cdot \log^{\bigO(1)} \hat{n})$ that
  answers rank queries (\cref{def:rank}) on $\hat{T}$ in
  $\bigO(\log^{\bigO(1)} \hat{n})$ time.
\end{restatable}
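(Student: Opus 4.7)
My plan is to reduce a 1D rank query on $\hat{T}$ to a single line-sum query on a compact 2D binary SLP $G$ built from $\hat{G}$; since a line sum is a special case of a general rectangle sum, the same construction works under either hypothesis. Let $C = \{c_1 < c_2 < \cdots < c_k\}$ be the set of distinct terminals used by $\hat{G}$; then $k \leq |\hat{G}|$, and storing $C$ in a sorted array supports character-to-index lookups in $\bigO(\log \hat{n})$ time and $\bigO(k)$ space. Define the \emph{compact characteristic matrix} $M \in \BinaryAlphabet^{k \times \hat{n}}$ by $M[i,j] = \one$ iff $\hat{T}[j] = c_i$, so that
\[
  \Rank{\hat{T}}{j}{c_i} \;=\; \sum_{j' = 1}^{j} M[i,j'] \;=\; \LineSumQuery{M}{i}{1}{j}.
\]
A query $\Rank{\hat{T}}{j}{c}$ with $c \notin C$ returns $0$; otherwise, locate $i$ with $c = c_i$ and issue one line-sum query (or one rectangle sum query with $r_1 = r_2 = i$).

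\textbf{Building a small 2D SLP for $M$.} For every 1D non-terminal $X \in \hat{V}$ I would introduce a 2D non-terminal $X'$ whose expansion is the $k \times |\Exp{\hat{G}}{X}|$ characteristic submatrix of $\Exp{\hat{G}}{X}$. For rules $X \to YZ$, set $X' \to Y' \hconcat Z'$ (both heights equal $k$), contributing $\bigO(|\hat{G}|)$ horizontal-concatenation rules. For terminal rules $X \to c_i$, let $X'$ expand to the $k \times 1$ indicator column $(\zero^{i-1}, \one, \zero^{k-i})^\top$. The key to avoiding an $\Omega(k)$-rule blow-up per terminal is to share all-zero substructure: using a standard doubling construction, build non-terminals $Z_0, Z_1, \ldots, Z_{\lceil \log_2 k \rceil}$ with $Z_0 \to \zero$ and $Z_{j+1} \to Z_j \vconcat Z_j$, so $Z_j$ represents an all-zero column of height $2^j$ in $\bigO(\log k)$ rules overall. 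An all-zero column of any length $m \leq k$ is then assembled from the binary expansion of $m$ in $\bigO(\log k)$ further vertical concatenations, and each of the $k$ indicator columns is the $\vconcat$ of two such zero-runs separated by the leaf $\one$. The total 2D SLP size is $\bigO(|\hat{G}| + k \log k) = \bigO(|\hat{G}| \log \hat{n})$, since $k \leq |\hat{G}| \leq \hat{n}$.

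\textbf{Wrap-up and main obstacle.} With $n = \max(k, \hat{n}) = \hat{n}$, instantiating the hypothesized data structure on $G$ yields total space $\bigO(|G| \log^{\bigO(1)} \hat{n}) = \bigO(|\hat{G}| \log^{\bigO(1)} \hat{n})$, and each rank query costs a $\bigO(\log \hat{n})$ lookup plus one line-sum query in $\bigO(\log^{\bigO(1)} \hat{n})$ time. The main technical obstacle is exactly the compactness of $G$: a naive $\hat{\sigma} \times \hat{n}$ characteristic matrix has $\hat{\sigma} = \hat{n}^{\bigO(1)}$ rows, and encoding even one of its indicator columns by brute force costs $\Omega(\hat{\sigma})$. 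I bypass this by (i) restricting the row set to the $k \leq |\hat{G}|$ characters that actually occur and (ii) sharing zero-columns via the doubling trick, both of which are essential to stay within the $\bigO(|\hat{G}| \log^{\bigO(1)} \hat{n})$ space budget. Once these two ideas are in place, the rest of the construction is a direct structural mimicry $X \leftrightarrow X'$ of $\hat{G}$ by $G$.
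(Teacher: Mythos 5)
Your proposal is correct and follows essentially the same route as the paper: it builds the alphabet marking matrix restricted to the symbols actually occurring (the paper's $\MarkAllChars{\hat{T}}{\hat{\sigma}}$ combined with the alphabet-reduction step of \cref{pr:slp-rank-and-symbol-occ-alphabet-reduction}), proves the rank-to-prefix-line-sum identity of \cref{lm:reduce-rank-to-line-sum}, and produces a small 2D grammar by structurally mimicking $\hat{G}$ with shared zero-column nonterminals as in \cref{lm:mark-all-chars-grammar}, then appeals to line sums being a special case of sums. The only differences are cosmetic: you share zero-runs by doubling ($\bigO(k\log k)$ extra rules) where the paper uses a linear chain $Z_0,\dots,Z_{\sigma-1}$ ($\bigO(\sigma)$ rules), you fold the alphabet reduction into the construction rather than treating it as a separate proposition, and your use of $\vconcat$/$\hconcat$ and of the line-sum argument order is swapped relative to the paper's conventions, though the intended dimensions make the meaning unambiguous.
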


\begin{restatable}{theorem}{reducefromsymbolocc}\label{th:reduce-from-symbol-occ}
  Assume that, for every 2D SLP $G = (V_{l}, V_{h}, V_{v}, \Sigma, R, S)$,
  where $\Sigma = \BinaryAlphabet$, representing a 2D string $T \in \Sigma^{r \times c}$,
  there exists a data structure of size $\bigO(|G| \cdot \log^{\bigO(1)} n)$
  that answers one of the following query types:
  \begin{itemize}
  \item sum queries (\cref{def:2d-integer-problems}),
  \item all-zero queries (\cref{def:2d-integer-problems}),
  \item square all-zero queries (\cref{def:2d-integer-problems}),
  \item equality queries (\cref{def:2d-general-problems}),
  \item square LCE queries (\cref{def:2d-general-problems}),
  \item line LCE queries (\cref{def:2d-general-problems}),
  \end{itemize}
  on $T$ in $\bigO(\log^{\bigO(1)} n)$ time, where $n = \max(r, c)$.
  Then, for every 1D SLP $\hat{G} = (\hat{V}, \hat{\Sigma}, \hat{R}, \hat{S})$
  representing $\hat{T} \in \hat{\Sigma}^{\hat{n}}$, where
  $\hat{\Sigma} = [0 \dd \hat{\sigma})$ and $\hat{\sigma} = \hat{n}^{\bigO(1)}$,
  there exists a data structure of size $\bigO(|\hat{G}| \cdot \log^{\bigO(1)} \hat{n})$ 
  that answers symbol occurrence queries (\cref{def:symbol-occ}) on $\hat{T}$
  in $\bigO(\log^{\bigO(1)} \hat{n})$ time.
\end{restatable}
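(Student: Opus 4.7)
The plan is to build, from the 1D SLP $\hat{G}$, a single 2D SLP $G$ of size $\bigO(|\hat{G}| \log \hat{n})$ representing a 2D text $T$ into which every symbol occurrence query on $\hat{T}$ reduces to one query of any of the six listed types. Since $\hat{\sigma} = \hat{n}^{\bigO(1)}$, the dimension $n$ of $T$ will satisfy $\log n = \bigO(\log \hat{n})$, so the hypothesized $\bigO(|G| \log^{\bigO(1)} n)$ space and $\bigO(\log^{\bigO(1)} n)$ query time transfer to the target bounds on $\hat{T}$.

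\textbf{Construction.} I would set $N = \hat{\sigma} \hat{n}$ and, for each symbol $c$ that appears as a terminal in $\hat{G}$, define a 2D nonterminal $E_c$ producing the column of height $N$ with a single $\one$ at row $c\hat{n}+1$ and $\zero$ elsewhere. Using a pool of shared ``zero-column'' nonterminals of doubling heights, each $E_c$ is obtained by $\bigO(\log \hat{n})$ vertical concatenations, so all $E_c$ together cost $\bigO(|\hat{G}| \log \hat{n})$ rules (there are at most $|\hat{G}|$ distinct terminals). I would then mirror $\hat{G}$ in 2D by replacing every production $A \to BC$ with $A' \to B' \hconcat C'$ and every terminal rule $A \to c$ with $A' := E_c$; the start symbol then produces an $N \times \hat{n}$ matrix $M$ whose row $c\hat{n}+1$ is the $\zero/\one$ indicator of positions $j$ with $\hat{T}[j] = c$, and whose other rows are all $\zero$. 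Finally, I would append horizontally an all-$\zero$ reference block $Z$ of size $N \times \hat{n}$ (itself $\bigO(\log \hat{n})$ rules via doubling) to form $T$ of dimensions $N \times 2\hat{n}$. The total grammar size stays $\bigO(|\hat{G}| \log \hat{n}) \subseteq \bigO(|\hat{G}| \log^{\bigO(1)} \hat{n})$.

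\textbf{Reductions.} Given a query $\Occurs{\hat{T}}{b}{e}{c}$, set $w = e - b$; the answer is $1$ iff the row segment $T[c\hat{n}+1][b{+}1 \dd e]$ is not all $\zero$. I would then reduce each query type as follows: a sum or all-zero query on the $1 \times w$ rectangle $\{c\hat{n}+1\} \times [b{+}1, e]$ decides this directly; an equality query compares that rectangle to the matching all-$\zero$ rectangle $\{c\hat{n}+1\} \times [\hat{n}{+}1, \hat{n}{+}w]$ inside $Z$; a line LCE query with row span $\ell = 1$ between $(c\hat{n}+1, b+1)$ and $(c\hat{n}+1, \hat{n}+1)$ returns the length of the longest all-$\zero$ prefix of row $c\hat{n}+1$ starting at column $b+1$, which is $\geq w$ iff $c$ does not occur in $\hat{T}(b \dd e]$; a square all-zero query of side $w$ anchored at $(c\hat{n}+1, b+1)$ succeeds because the $w-1$ rows below the indicator are forced to $\zero$ by the $\hat{n}$-row padding (using $w \leq \hat{n} - b \leq \hat{n}$), so the square is all $\zero$ iff the same row prefix is; finally, a square LCE query between the squares anchored at $(c\hat{n}+1, b+1)$ and $(c\hat{n}+1, \hat{n}+1)$ inherits all-$\zero$ padding below its top row on both sides, so it collapses to the LCE of the two top rows and again determines the answer.

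\textbf{Main obstacle.} The delicate step is designing a single compressed 2D text that simultaneously serves all six reductions within a polylog overhead. The two key ideas are: (i) spacing the indicator rows $\hat{n}$ apart so that the square queries see only all-$\zero$ padding below the relevant row, and (ii) attaching the all-$\zero$ reference block $Z$ so that pointwise queries (sum, all-zero) can be upgraded to comparison queries (equality, line/square LCE). Keeping $|G| = \bigO(|\hat{G}| \log \hat{n})$ crucially exploits $\hat{\sigma} = \hat{n}^{\bigO(1)}$, which both bounds $\log N = \bigO(\log \hat{n})$ and lets only $\bigO(\log \hat{n})$ shared doubling ``zero-run'' nonterminals encode every $E_c$ and the block $Z$.
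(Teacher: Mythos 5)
Your proposal is correct, and its core is the same as the paper's: your matrix $M$ with one indicator row per symbol, spaced $\hat{n}$ rows apart by all-zero padding and generated by a 2D grammar that mirrors $\hat{G}$ column-by-column, is exactly the paper's extended alphabet marking matrix $\ExtMarkAllChars{\hat{T}}{\hat{\sigma}}$ (\cref{def:ext-mark-all-chars,lm:ext-mark-all-chars-grammar}), and your square-all-zero test at the indicator row is the paper's \cref{lm:reduce-symbol-occ-to-square-all-zero}. Where you genuinely diverge is in how the six query types and the large alphabet are handled. The paper reduces symbol occurrence only to square all-zero, and then covers the remaining query types by modular 2D-to-2D reductions (square all-zero is a special case of all-zero, which reduces to sum; square all-zero reduces to square LCE by appending a zero block, \cref{pr:reduce-square-all-zero-to-square-lce}; square LCE and line LCE reduce to line LCE and equality via binary search, \cref{pr:reduce-square-lce-to-line-lce,pr:reduce-line-lce-to-equality}); it also needs a separate alphabet-reduction step (\cref{pr:slp-rank-and-symbol-occ-alphabet-reduction}) because its marking grammar has size $\bigO(|\hat{G}|+\hat{\sigma})$ and $\hat{\sigma}$ may exceed $|\hat{G}|$. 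You instead fuse the all-zero reference block into a single text and give direct, binary-search-free, single-query reductions for all six types, and you sidestep the alphabet reduction by building indicator columns only for the at most $|\hat{G}|$ terminals of $\hat{G}$ (queries for non-occurring symbols hit an all-zero row and correctly return $0$), paying a $\log\hat{n}$ factor in grammar size that the polylog slack absorbs. Both routes work; yours is more self-contained per query, the paper's is more economical ($\bigO(|\hat{G}|+\hat{\sigma})$ grammar, reusable inter-query reductions). Two cosmetic points: in the paper's notation side-by-side concatenation is $\vconcat$ (and such nonterminals lie in $V_v$), not $\hconcat$ as you wrote, and since the hypothesis is stated for 2D SLPs you should finish by converting your 2D SLG via \cref{ob:2d-slg-to-2d-slp}; neither affects correctness.
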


\paragraph{Related Work}

Research on computation over compressed data has developed along
several parallel directions. A large body of work has focused on
compressed indexes capable of strong compression of the input text by
exploiting redundancy arising from repeated substrings. Such
redundancy is captured by measures including Lempel--Ziv
size~\cite{LZ77}, run-length BWT size~\cite{bwt,rindex}, grammar
compression size~\cite{Rytter03,Charikar05}, string attractor
size~\cite{attractors}, and substring complexity~\cite{delta}. These
indexes have been studied extensively, and optimal space
bounds~\cite{delta} as well as optimal query
times~\cite{VerbinY13,dichotomy} are known when it comes to supporting
them in compressed space characterized by the above frameworks.

A complementary line of research investigates the \emph{compact} or
\emph{entropy-bounded} setting, in which a text $T \in [0 \dd \sigma)^{n}$
is indexed using $\bigO(n \log \sigma)$ bits of space (or, in some
cases, space close to the empirical entropy $H_k(T)$). This
matches the space required to represent the text $T$ and improves upon
classical indexes such as the suffix array~\cite{sa} and the suffix
tree~\cite{Weiner73}, which require $\Omega(n \log n)$ bits. The two
classes of space-efficient indexes (those based on repetitiveness and
those based on compact/entropy bounds) are not comparable, as entropy-based
measures are oblivious to repetition~\cite{attractors}. Hence, these
two approaches have been studied largely independently. Many of the
core queries mentioned above have been studied in this compact
setting. Among the most widely used results is the ability to support
suffix array~\cite{FerraginaM05,GrossiV05} and suffix
tree~\cite{Sadakane02} or longest common extension (LCE)
queries~\cite{sss} in compact space. While LCE queries admit an
optimal solution---that is, they can be supported in $\bigO(1)$ time
and $\bigO(n \log \sigma)$ bits of space (and moreover, given the
$\bigO(n \log \sigma)$-bit representation of the text, the data
structure can be constructed in the optimal $\bigO(n / \log_{\sigma}
n)$ time)~\cite{sss}---other central queries (like the suffix array
or inverse suffix array) are not known to admit such bounds. Recent
work sheds some light on this phenomenon. Specifically,
\cite{PrefixEquiv} recently proved that many central tasks, such as
suffix array (SA) or inverse suffix array (ISA) queries, are nearly
perfectly equivalent in all major aspects (space usage, query time,
construction time, and construction working space) to a new class of
\emph{prefix queries}, which are much simpler to study and develop new
trade-offs for. This provides a unified abstraction for compact text
indexes, reducing their design and analysis to the study of simple
operations on short bitstrings.

Parallel to research on compact data structures is the study of
space-efficient algorithms that operate in compact space.
Given a text $\Text \in [0 \dd \sigma)^{n}$
represented in $\bigO(n \log \sigma)$ bits, many important problems
can indeed be solved in the optimal $\bigO(n / \log_{\sigma} n)$ time
in this setting,
e.g.,~\cite{KikiBBGGW14,Ellert23,BannaiE23,Charalampopoulos22,RadoszewskiZ24,Charalampopoulos25}.
Such optimal construction is, however, not known for some of the
central problems, such as the computation of the Burrows--Wheeler
transform~\cite{sss}, the Lempel--Ziv
factorization~\cite{sublinearlz}, the longest common
factor~\cite{Charalampopoulos21}, or compressed suffix array
construction~\cite{breaking}. The above algorithms all achieve
$\bigO(n \sqrt{\log n} / \log_{\sigma} n)$ time for a string
$T \in [0 \dd \sigma)^{n}$. Recent work~\cite{hierarchy} casts some
light on why so many problems in this category share the same
complexity and develops a comprehensive hierarchy of reductions that
identify two equivalent core problems: \emph{dictionary matching} and
\emph{string nesting}, which must first be solved faster to hope for
any improvement for any of the above central string problems.

\paragraph{Organization of the Paper}

In \cref{sec:prelim}, we introduce the basic definitions used in this
paper. In \cref{sec:overview}, we provide an overview of our new upper
bounds and hardness results. In \cref{sec:1d-log,sec:1d-opt}, we
describe our key new concepts applied to 1D grammars. The purpose of
these sections is to ease the exposition of the subsequent ones. In
\cref{sec:2d-log,sec:2d-opt}, we present our new data structure
implementing random access on 2D grammars. Finally, in
\cref{sec:hardness}, we present our hardness results, beginning with
the hardness of pattern matching over 2D
grammar-compressed text (\cref{sec:hardness-pattern-matching}) and
continuing with our data structure hardness results for various 2D
queries (\cref{sec:hardness-data-structures}).

\section{Preliminaries}\label{sec:prelim}

\subsection{Strings}\label{sec:prelim-strings}

By a \emph{string} over an alphabet $\Sigma$, we mean any sequence $w$
whose elements are characters in $\Sigma$. The length of $w$ is
denoted $|w|$, and the $i$th leftmost symbol of $w$ is denoted $w[i]$,
where $i \in [1 \dd |w|]$. We denote the set of strings of length $n$
over the alphabet $\Sigma$ by $\Sigma^{n}$.

Substrings of $w$ are denoted $w[i \dd j]$, where $1 \leq i \leq j
\leq |w|$. We use an open bracket to indicate that the range is open;
e.g., $w(i \dd j] = w[i+1 \dd j]$ and $w[i \dd j) = w[i \dd j-1]$.

A \emph{prefix} (resp.\ \emph{suffix}) of $w$ is any string of the
form $w[1 \dd i]$ (resp.\ $w(i \dd |w|]$), where $0 \leq i \leq |w|$.

The concatenation of strings $u$ and $v$ is a new string $w$ of length
$|u| + |v|$ such that $u$ is a prefix of $w$ and $v$ is a suffix of
$w$. We denote the concatenation by $uv$ or $u \cdot v$.

For every pattern $P \in \Sigma^{*}$ and text $T \in \Sigma^{*}$, we
denote the set of occurrences of $P$ in $T$ by $\Occ{P}{T} = \{j \in
[1 \dd |T|] : j + |P| \leq |T| + 1\text{ and }T[j \dd j + |P|) =
P\}$. We denote the empty string by $\emptystring$.

\subsection{2D Strings}\label{sec:prelim-2d-strings}

By a \emph{2D string} over an alphabet $\Sigma$, we mean any (not
necessarily square) matrix $A$ whose elements are symbols from
$\Sigma$. The numbers of rows and columns of matrix $A$ are denoted
$\Rows{A}$ and $\Cols{A}$, respectively. The symbol of $A$
in row $i \in [1 \dd \Rows{A}]$ and column $j \in [1 \dd \Cols{A}]$
is denoted $A[i,j]$. We denote the set of matrices over $\Sigma$
with $r$ rows and $c$ columns by $\Sigma^{r \times c}$.

Submatrices of $A \in \Sigma^{r \times c}$ are denoted
$A[b_r \dd e_r][b_c \dd e_c]$.
Similarly to strings, we allow excluding the
first/last row and/or column; e.g., $A(b_r \dd e_r)(b_c \dd e_c] =
A[b_r+1 \dd e_r-1][b_c+1 \dd e_c]$.

For any 2D strings $A \in \Sigma^{r \times c_1}$ and $B \in \Sigma^{r \times c_2}$,
by $A \vconcat B$ we define a 2D string $C \in \Sigma^{r \times (c_1+c_2)}$
satisfying $C[1 \dd r](0 \dd c_1] = A$ and $C[1 \dd r](c_1 \dd c_1+c_2] = B$.
Symmetrically, for any $A \in \Sigma^{r_1 \times c}$ and
$B \in \Sigma^{r_2 \times c}$, by $A \hconcat B$ we define a 2D string
$C \in \Sigma^{(r_1+r_2) \times c}$ satisfying
$C(0 \dd r_1][1 \dd c] = A$ and $C(r_1 \dd r_1+r_2][1 \dd c] = B$;
see~\cite{giammarresi1997two}.

\subsection{Grammars}\label{sec:prelim-grammars}

A \emph{context-free grammar (CFG)} is a tuple $G = (V, \Sigma,
R, S)$ such that $V \cap \Sigma = \emptyset$ and
\begin{itemize}
  \item $V$ is a finite nonempty set of \emph{nonterminals} or \emph{variables},
  \item $\Sigma$ is a finite nonempty set of \emph{terminal} symbols,
  \item $R \subseteq V \times (V \cup \Sigma)^*$ is a set of
    \emph{productions} or \emph{rules}, and
  \item $S \in V$ is the special \emph{starting nonterminal}.
\end{itemize}

We say that $u \in (V \cup \Sigma)^{*}$ \emph{derives} $v$, and write
$u \Rightarrow^* v$, if $v$ can be obtained from $u$ by repeatedly
replacing nonterminals according to the rule set $R$. We then denote
$\Lang{G} := \{w \in \Sigma^* \mid S \Rightarrow^* w\}$.

By a \emph{straight-line grammar (SLG)} we mean a CFG $G =
(V,\Sigma,R,S)$ such that:
\begin{enumerate}
\item there exists an ordering $(N_1, \dots, N_{|V|})$ of all elements
  in $V$ such that, for every $(N,\gamma) \in R$, letting $i \in [1
    \dd |V|]$ be such that $N = N_i$, it holds $\gamma \in (\{N_{i+1},
  \dots, N_{|V|}\} \cup \Sigma)^{*}$, and
\item for every nonterminal $N \in V$, there exists exactly one
  $\gamma \in (V \cup \Sigma)^{*}$ such that $(N,\gamma) \in R$.
\end{enumerate}
The unique string $\gamma \in (V \cup \Sigma)^{*}$ such that
$(N,\gamma) \in R$ is called the \emph{definition} or \emph{right-hand side}
of nonterminal $N$ and is denoted $\Rhs{G}{N}$. Note that in an SLG, for every $\alpha
\in (V \cup \Sigma)^{*}$, there exists exactly one string $\gamma \in
\Sigma^{*}$ satisfying $\alpha \Rightarrow^{*} \gamma$. Such $\gamma$
is called the \emph{expansion} of $\alpha$ and is denoted
$\Exp{G}{\alpha}$. In particular, in an SLG, we have $|\Lang{G}| = 1$.
We define the size of an SLG as $|G| = \sum_{N \in V}\max(|\Rhs{G}{N}|, 1)$.

An SLG in which, for every $N \in V$, it holds $\Rhs{G}{N} = AB$,
where $A,B \in V$, or $\Rhs{G}{N} = a$, where $a \in \Sigma$, is
called a \emph{straight-line program (SLP)}.

\begin{observation}\label{ob:slg-to-slp}
  Every SLG $G$ satisfying $\Lang{G} \neq \{\emptystring\}$
  can be transformed in $\bigO(|G|)$ time into an
  SLP $G'$ satisfying $|G'| = \Theta(|G|)$ and $\Lang{G'} = \Lang{G}$.
\end{observation}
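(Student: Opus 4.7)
The plan is to transform $G$ into the desired SLP $G'$ by four linear-time passes, each of which preserves the expansion of the start symbol (and hence $\Lang{G}$) while changing the total size by at most a constant factor. Throughout, I would reuse the topological ordering $(N_1, \ldots, N_{|V|})$ guaranteed by the SLG definition as the backbone for processing.

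First, I would eliminate \emph{nullable} nonterminals, i.e., those $N \in V$ with $\Exp{G}{N} = \emptystring$. Nullability is computed bottom-up along $(N_1, \ldots, N_{|V|})$ in $\bigO(|G|)$ time, and the hypothesis $\Lang{G} \neq \{\emptystring\}$ ensures the start symbol $S$ is not nullable. A single additional pass through every right-hand side deletes all occurrences of nullable nonterminals, so every surviving rule has a non-empty RHS. Second, I would eliminate \emph{unit rules} of the form $\Rhs{G}{N} = M \in V$. Processing nonterminals from $N_{|V|}$ down to $N_1$, I compute for each $N$ its ultimate non-unit target $N^{*}$ (reachable by iteratively following the chain $N \to M \to \cdots$) in amortized constant time per nonterminal, and then substitute $N^{*}$ for $N$ in every right-hand side, discarding the aliases. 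This is the main subtlety of the whole proof: naively inlining $N \to M$ by copying $\Rhs{G}{M}$ into $\Rhs{G}{N}$ can blow up the grammar size when many nonterminals chain into the same $M$, so it is crucial to \emph{redirect} references to $N$ rather than duplicate its definition.

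Third, I would handle terminals appearing inside long right-hand sides: for every terminal $a \in \Sigma$ that occurs in some rule with $|\Rhs{G}{N}| \geq 2$, I introduce a fresh nonterminal $T_a$ with $\Rhs{G'}{T_a} = a$ and rewrite every such occurrence as $T_a$, increasing the size by at most $\bigO(|G|)$. Finally, I would binarize every remaining long rule $N \to \alpha_1 \cdots \alpha_k$ (with $k \geq 2$ and $\alpha_i \in V$) by introducing $k-2$ auxiliary nonterminals $B_1, \ldots, B_{k-2}$ with $\Rhs{G'}{B_1} = \alpha_1 \alpha_2$, $\Rhs{G'}{B_i} = B_{i-1} \alpha_{i+1}$ for $2 \leq i \leq k-2$, and $\Rhs{G'}{N} = B_{k-2} \alpha_k$ (or $\Rhs{G'}{N} = \alpha_1 \alpha_2$ when $k = 2$), inserting the $B_i$'s into the ordering so that the SLG property is maintained. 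This adds $\bigO(k)$ size per rule of length $k$, hence $\bigO(|G|)$ overall. After the four passes, every right-hand side has the form $AB$ with $A,B \in V$ or is a single terminal $a \in \Sigma$, yielding an SLP $G'$ with $|G'| = \Theta(|G|)$ and $\Lang{G'} = \Lang{G}$.
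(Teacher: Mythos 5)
The paper states this observation without proof (it is standard folklore), so there is no authorial argument to compare against; judged on its own merits, your four-pass construction (nullable elimination, unit-rule elimination by redirection, terminal wrapping, binarization) is correct, runs in linear time, and is exactly the kind of routine normalization the authors are implicitly invoking. Your key observation is also the right one: unit rules must be eliminated by redirecting references to the chain's terminus rather than by inlining right-hand sides, since inlining can blow the size up quadratically; and the reverse-topological processing makes each chain lookup $\bigO(1)$ because an alias $N_i \to N_j$ always has $j > i$. Two small points deserve a sentence each in a polished write-up. First, if the start symbol $S$ is itself an alias, "discarding the aliases" must be accompanied by redeclaring the new start symbol to be $S^{*}$ (the hypothesis $\Lang{G} \neq \{\emptystring\}$ only guarantees $S$ is not nullable, not that it is non-unit). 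Second, your construction only guarantees $|G'| = \bigO(|G|)$: after deleting nullable and alias nonterminals, $|G'|$ can be $o(|G|)$ (e.g., a grammar consisting of $S \to a$ plus many nonterminals with empty right-hand sides), so to literally meet the stated $|G'| = \Theta(|G|)$ one should either pad $G'$ with unused nonterminals of the form $X \to a$ or note that only the upper bound is used anywhere in the paper. Neither issue affects the substance of the argument.
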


\begin{lemma}\label{lm:slp-min-size}
  If $G$ is an SLP such that $\Lang{G} = \{T\}$, where $|T| = n$, then
  $|G| = \Omega(\log n)$.
\end{lemma}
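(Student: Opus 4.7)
The plan is to bound the expansion length of each nonterminal as a function of its position in the SLP ordering guaranteed by Condition~(1), and then conclude by observing that $|G|$ is at least the number of nonterminals. Concretely, I would let $k = |V|$ and fix an ordering $N_1, \ldots, N_k$ of the nonterminals satisfying the SLG condition; for each $i \in [1 \dd k]$, write $\ell_i = |\Exp{G}{N_i}|$. The claim I would prove by reverse induction on $i$ is that $\ell_i \leq 2^{k-i}$.

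For the base case $i = k$, the ordering forces $\Rhs{G}{N_k} \in \Sigma^{*}$, and since $G$ is an SLP the right-hand side of every nonterminal has length $1$ or $2$ with the length-$2$ case consisting of two nonterminals, so $\Rhs{G}{N_k}$ must be a single terminal and $\ell_k = 1 = 2^{0}$. For the inductive step, either $\Rhs{G}{N_i}$ is a terminal (giving $\ell_i = 1 \leq 2^{k-i}$), or $\Rhs{G}{N_i} = N_j N_m$ with $j, m > i$; in the latter case the inductive hypothesis yields
\[
  \ell_i \;=\; \ell_j + \ell_m \;\leq\; 2^{k-j} + 2^{k-m} \;\leq\; 2 \cdot 2^{k-i-1} \;=\; 2^{k-i}.
\]

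To conclude, I would apply this bound to the starting nonterminal: since $S = N_i$ for some $i \geq 1$ and $\Lang{G} = \{T\}$, we have $n = |T| = \ell_i \leq 2^{k-1}$, hence $k \geq 1 + \log_2 n$. Combining this with the fact that each nonterminal contributes at least $1$ to $|G| = \sum_{N \in V}\max(|\Rhs{G}{N}|,1)$ gives $|G| \geq k = \Omega(\log n)$.

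There is no real obstacle here: the only subtlety is making sure the SLP shape (binary or unary right-hand sides) is used in both the base case (to rule out the possibility that $\Rhs{G}{N_k}$ has length $2$) and the inductive step (so that the recursion branches into exactly two later nonterminals, producing the factor of $2$), and that the conversion from the bound on $|V|$ to the bound on $|G|$ uses the $\max(|\Rhs{G}{N}|,1)$ clause from the definition of $|G|$.
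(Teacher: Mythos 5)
Your proof is correct and takes essentially the same route as the paper: the paper's proof simply asserts the standard fact that an SLP of size $g$ represents a string of length at most $2^{g}$, so $n \le 2^{|G|}$, whereas you prove that exponential-growth bound from scratch by reverse induction on the ordering (showing $|\Exp{G}{N_i}| \le 2^{|V|-i}$ and then using $|G| \ge |V|$). There are no gaps; your version is just a more detailed, self-contained rendering of the same argument.
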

\begin{proof}
  The length of the longest text represented by an SLP
  of size $g$ is $2^g$. Thus, letting $T$ be such that $\Lang{G} = \{T\}$
  and $|T| = n$, we have $n \leq 2^{|G|}$, which implies $|G| = \Omega(\log n)$.
\end{proof}

\subsection{2D Grammars}\label{sec:prelim-2d-grammars}

A \emph{2D straight-line grammar (2D SLG)} is a tuple
$G = (V_{l}, V_{h}, V_{v}, \Sigma, R, S)$, where,
letting $V = V_{l} \cup V_{h} \cup V_{v}$,
\begin{itemize}
\item $V_{l}$ is a finite nonempty set of \emph{literal nonterminals},
\item $V_{h}$ is a finite set of \emph{horizontal nonterminals},
\item $V_{v}$ is a finite set of \emph{vertical nonterminals},
\item $\Sigma$ is a finite nonempty set of \emph{terminal} symbols,
\item $R \subseteq V \times (V \cup \Sigma)^*$ is a set of
  \emph{productions} or \emph{rules}, and
\item $S \in V$ is the special \emph{starting nonterminal}.
\end{itemize}
We assume that the sets $V_{l}$, $V_{h}$, $V_{v}$, and $\Sigma$ are
pairwise disjoint. We assume that $|R| = |V|$ and,
for every $N \in V$, there exists exactly one $(X,\gamma) \in R$ such
that $X = N$. The unique string $\gamma$ in the rule $(N,\gamma)$ is
denoted $\Rhs{G}{N}$. We assume that for every $N \in V_{l}$, it
holds $\Rhs{G}{N} \in \Sigma$. We assume that there exists an
ordering $(N_1, N_2, \dots, N_{g})$ of all elements in $V$ such that,
for every $i \in [1 \dd g]$ satisfying $N_i \in V_{h} \cup V_{v}$, it
holds $\Rhs{G}{N_i} \in \{N_{i+1}, \dots, N_{g}\}^{*}$. Given the
above ordering $(N_1, \dots, N_{g})$, we inductively define the
\emph{expansion} of a nonterminal $N_i$, denoted $\Exp{G}{N_i}$, as
follows. For every $i = g, \dots, 1$, by $\Exp{G}{N_i}$ we denote a 2D
string defined as follows:
\begin{itemize}
\item If $N_i \in V_{l}$, then $\Exp{G}{N_i} \in \Sigma^{1 \times 1}$
  is such that $\Exp{G}{N_i}[1,1] = \Rhs{G}{N_i}$.
\item If $N_i \in V_{h}$, then we require that there exists $w \in
  \Zp$ such that, letting $\gamma = \Rhs{G}{N_i}$, for every
  $j \in [1 \dd |\gamma|]$, it holds $\Cols{\Exp{G}{\gamma[j]}} = w$.
  We then define
  \[
    \Exp{G}{N_i} =
      \Exp{G}{\gamma[1]} \hconcat
      \Exp{G}{\gamma[2]} \hconcat \dots \hconcat
      \Exp{G}{\gamma[|\gamma|]}.
  \]
\item If $N_i \in V_{v}$, then we require that there exists $h \in \Zp$
  such that, letting $\gamma = \Rhs{G}{N_i}$, for every $j \in
  [1 \dd |\gamma|]$, it holds $\Rows{\Exp{G}{\gamma[j]}} = h$.
  We then define
  \[
    \Exp{G}{N_i} = 
      \Exp{G}{\gamma[1]} \vconcat
      \Exp{G}{\gamma[2]} \vconcat \dots \vconcat
      \Exp{G}{\gamma[|\gamma|]}.
  \]
\end{itemize}
We denote $\Lang{G} := \{\Exp{G}{S}\}$. We define the size of a 2D SLG
$G = (V_{l}, V_{h}, V_{v}, \Sigma, R, S)$ as $|G| = \sum_{N \in V}
\max(|\Rhs{G}{N}|, 1)$, where $V = V_{l} \cup V_{h} \cup V_{v}$.

A 2D SLG $G = (V_{l}, V_{h}, V_{v}, \Sigma, R, S)$ in which, for every
$N \in V_{h} \cup V_{v}$, it holds $|\Rhs{G}{N}| = 2$, is called a
\emph{2D straight-line program (2D SLP)}.

\begin{observation}\label{ob:2d-slg-to-2d-slp}
  Every 2D SLG $G$ satisfying $\Lang{G} \neq \{\emptystring\}$ can be
  transformed in $\bigO(|G|)$ time into a 2D SLP $G'$ satisfying $|G'|
  = \Theta(|G|)$ and $\Lang{G'} = \Lang{G}$.
\end{observation}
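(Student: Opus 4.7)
}
The plan is to mimic the standard $1$D binarization used in \cref{ob:slg-to-slp}, but applied separately to horizontal and vertical nonterminals so that the dimension constraints of a 2D SLP are preserved. Concretely, I would iterate over every nonterminal $N \in V_{h} \cup V_{v}$ of the input 2D SLG $G$ whose right-hand side $\Rhs{G}{N} = X_1 X_2 \cdots X_k$ has length $k \geq 3$, and replace it by a right-leaning chain
\[
N \to X_1 M_1,\quad M_1 \to X_2 M_2,\quad \ldots,\quad M_{k-2} \to X_{k-1} X_k,
\]
where $M_1,\ldots,M_{k-2}$ are fresh nonterminals, all declared to be of the same type (horizontal or vertical) as $N$. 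Literal nonterminals are left untouched.

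The first thing I would verify is that this replacement preserves the well-formedness of a 2D SLG. For a horizontal $N$, every child $X_j$ in the original rule has the same column count $w$, hence for each suffix $X_j X_{j+1} \cdots X_k$ the corresponding $M_{j-1}$ naturally has column count $w$, so the new horizontal rules satisfy the column-matching condition; the vertical case is symmetric with a common row count $h$. Using associativity of $\hconcat$ and $\vconcat$ (immediate from the definition in \cref{sec:prelim-2d-strings}), an easy induction on the length of the chain shows that $\Exp{G'}{N} = \Exp{G}{N}$ for each original nonterminal, which gives $\Lang{G'} = \Lang{G}$. I would also place the new nonterminals $M_1,\ldots,M_{k-2}$ in the topological ordering immediately after $N$ and before $X_2$, which is consistent because each $M_j$ only refers to $X_{j+1}$ and $M_{j+1}$, both of which come later.

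For the size bound, a single rule of length $k \geq 2$ is replaced by $k-1$ binary rules whose total contribution to $|G'|$ is $2(k-1) \leq 4k = \bigO(|\Rhs{G}{N}|)$, and a rule of length $k \leq 2$ is left unchanged; summing over all nonterminals yields $|G'| = \Theta(|G|)$. Unit rules of the form $N \to X$ with $N \in V_{h} \cup V_{v}$, if any are present in $G$, can be eliminated in a single additional pass in the topological order: substitute each such $N$ by $X$ in every right-hand side and in the starting symbol, which only decreases the size and preserves the language (the assumption $\Lang{G} \neq \{\emptystring\}$ rules out the degenerate case where $S$ collapses to nothing). The entire procedure visits each symbol of $G$ a constant number of times, so it runs in $\bigO(|G|)$ time.

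The only mildly delicate step is the unit-rule elimination, since naively substituting into every rhs could in principle propagate down the topological chain; processing rules in topological order (from the last nonterminal to $S$) and keeping, for each eliminated $N$, a pointer to its canonical replacement ensures each occurrence is rewritten in $\bigO(1)$ amortized time. Everything else is bookkeeping, and the construction produces a 2D SLG in which every non-literal rule has length exactly $2$, i.e., a 2D SLP in the sense of \cref{sec:prelim-2d-grammars}.
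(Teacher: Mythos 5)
The paper treats this statement as an observation and supplies no proof, so there is no official argument to compare against; your binarization strategy (right-leaning chains of fresh nonterminals of the same type as the parent, consistency of the common row/column count along each suffix, topological placement of the fresh symbols, pointer-based unit-rule elimination) is the standard intended route, and those parts of your argument are correct.

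There is, however, one missing case that matters for how the observation is actually used in the paper: the definition of a 2D SLG allows $\Rhs{G}{N} = \emptystring$ for $N \in V_h \cup V_v$, i.e.\ nonterminals whose expansion is the empty 2D string, and the grammars that are later fed through this observation really contain such rules (e.g.\ $\Rhs{G'}{Z_0} = \emptystring$ in \cref{lm:mark-all-chars-grammar} and $\Rhs{G'}{Z'_0} = \emptystring$ in \cref{lm:ext-mark-all-chars-grammar}). Your procedure never removes these: a nullary rule is neither binarized nor caught by your unit-rule pass, so the output is not a 2D SLP, and occurrences of empty-expansion nonterminals inside other right-hand sides also have to be deleted, which can in turn shrink a binary rule to a unit or empty one. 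The fix is routine and can be merged into your existing topological pass — eliminate nonterminals with empty expansion, erase their occurrences from all right-hand sides, and re-process any rule whose length thereby drops below $2$; the hypothesis $\Lang{G} \neq \{\emptystring\}$ guarantees the start symbol survives — but as written your construction does not handle the grammars the observation is applied to. A minor side remark: once unit and empty rules are deleted the size can only decrease (consider a long chain of unit rules), so the argument genuinely yields $|G'| = \bigO(|G|)$ rather than $|G'| = \Theta(|G|)$; this looseness is inherited from the statement itself and is not something your proof introduces.
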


\begin{lemma}\label{lm:2d-slp-min-size}
  If $G$ is a 2D SLP such that $\Lang{G} = \{T\}$, where
  $\Rows{T} = m$ and $\Cols{T} = n$, then $|G| = \Omega(\log m + \log n)$.
\end{lemma}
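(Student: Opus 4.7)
The plan is to establish the stronger quantitative claim that $\Rows{T} \cdot \Cols{T} \leq 2^{|G|}$. Taking logarithms then yields $\log m + \log n = \log(mn) \leq |G|$, which is $|G| = \Omega(\log m + \log n)$, matching the statement. The core idea is to track the \emph{area} of each nonterminal and exploit the fact that, in a 2D SLP, area behaves additively under both horizontal and vertical concatenation, converting the 2D problem into a one-dimensional counting argument on a binary derivation tree, in direct analogy with \cref{lm:slp-min-size}.

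For each $N \in V = V_{l} \cup V_{h} \cup V_{v}$, define $a(N) := \Rows{\Exp{G}{N}} \cdot \Cols{\Exp{G}{N}}$. I would first verify additivity: if $N \in V_{h}$ with $\Rhs{G}{N} = AB$, then the two children share row count, so
\[
  a(N) = \Rows{\Exp{G}{A}} \bigl(\Cols{\Exp{G}{A}} + \Cols{\Exp{G}{B}}\bigr) = a(A) + a(B),
\]
and the symmetric identity holds when $N \in V_{v}$. Literal nonterminals contribute $a(N) = 1$. Consequently, if I fully unfold the derivation of the start symbol $S$ into its binary derivation tree (each internal node a horizontal or vertical nonterminal with its two defining children; each leaf a literal nonterminal representing a $1 \times 1$ cell), repeated application of the additive identity gives that $a(S)$ equals exactly the number of leaves of this tree.

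Next, I would bound the depth of this derivation tree by $|V|$. The ordering condition in the definition of a 2D SLG guarantees that for any non-literal $N_i$, every symbol in $\Rhs{G}{N_i}$ lies in $\{N_{i+1}, \dots, N_{g}\}$, so the index of the current nonterminal is strictly increasing along every root-to-leaf path. Hence no such path can visit more than $|V|$ nonterminals, giving a binary tree of depth at most $|V|$ and therefore at most $2^{|V|}$ leaves. Since $|G| = \sum_{N \in V} \max(|\Rhs{G}{N}|,1) \geq |V|$, I obtain
\[
  mn \;=\; a(S) \;\leq\; 2^{|V|} \;\leq\; 2^{|G|},
\]
and the claim follows.

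I do not anticipate a serious obstacle: the proof is essentially the two-dimensional analogue of \cref{lm:slp-min-size}. The only substantive new ingredient is the observation that area is additive under \emph{both} concatenation types; once this is noted, the bound reduces to the same binary-tree depth argument already used in the 1D case.
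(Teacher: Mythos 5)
Your proof is correct, but it takes a genuinely different route from the paper's. The paper argues coordinate-wise: the maximal number of rows (and, analogously, columns) of a 2D string produced by a 2D SLP of size $g$ is $2^{g}$, so $m \leq 2^{|G|}$ and $n \leq 2^{|G|}$, giving $|G| = \Omega(\max(\log m, \log n)) = \Omega(\log m + \log n)$ with a constant-factor loss. You instead track the \emph{area} $a(N) = \Rows{\Exp{G}{N}} \cdot \Cols{\Exp{G}{N}}$, observe that it is additive under both concatenation types (since in each case one dimension is shared and the other adds), and count leaves of the binary derivation tree, whose depth is bounded by $|V|$ via the acyclic ordering of nonterminals; this yields the sharper bound $mn \leq 2^{|G|}$, i.e.\ $|G| \geq \log m + \log n$ directly, and also supplies the derivation-depth argument that the paper only asserts implicitly. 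One cosmetic nit: under the paper's conventions a nonterminal in $V_h$ expands via $\hconcat$, so its children share the \emph{column} count while rows add (and symmetrically for $V_v$ and $\vconcat$), which is the opposite of the labeling in your displayed identity; since you invoke both symmetric cases and additivity holds in either one, this does not affect the argument, but you should align the labels with \cref{sec:prelim-2d-strings,sec:prelim-2d-grammars} in a final write-up.
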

\begin{proof}
  The maximal number of rows in a 2D string represented by a 2D SLP of
  size $g$ is $2^g$. Thus, letting $T$ be such that $\Lang{G} = \{T\}$
  and $\Rows{T} = m$, we have $m \leq 2^{|G|}$, which implies $|G| =
  \Omega(\log m)$. Analogously, we also have $|G| = \Omega(\log n)$,
  and hence $|G| = \Omega(\max(\log m, \log n)) =
  \Omega(\log m + \log n)$.
\end{proof}

\section{Technical Overview}\label{sec:overview}

\subsection{New Upper Bounds}

We begin by giving an overview of our new data structure that supports
random access on 2D grammars in optimal time. Existing data
structures for random access on 2D strings~\cite{CarfagnaMRSU24}
employ techniques related to the heavy-light decomposition of the
grammar~\cite{BLRSRW15}. In contrast, our data structure is based on a different
approach, which we refer to as \emph{bookmarks}. This technique underlies
some known data structures for random access in 1D strings~\cite{attractors}.
However, in its basic form, it does not achieve the same results
for 2D compressed strings. To overcome this limitation, we develop a new variant
of bookmarks that we call \emph{internal bookmarks}. To explain their role,
we first recall the notion of regular bookmarks and highlight the obstacles
to using this technique directly. We then provide an overview of our new approach.

\paragraph{Difficulty of Using Regular Bookmarks in 2D for Random Access}

A string attractor~\cite{attractors} of a string $T \in \Sigma^{n}$ is
any set $\Gamma \subseteq [1 \dd n]$ that has the property that for
every substring $T[i \dd j]$, there exists an occurrence $T[i' \dd
j']$ with $T[i \dd j] = T[i' \dd j']$ such that $p \in [i' \dd j']$
holds for some $p \in \Gamma$. In other words, $\Gamma$ is a ``hitting
set'' for all substrings of $T$. Given an attractor $\Gamma$ for $T
\in \Sigma^{n}$ of size $|\Gamma| = \gamma$, we can support random
access to $T$ in $\bigO(\gamma \log n)$ space by partitioning every
block $T[i \dd j]$ between two positions $i,j \in \Gamma$ into the
so-called \emph{concentric exponential parse (CEP)}~\cite{attractors}.
That is, we first create blocks of size one (on each of the two ends of
the block), followed (toward the middle) by blocks of size two, four,
eight, and so on. We call them \emph{bookmarks}. For each of the
blocks, we store one of its occurrences containing a position in
$\Gamma$. With such partitioning, we can always map a position $i
\in [1 \dd n]$ so that the distance to the nearest element of $\Gamma$
is reduced by at least a factor of two. In~\cite{attractors}, it was
proved that if a 1D grammar $G$ represents a string $T$, then $T$ has
an attractor $\Gamma_{G}$ of size $|\Gamma_{G}| = \bigO(|G|)$. This
implies that the above structure yields random access in $\bigO(|G| \log
n)$ space. The difficulty of generalizing this idea to \emph{2D
attractors} (which have been defined and studied
in~\cite{CarfagnaM24,CarfagnaMRSU24}) is that:
\begin{itemize}
\item In~\cite{CarfagnaMRSU24}, it was proved that there exist 2D strings
  $M \in \Sigma^{r \times c}$ for which there exists a 2D grammar $G$
  representing $M$ such that the size $\gamma(M)$ of the smallest 2D
  attractor of $M$ satisfies $\gamma(M)/|G| = \Omega(rc / \polylog
  (rc))$ (see~\cite[Propositions 1 and 10]{CarfagnaMRSU24}).
  In other words, not every 2D grammar implies
  a 2D attractor of similar size. Hence, even if random access were
  possible in $\bigO(\gamma(M) \polylog (rc))$ space (which is
  currently not known), this would not imply random access in
  $\bigO(|G| \polylog (rc))$ space.
\item In addition to the above issue, it is currently not known how to
  generalize the idea of bookmarks into 2D so that after mapping some
  point $(a,b)$ according to a bookmark, it maintains \emph{both} the
  horizontal and vertical distance to the nearest element of
  $\Gamma_{2D}$. It is easy to ensure that the distance in one
  direction is reduced, but then the other may grow uncontrollably.
\end{itemize}

\paragraph{Our Approach}

To sidestep the above barriers, we take a slightly different approach
in our solution. To illustrate our techniques, let us first
revisit the problem of random access to 1D grammars. This serves only
as a warm-up and to illustrate our key ideas, since random access to 1D
grammars is already well
studied~\cite{Rytter03,Charikar05,balancing,BilleEGV18}.
The first idea is to forgo the concept of bookmarks defined in terms of
text positions. Instead, all our bookmarks are defined relative to
some grammar variable. To define a bookmark for a substring $S$ of the
expansion of a variable $X$ (denoted $\Exp{G}{X}$), we consider a
traversal in the parse tree of $X$, as long as $S$ falls entirely
within the expansion of some variable. The last
variable $H$ in the traversal is called the \emph{hook} of $S$, and
the position of $S$ within $\Exp{G}{H}$ is the \emph{offset} of $S$;
see \cref{def:hook,def:offset}, and \cref{fig:1d-hook-and-offset}.
Our data structure (see
\cref{sec:1d-log-structure}) consists of $\log n$ levels. At level $p$,
we store, for every variable $X \in V$, the bookmarks for two
non-overlapping blocks of size $2^{p}$ touching the left boundary of the
expansion of $X$. We store similar information for the right
boundary. At query time (see \cref{sec:1d-log-queries} and
\cref{fig:1d-log-access}), we start at level $p = \lceil \log n \rceil$
and maintain a variable $c$ (see \cref{fig:1d-log-access}) to track the
direction (left or right) with respect to which we measure the
\emph{current distance} $\delta$ to the expansion boundary. We then
continue mapping the position according to bookmarks, traversing all
levels $p, p-1, p-2, \ldots, 0$. The positioning of bookmarks
guarantees that:
\begin{enumerate}
\item The distance $\delta$ with respect to the current direction $c$
  is always halved (see \cref{lm:left-map,lm:right-map}).
\item The distance in the direction \emph{opposite} to the current one
  (indicated by variable $c$) never increases. This holds
  because the bookmark of a substring of the expansion of variable $X$
  always maps into a variable that occurs in the parse of $X$. We
  remark that this property is not needed in the 1D case but turns
  out to be crucial for random access to 2D grammars.
\end{enumerate}
In total, we obtain a structure with $\bigO(g \log n)$ space and
$\bigO(\log n)$ query time (\cref{pr:1d-log-query}).
By generalizing the degree of the above structure from $2$ to $\tau$
(i.e., at level $p \in [0 \dd \lceil \log_{\tau} n \rceil]$, we store
$2g \tau$ bookmarks for substrings of length $\tau^p$), we can reduce
the query time to $\bigO(\log_{\tau} n)$ at the cost of increasing
space to $\bigO(g \cdot \tau \cdot \log_{\tau} n)$; see
\cref{sec:1d-opt}. With $\tau = \log^{\epsilon} n$, we can thus
replicate the optimal-time $\bigO(\tfrac{\log n}{\log \log n})$ random
access to 1D grammars (\cref{th:1d-opt}). The main
purpose of describing this structure is to introduce our new concepts
before presenting the complete structure in 2D. Although the random
access result for 1D grammars is known, the structure we present is
new and has certain novel properties: if, during the mapping, we skip
one of the bookmarks, we do not lose progress; i.e., $\delta$ remains
the distance to the expansion boundary in the direction given by $c$.

Let us now consider random access to the 2D grammar $G$ encoding the
string $M \in \Sigma^{r \times c}$ and let $n = \max(r,c)$. To
generalize the above structure to 2D grammars, we first extend
the concepts of hook and offset to 2D substrings (see
\cref{def:2d-hook,def:2d-offset}, and \cref{fig:2d-hook-and-offset}). This requires
separately considering horizontal and vertical variables. In the data
structure, for every level $(p_r,p_c) \in [0 \dd \lceil \log r \rceil]
\times [0 \dd \lceil \log c \rceil]$ and for every variable $X$, we
store bookmarks for four non-overlapping 2D substrings of size
$2^{p_r} \times 2^{p_c}$ touching each of the four corners of the
expansion of $X$ (for a total of 16 bookmarks per variable at each
level); see \cref{sec:2d-log-structure}. In total, this requires
$\bigO(|G| \cdot \log^2 n)$ space. At query time (see
\cref{sec:2d-log-queries} and
\cref{fig:2d-top-left-map,fig:2d-log-access}),
we continue mapping the initial point by always decreasing one of the
levels $p_r$ or $p_c$. After $\bigO(\log n)$ time, we reach a variable
expanding to a single symbol.
This algorithm is correct
due to our stronger notion of bookmarks. Specifically, in the query algorithm
(\cref{fig:2d-log-access}), when using a bookmark that reduces the
distance $\delta_r$ (which happens when $H$ is a horizontal variable),
the value $\delta_c$ is guaranteed not to increase. An analogous
property holds when we reduce $\delta_c$. The above data structure can
be further generalized, as before, to achieve
$\bigO(\tfrac{\log n}{\log \log n})$ query time in
$\bigO(|G| \cdot \log^{2+\epsilon} n)$ space (see \cref{sec:2d-opt} and
\cref{th:2d-opt}).

\subsection{New Hardness Results}

\paragraph{Hardness of Pattern Matching on 2D SLPs}

Our first hardness result concerns the problem of matching a one-dimensional
pattern within a two-dimensional string represented by a 2D SLP
(\cref{sec:2d-pattern-matching-problem-def}).
To establish
this conditional lower bound, we reduce from the
\emph{Orthogonal Vectors (OV)} problem (\cref{sec:ov}).

Given an OV instance with $n$ Boolean vectors in $d$ dimensions, we construct,
for each coordinate $j \in [1 \dd d]$, a vertical column
$C_j \in \BinaryAlphabet^{n \times 1}$ whose $i$th entry indicates whether the
$i$th vector has a $1$ in position $j$. Each vector $a_i$ having ones at
coordinates $b_1, \dots, b_\ell$ is then represented by the vertical
concatenation
\[
  \Gamma_i = C_{b_1} \vconcat C_{b_2} \vconcat \dots \vconcat C_{b_\ell}.
\]
The $j$th row of $\Gamma_i$ encodes whether $a_i$ and $a_j$ share a common
$1$, that is, $\Gamma_i[j,t] = 1$ for some $t$ if and only if
$a_i \cdot a_j \neq 0$. Hence, $a_i$ and $a_j$ are orthogonal precisely
when the $j$th row of $\Gamma_i$ consists entirely of zeros.

To make this construction consistent, we require all $\Gamma_i$ to have the
same width. For this reason, we first preprocess the OV instance so that every
vector has the same number of ones, say $\ell$, using the transformation in
\cref{pr:uniform-ov}. This step preserves orthogonality while ensuring that all
$\Gamma_i$ are $n \times \ell$ matrices, allowing us to define a single fixed
pattern length.

By concatenating all $\Gamma_i$ and separating them with delimiter
columns, we obtain a 2D text in which a row segment of the form
\[
  P = \one \, \zero^{\ell} \, \one
\]
occurs if and only if there exists a pair of orthogonal vectors.

The resulting 2D text can be generated by a 2D SLP $G$ of size
$\bigO(n \cdot d)$, which can be efficiently constructed
(\cref{lm:ov-reduction}). Therefore, any algorithm solving
2D pattern matching on grammar-compressed texts in time
$\bigO(|G|^{2-\epsilon} \cdot |P|^{\bigO(1)})$ for some $\epsilon > 0$ would
violate the Orthogonal Vectors Conjecture, yielding the claimed conditional
lower bound.

\paragraph{Hardness of LCE and Related Queries on 2D SLPs}

Our second family of hardness results shows that many natural queries about
subrectangles of a grammar-compressed 2D string become hard once the text
is represented by a 2D SLP. The core idea is a \emph{marking}
reduction that turns 1D rank/symbol occurrence queries into simple
2D numeric or Boolean queries.

Concretely, for a 1D string $T$ over an integer alphabet
$\Sigma = [0 \dd \sigma)$, we build the \emph{alphabet marking matrix}
$\MarkAllChars{T}{\sigma}$ (\cref{def:mark-all-chars}): each
alphabet symbol $c$ becomes a single row whose entries are $1$ exactly
at positions where $c$ occurs in $T$. In \cref{lm:reduce-rank-to-line-sum},
we show a tight connection
$\Rank{T}{j}{c} = \LineSumQuery{\MarkAllChars{T}{\sigma}}{c+1}{j}{j}$, i.e.,
rank queries reduce to 2D \emph{line sum} queries. Crucially,
\cref{lm:mark-all-chars-grammar} shows that if $T$ is generated by a
1D SLP of size $g$, then $\MarkAllChars{T}{\sigma}$ can be produced
by a 2D SLG of size $\bigO(g + \sigma)$ (and hence by a 2D SLP of size
$\bigO(g + \sigma)$ after standard conversion). Combining these facts
(\cref{pr:reduce-rank-to-line-sum}) yields the conditional lower
bound for line sum (and hence sum) queries on binary 2D SLPs stated in
\cref{th:reduce-from-rank}.

A similar but slightly different marking, the \emph{extended alphabet
marking matrix} $\ExtMarkAllChars{T}{\sigma}$
(\cref{def:ext-mark-all-chars}), pads each symbol row with zero blocks so
that symbol occurrence queries on $T$ correspond exactly to
\emph{square all-zero} queries on the 2D matrix
(\cref{lm:reduce-symbol-occ-to-square-all-zero}).
\cref{lm:ext-mark-all-chars-grammar} again guarantees an efficient 2D SLP
construction, and \cref{pr:reduce-symbol-occ-to-square-all-zero}
(plus the alphabet reduction in
\cref{pr:slp-rank-and-symbol-occ-alphabet-reduction}) gives the
hardness result presented in \cref{th:reduce-from-symbol-occ} for
square-all-zero queries and, by simple reductions, for most of the other
queries in \cref{def:2d-integer-problems,def:2d-general-problems}
(e.g., square LCE, line LCE, equality).
The inter-reductions among query types are compactly handled by
\cref{pr:reduce-square-lce-to-line-lce,%
  pr:reduce-line-lce-to-equality,%
  pr:reduce-square-all-zero-to-square-lce}.

In summary, marking transforms 1D rank/symbol occurrence problems into
very local 2D sum or all-zero checks. The marking matrices are producible
by small 2D grammars. Therefore, any data structure that uses
$\bigO(|G| \cdot \log^{\bigO(1)} n)$ space and answers these 2D queries in
$\bigO(\log^{\bigO(1)} n)$ time would yield equally efficient rank or symbol occurrence
structures for 1D SLPs.

\section{Revisiting Random Access using SLPs in Logarithmic Time}\label{sec:1d-log}

Let
$G = (V, \Sigma, R, S)$
be an SLP (see \cref{sec:prelim-grammars})
fixed for the duration of this section.
Denote
$n = |\Exp{G}{S}|$
and
$V = \{N_{1}, \ldots, N_{|V|}\}$.
We assume that $S = N_{1}$.
The aim of this section is to develop a certain data structure using
$\bigO(|G| \log n)$ space
that implements random access queries to $\Exp{G}{S}$
in $\bigO(\log n)$ time.
As pointed out earlier, we remark that such (or even more space efficient)
data structures are already known~\cite{Rytter03,Charikar05,Jez16,balancing,blocktree,attractors},
and the only purpose of this section is to introduce concepts which will be
generalized into two-dimensional strings in \cref{sec:2d-log}. We remark,
however, that this specific formulation has not been used before.

\subsection{Preliminaries}\label{sec:1d-log-prelim}

\begin{definition}[Hook]\label{def:hook}
  Let $N \in V$. Denote $m = |\Exp{G}{N}|$ and let $b, e \in [0 \dd m]$
  be such that $b < e$. If $m = 1$, then we define $\Hook{G}{N}{b}{e} = N$.
  Otherwise, i.e., if $m \geq 2$, letting $A,B \in V$ be such that
  $\Rhs{G}{N} = AB$, and $\ell = |\Exp{G}{A}|$, we define
  \[
    \Hook{G}{N}{b}{e} =
      \begin{cases}
        N & \text{if }b < \ell < e,\\
        \Hook{G}{A}{b}{e} & \text{if }e \leq \ell,\\
        \Hook{G}{B}{b-\ell}{e-\ell} & \text{if }\ell \leq b.
      \end{cases}
  \]
\end{definition}

\begin{definition}[Offset]\label{def:offset}
  Let $N \in V$. Denote $m = |\Exp{G}{N}|$.
  Let $b, e \in [0 \dd m]$ be such that $b < e$.
  If $m = 1$, we define $\Offset{G}{N}{b}{e} = 0$.
  Otherwise, i.e., if $m \geq 2$, letting $A,B \in V$ be such that
  $\Rhs{G}{N} = AB$, and $\ell = |\Exp{G}{A}|$, we define
  \[
    \Offset{G}{N}{b}{e} =
      \begin{cases}
        b & \text{if }b < \ell < e,\\
        \Offset{G}{A}{b}{e} & \text{if }e \leq \ell,\\
        \Offset{G}{B}{b-\ell}{e-\ell} & \text{if }\ell \leq b.
      \end{cases}
  \]
\end{definition}

\begin{figure}[t!]
  \centering
  \begin{tikzpicture}[yscale=0.3, xscale=0.5]
    \node[] at (-3,1) {$\Exp{G}{N}:$};
    \draw (0,0) rectangle (20,2);
    \draw[dotted] (6,0) -- (6,2.5);
    \draw[dotted] (9,0) -- (9,3.5);
    \fill[pattern = north west lines, pattern color = gray] (6,0) rectangle (9,2);
    \draw[stealth-stealth] (0,2.5) -- (6,2.5) node[midway,yshift=0.5em]{$b$};
    \draw[stealth-stealth] (0,3.5) -- (9,3.5) node[midway,yshift=0.5em]{$e$};
    \draw[-stealth] (10,0) -- (7,-2);
    \draw[-stealth] (10,0) -- (17,-2);

    \draw (0,-2) rectangle (20,-4);
    \draw[dotted] (6,-2) -- (6,-4);
    \draw[dotted] (9,-2) -- (9,-4);
    \fill[pattern = north west lines, pattern color = gray] (6,-4) rectangle (9,-2);
    \draw[solid] (14,-2) -- (14,-4);
		
    \draw[-stealth] (7,-4) -- (2,-6);
    \draw[-stealth] (7,-4) -- (9,-6);
		
    \draw (0,-6) rectangle (14,-8);
    \draw[dotted] (6,-6) -- (6,-8);
    \draw[dotted] (9,-6) -- (9,-8);
    \fill[pattern = north west lines, pattern color = gray] (6,-8) rectangle (9,-6);
    \draw[solid] (4,-6) -- (4,-8);

    \draw[-stealth] (9,-8) -- (6,-10);
    \draw[-stealth] (9,-8) -- (11,-10);

    \node[] at (-3,-11) {$\Exp{G}{H}:$};
    \draw (4,-10) rectangle (14,-12);
    \draw[dotted] (6,-10) -- (6,-12);
    \draw[dotted] (9,-10) -- (9,-12);
    \fill[pattern = north west lines, pattern color = gray] (6,-12) rectangle (9,-10);
    \draw[solid] (8,-10) -- (8,-12);
    \draw[stealth-stealth] (4,-12.5) -- (6,-12.5) node[midway,yshift=-0.5em]{$\alpha$};
	\end{tikzpicture}
  \caption{Illustration of \cref{def:hook,def:offset}.
    Here, $N \in V$, $\Hook{G}{N}{b}{e} = H$, and
    $\Offset{G}{N}{b}{e} = \alpha$.}\label{fig:1d-hook-and-offset}
\end{figure}

\begin{lemma}\label{lm:hook}
  Let $N \in V$, $w = \Exp{G}{N}$, $m = |w|$, and $b, e \in [0 \dd m]$
  be such that $b < e$.
  Denote $H = \Hook{G}{N}{b}{e}$ (\cref{def:hook}),
  $\alpha = \Offset{G}{N}{b}{e}$ (\cref{def:offset}), and
  $\beta = \alpha + (e - b)$.
  Then, it holds
  $0 \leq \alpha < \beta \leq |\Exp{G}{H}|$ and
  $w(b \dd e] = \Exp{G}{H}(\alpha \dd \beta]$.
  Moreover,
  \begin{enumerate}
  \item\label{lm:hook-it-1}
    If $e - b = 1$, then $|\Exp{G}{H}| = 1$.
  \item\label{lm:hook-it-2}
    Otherwise, it holds $|\Exp{G}{H}| \geq 2$.
    Furthermore, letting $X, Y \in V$ be such that
    $\Rhs{G}{H} = XY$ and $\ell = |\Exp{G}{X}|$,
    it holds $\alpha < \ell < \beta$.
  \end{enumerate}
\end{lemma}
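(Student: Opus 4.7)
The plan is to proceed by induction on $m = |\Exp{G}{N}|$, proving all four conclusions (the bounds on $\alpha, \beta$, the substring equality, and the dichotomy between items~\ref{lm:hook-it-1} and~\ref{lm:hook-it-2}) simultaneously. The induction is well-founded because, whenever $m \geq 2$, the recursive calls in \cref{def:hook,def:offset} descend to $A$ with $|\Exp{G}{A}| = \ell$ or to $B$ with $|\Exp{G}{B}| = m - \ell$, and since $A, B \in V$ both $\ell$ and $m - \ell$ lie in $[1 \dd m - 1]$.

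For the base case $m = 1$, the constraint $0 \leq b < e \leq 1$ forces $b = 0$ and $e = 1$. By \cref{def:hook,def:offset} we have $H = N$ and $\alpha = 0$, hence $\beta = 1 = |\Exp{G}{H}|$, so the required bounds hold, the equality $w(0 \dd 1] = \Exp{G}{H}(0 \dd 1]$ is immediate, and since $e - b = 1$ item~\ref{lm:hook-it-1} applies.

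For the inductive step $m \geq 2$, write $\Rhs{G}{N} = AB$ with $\ell = |\Exp{G}{A}|$ and split on the three branches of \cref{def:hook,def:offset}. If $b < \ell < e$, then $H = N$ and $\alpha = b$, so $\beta = e$; the bounds and the equality $w(b \dd e] = \Exp{G}{H}(b \dd e]$ are immediate, $e - b \geq 2$ holds, and taking $X = A$, $Y = B$ gives $\alpha = b < \ell < e = \beta$, verifying item~\ref{lm:hook-it-2}. If $e \leq \ell$, then the definitions reduce to $\Hook{G}{A}{b}{e}$ and $\Offset{G}{A}{b}{e}$, whose arguments satisfy $0 \leq b < e \leq |\Exp{G}{A}|$; the inductive hypothesis applied to $A$ yields the bounds, the dichotomy, and the equality $\Exp{G}{A}(b \dd e] = \Exp{G}{H}(\alpha \dd \beta]$, which lifts to $w(b \dd e] = \Exp{G}{H}(\alpha \dd \beta]$ since $w = \Exp{G}{A} \cdot \Exp{G}{B}$ and $e \leq \ell$. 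The case $\ell \leq b$ is symmetric after shifting by $\ell$: invoke the inductive hypothesis on $B$ with arguments $b - \ell$ and $e - \ell$, which satisfy $0 \leq b - \ell < e - \ell \leq m - \ell = |\Exp{G}{B}|$, and use $w(b \dd e] = \Exp{G}{B}(b - \ell \dd e - \ell]$.

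The argument is a structured case analysis, so there is no deep obstacle; the only points requiring care are (i) verifying in each recursive branch that the sub-parameters fed to $A$ or $B$ still satisfy the preconditions $0 \leq b' < e' \leq |\Exp{G}{\cdot}|$ of the inductive hypothesis, and (ii) noting that the dichotomy between items~\ref{lm:hook-it-1} and~\ref{lm:hook-it-2} is inherited exactly from the recursive call in the left/right branches (since $e - b$ is preserved), while in the middle branch item~\ref{lm:hook-it-2} must be checked directly from the inequality $b < \ell < e$.
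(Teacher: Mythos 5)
Your proof is correct, and it is essentially the paper's argument in a different packaging: the paper unrolls the recursion of \cref{def:hook,def:offset} into an explicit descent sequence of tuples and proves exactly the per-step invariants you use (preservation of $e-b$ and of the designated substring $w(b \dd e]$), then analyzes the terminal tuple, whereas you obtain the same facts by induction on $|\Exp{G}{N}|$ with the three-branch case split. The substantive content — the middle-branch analysis giving $\alpha < \ell < \beta$, the base case forcing $|\Exp{G}{H}| = 1$ when $e-b=1$, and the lifting of the substring identity through the $e \leq \ell$ and $\ell \leq b$ branches — coincides with the paper's, so no comparison beyond this is needed.
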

\begin{proof}

  By \cref{def:hook,def:offset},
  there exists a unique sequence of tuples
  $(N_i, b_{i}, e_{i})_{i \in [0 \dd q]}$,
  where $q \geq 0$, such that:
  \begin{itemize}
  \item for every $i \in [0 \dd q]$,
    $0 \leq b_{i} < e_{i} \leq |\Exp{G}{N_i}|$,
  \item $(N_{0}, b_{0}, e_{0}) = (N, b, e)$,
  \item for every $i \in [0 \dd q)$,
    \begin{itemize}
    \item $|\Exp{G}{N_i}| > 1$
      and, letting $X_i, Y_i \in V$ be such that
      $\Rhs{G}{N_i} = X_i Y_i$, $N_{i+1} \in \{X_i, Y_i\}$,
    \item $\Hook{G}{N_i}{b_{i}}{e_{i}} = \Hook{G}{N_{i+1}}{b_{i+1}}{e_{i+1}}$,
    \item $\Offset{G}{N_i}{b_{i}}{e_{i}} = \Offset{G}{N_{i+1}}{b_{i+1}}{e_{i+1}}$,
    \end{itemize}
  \item $\Hook{G}{N_{q}}{b_{q}}{e_{q}} = N_{q}$ and
    $\Offset{G}{N_{q}}{b_{q}}{e_{q}} = b_{q}$.
  \end{itemize}

  To prove the main claim, we first prove auxiliary properties of the above
  sequence. More specifically, we will first show that for every $i \in [0 \dd q)$,
  it holds:
  \begin{itemize}
  \item $e_{i+1} - b_{i+1} = e_{i} - b_{i}$,
  \item $\Exp{G}{N_{i+1}}(b_{i+1} \dd e_{i+1}] =
    \Exp{G}{N_i}(b_{i} \dd e_{i}]$.
  \end{itemize}
  Let $i \in [0 \dd q)$. As noted above, it holds
  $|\Exp{G}{N_i}| > 1$.
  Let $X_i, Y_i \in V$ be such that $\Rhs{G}{N_i} = X_i Y_i$.
  Denote $\ell = |\Exp{G}{X_i}|$.
  Above, we observe that $N_{i+1} \in \{X_i, Y_i\}$ and
  $\Hook{G}{N_i}{b_i}{e_i} = \Hook{G}{N_{i+1}}{b_{i+1}}{e_{i+1}}$.
  By \cref{def:hook}, this implies that we must either have
  $e_{i} \leq \ell$ or $\ell \leq b_{i}$.
  We consider two cases:
  \begin{itemize}
  \item First, assume that $e_{i} \leq \ell$. By \cref{def:hook}, we
    have $(N_{i+1}, b_{i+1}, e_{i+1}) =
    (X_i, b_{i}, e_{i})$.
    We thus immediately obtain
    $e_{i+1} - b_{i+1} = e_{i} - b_{i}$.
    On the other hand,
    $e_{i} \leq \ell$ implies that it holds $\Exp{G}{N_i}(b_{i} \dd e_{i}]
    = \Exp{G}{X_i}(b_{i} \dd e_{i}]$. Putting everything
    together, we thus obtain
    \begin{align*}
      \Exp{G}{N_{i+1}}(b_{i+1} \dd e_{i+1}]
        = \Exp{G}{X_i}(b_{i} \dd e_{i}]
        = \Exp{G}{N_i}(b_{i} \dd e_{i}].
    \end{align*}
  \item Assume now that $\ell \leq b_{i}$.
    By \cref{def:hook}, we
    have $(N_{i+1}, b_{i+1}, e_{i+1}) =
    (Y_i, b_{i} - \ell, e_{i} - \ell)$.
    We thus again obtain that
    $e_{i+1} - b_{i+1} = (e_{i} - \ell) - (b_{i} - \ell) = e_{i} - b_{i}$.
    On the other hand,
    $\ell \leq b_{i}$ implies that it holds $\Exp{G}{N_i}(b_{i} \dd e_{i}]
    = \Exp{G}{Y_i}(b_{i} - \ell \dd e_{i} - \ell]$. Putting everything
    together, we thus obtain
    \begin{align*}
      \Exp{G}{N_{i+1}}(b_{i+1} \dd e_{i+1}]
       = \Exp{G}{Y_i}(b_{i} - \ell \dd e_{i} - \ell]
       = \Exp{G}{N_i}(b_{i} \dd e_{i}].
    \end{align*}
  \end{itemize}

  Using the properties above, we are now ready to prove the main claims.
  First, observe that, from the properties of the sequence
  $(N_i, b_{i}, e_{i})_{i \in [0 \dd q]}$,
  we immediately obtain that $H = N_{q}$, $\alpha = b_{q}$, and
  $\beta
    = \alpha + (e - b)
    = b_{q} + (e_{q} - b_{q})
    = e_{q}$.
  This immediately implies the first claim, i.e.,
  $0 \leq \alpha < \beta \leq |\Exp{G}{H}|$,
  since by the above we have
  $0 \leq b_{q} < e_{q} \leq |\Exp{G}{N_{q}}|$.
  To obtain the second
  claim note that, by the above, it holds
  \begin{align*}
    w(b \dd e]
      &= \Exp{G}{N}(b \dd e]\\
      &= \Exp{G}{N_{0}}(b_{0} \dd e_{0}]\\
      &= \Exp{G}{N_{1}}(b_{1} \dd e_{1}]\\
      &= \dots\\
      &= \Exp{G}{N_{q}}(b_{q} \dd e_{q}]\\
      &= \Exp{G}{H}(\alpha \dd \beta].\\
  \end{align*}

  We now prove the second part of the main claim, considering each of the items separately:
  \begin{enumerate}

  \item First, assume that $e - b = 1$. Suppose the claim does not
    hold, i.e., $|\Exp{G}{H}| > 1$. By the above, this is equivalent
    to $|\Exp{G}{N_{q}}| > 1$. Then, there exist $X,Y \in V$ such that
    $\Rhs{G}{N_{q}} = XY$. Denote $\ell = |\Exp{G}{X}|$. Observe that
    we cannot have $b_{q} < \ell < e_{q}$, since that implies $e_{q} - b_{q} >
    1$, which contradicts $e - b = 1$ (recall that $e_{q} - b_{q} = e -
    b$). Thus, we must have either $\ell \leq b_{q}$ or $e_{q} \leq
    \ell$. In either case, by \cref{def:hook}, it holds
    $\Hook{G}{N_{q}}{b_{q}}{e_{q}} \neq N_{q}$, a contradiction. We thus must
    have $|\Exp{G}{N_{q}}| = 1$, or equivalently, $|\Exp{G}{H}| = 1$.

  \item Let us now assume that $e - b > 1$.
    Recall that above we observed that
    $H = N_{q}$,
    $e_{q} - b_{q} = e - b$, and
    $0 \leq b_{q} < e_{q} \leq |\Exp{G}{N_{q}}|$.
    This immediately implies that
    $|\Exp{G}{H}| = |\Exp{G}{N_{q}}| \geq e_{q} - b_{q} = e - b > 1$, i.e.,
    the first part of the claim.
    Let then $X, Y \in V$ be as in the claim,
    i.e., such that $\Rhs{G}{H} = XY$.
    Let $\ell = |\Exp{G}{X}|$.
    Recall that we have $\Hook{G}{N_{q}}{b_{q}}{e_{q}} = N_{q}$.
    By \cref{def:hook}, this implies that $b_{q} < \ell < e_{q}$. By $\alpha =
    b_{q}$ and $\beta = e_{q}$ (observed above), we thus obtain the second
    part of the claim, i.e., $\alpha < \ell < \beta$.
    \qedhere
  \end{enumerate}
\end{proof}

\subsection{The Data Structure}\label{sec:1d-log-structure}

\paragraph{Definitions}

Let $H^{\rm left}$ and $O^{\rm left}$ denote 3-dimensional arrays defined
so that for every
$i \in [1 \dd |V|]$,
$\LevelId \in [0 \dd \ceil{\log n}]$, and
$k \in \{0,1\}$
satisfying
$k \cdot 2^{\LevelId} < |\Exp{G}{N_i}|$,
it holds
\begin{align*}
  N_{H^{\rm left}[i,\LevelId,k]} &= \Hook{G}{N_i}{b}{e},\\
  O^{\rm left}[i,\LevelId,k] &= \Offset{G}{N_i}{b}{e},
\end{align*}
where
\begin{itemize}
\item $m = |\Exp{G}{N_i}|$,
\item $b = k \cdot 2^{\LevelId}$, and
\item $e = \min(m, (k+1) \cdot 2^{\LevelId})$.
\end{itemize}
Similarly, for every $i, \LevelId, k$ as above, we define
$H^{\rm right}$ and $O^{\rm right}$ such that it holds
\begin{align*}
  N_{H^{\rm right}[i,\LevelId,k]} &= \Hook{G}{N_i}{m-e}{m-b},\\
  O^{\rm right}[i,\LevelId,k] &= \Offset{G}{N_i}{m-e}{m-b},
\end{align*}
where $m, b, e$ are as above.

By $A_{\rm expsize}[1 \dd |V|]$ we denote an array defined by
$A_{\rm expsize}[i] = |\Exp{G}{N_i}|$.
By $A_{\rm rhs}[1 \dd |V|]$ we denote an array defined such that,
for $i \in [1 \dd |V|]$,
\begin{itemize}
\item If $|\Rhs{G}{N_i}| = 1$, then $A_{\rm rhs}[i] = \Rhs{G}{N_i}$,
\item Otherwise, $A_{\rm rhs}[i] = (i',i'')$, where $i', i'' \in [1 \dd |V|]$
  are such that $\Rhs{G}{N_i} = N_{i'} N_{i''}$.
\end{itemize}

\paragraph{Components}

The data structure consists of the following components:
\begin{enumerate}
\item The array $A_{\rm rhs}$ using $\bigO(|V|)$ space.
\item The array $A_{\rm expsize}$ using $\bigO(|V|)$ space.
\item The arrays
  $H^{\rm left}$, $O^{\rm left}$,
  $H^{\rm right}$, $O^{\rm right}$
  using
  $\bigO(|V| \log n)$
  space.
\end{enumerate}
In total, the data structure needs
$\bigO(|V| \log n) = \bigO(|G| \log n)$
space.

\subsection{Implementation of Queries}\label{sec:1d-log-queries}

\begin{definition}[Access function]\label{def:access}
  Let $N \in V$ and $m = |\Exp{G}{N}|$.
  For every
  $\delta \in [1 \dd m]$ and $c \in \{\DirLeft, \DirRight\}$,
  we define
  \vspace{1ex}
  \[
    \Access{G}{N}{\delta}{c} =
      \begin{cases}
        \Exp{G}{N}[\delta]         & \text{if }c = \DirLeft,\\
        \Exp{G}{N}[m - \delta + 1] & \text{otherwise}.
      \end{cases}
  \]
  \vspace{1ex}
\end{definition}

\begin{definition}[Left boundary mapping]\label{def:left-map}
  Let $N \in V$.
  Denote $m = |\Exp{G}{N}|$.
  Consider any $\LevelId \in \Zn$ and $\delta \in [1 \dd m]$
  satisfying
  $\delta \leq 2^{\LevelId+1}$.
  Denote
  $k = \lceil \tfrac{\delta}{2^{\LevelId}} \rceil - 1 \in \{0,1\}$,
  $b = k \cdot 2^{\LevelId}$,
  $e = \min(m, (k + 1) \cdot 2^{\LevelId})$, and
  $H = \Hook{G}{N}{b}{e}$.
  \begin{itemize}
  \item
    If $e - b = 1$,
    then we define
    \[
      \LeftMap{G}{N}{\LevelId}{\delta} = (H, 1, \DirLeft).
    \]
  \item
    Otherwise, letting
    $\alpha = \Offset{G}{N}{b}{e}$,
    $X, Y \in V$ be such that $\Rhs{G}{H} = XY$,
    and $\ell = |\Exp{G}{X}|$, we define
    \vspace{1ex}
    \[
      \LeftMap{G}{N}{\LevelId}{\delta} =
        \begin{cases}
          (X, (\ell - \alpha) - (\delta - b) + 1, \DirRight)   & \text{if }\delta - b \leq \ell - \alpha,\\
          (Y, (\delta - b) - (\ell - \alpha), \DirLeft)      & \text{otherwise}.\\
        \end{cases}
    \]
    \vspace{1ex}
  \end{itemize}
\end{definition}

\begin{definition}[Right boundary mapping]\label{def:right-map}
  Let $N \in V$ and $m = |\Exp{G}{N}|$.
  Consider any $\LevelId \in \Zn$ and any $\delta \in [1 \dd m]$ satisfying $\delta \leq 2^{\LevelId+1}$.
  Denote $k = \lceil \tfrac{\delta}{2^{\LevelId}} \rceil - 1 \in \{0,1\}$,
  $b = k \cdot 2^{\LevelId}$,
  $e = \min(m, (k+1) \cdot 2^{\LevelId})$, and
  $H = \Hook{G}{N}{m - e}{m - b}$.
  \begin{itemize}
  \item
    If $e-b = 1$, then we define
    \[
      \RightMap{G}{N}{\LevelId}{\delta} = (H, 1, \DirLeft).
    \]
  \item
    Otherwise,
    letting $\beta = |\Exp{G}{H}| - (\Offset{G}{N}{m - e}{m - b} + (e - b))$,
    $X, Y \in V$ be such that $\Rhs{G}{H} = XY$, and $\ell = |\Exp{G}{Y}|$, we define
    \vspace{2ex}
    \[
      \RightMap{G}{N}{\LevelId}{\delta} =
        \begin{cases}
          (Y, (\ell-\beta)-(\delta-b)+1, \DirLeft)  & \text{if }\delta-b \leq \ell-\beta,\\
          (X, (\delta-b)-(\ell-\beta), \DirRight)   & \text{otherwise}.\\
        \end{cases}
    \]
    \vspace{1ex}
  \end{itemize}
\end{definition}

\begin{remark}\label{rm:map}
  Note that $X$ and $Y$ in \cref{def:left-map,def:right-map} are well-defined when
  $e - b > 1$,
  since in this case we have
  $|\Exp{G}{H}| \geq 2$ (\cref{lm:hook}).
\end{remark}

\begin{lemma}\label{lm:left-map}
  Let $N \in V$ and
  $m = |\Exp{G}{N}|$.
  Consider any
  $\LevelId \in \Zn$ and
  $\delta \in [1 \dd m]$
  satisfying
  $\delta \leq 2^{\LevelId+1}$.
  Then, the tuple
  $(N', \delta', c) =
  \LeftMap{G}{N}{\LevelId}{\delta}$
  (\cref{def:left-map})
  satisfies:
  \begin{enumerate}
  \item
    $\delta' \in [1 \dd |\Exp{G}{N'}|]$,
  \item
    $\delta' \leq 2^{\LevelId}$,
  \item
    $\Access{G}{N}{\delta}{\DirLeft} = \Access{G}{N'}{\delta'}{c}$ (\cref{def:access}).
  \end{enumerate}
  Moreover, if
  $\LevelId = 0$,
  then
  $|\Exp{G}{N'}| = 1$.
\end{lemma}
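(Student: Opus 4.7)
The plan is to proceed by case analysis matching the two branches of \cref{def:left-map}, with \cref{lm:hook} providing the bridge between access to $\Exp{G}{N}$ and access to the expansion of the hook $H$ (or its children). Writing $H = \Hook{G}{N}{b}{e}$, $\alpha = \Offset{G}{N}{b}{e}$, and $\beta = \alpha + (e-b)$, the central identity $\Exp{G}{N}[\delta] = \Exp{G}{H}[\alpha + (\delta - b)]$ supplied by \cref{lm:hook} is what ultimately drives item~3 in both branches.

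First I would record elementary facts about the chosen indices: the assumption $\delta \leq 2^{\LevelId + 1}$ forces $k \in \{0, 1\}$ and $\delta \in [b + 1 \dd e]$, while $e - b \leq 2^{\LevelId}$ holds whether $k$ is $0$ or $1$. In the degenerate branch $e - b = 1$, the first item of \cref{lm:hook} gives $|\Exp{G}{H}| = 1$ immediately, so the output $(H, 1, \DirLeft)$ trivially satisfies items~1 and~2 of the lemma, and item~3 reduces to the single-cell identity above.

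The substantive work is the branch $e - b > 1$. Here the second item of \cref{lm:hook} yields $\Rhs{G}{H} = XY$ with $\alpha < \ell < \beta$, where $\ell = |\Exp{G}{X}|$. The split condition $\delta - b \leq \ell - \alpha$ is designed precisely so that the target cell $\alpha + (\delta - b)$ lies in $\Exp{G}{X}$ in the first sub-case and in $\Exp{G}{Y}$ in the second. In the first sub-case I would verify that $\delta' = (\ell - \alpha) - (\delta - b) + 1$ is the right-indexed image of $\alpha + (\delta - b)$ inside $\Exp{G}{X}$, giving item~3, while items~1 and~2 follow from the bounds $1 \leq \delta - b$ and $\ell - \alpha \leq (e - b) - 1 \leq 2^{\LevelId} - 1$. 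The second sub-case is symmetric, its extra ingredient being $\ell - \alpha \geq 1$ to lower-bound $\delta'$.

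For the moreover clause I would just observe that $\LevelId = 0$ forces $\delta \in \{1, 2\}$, and both values of $k$ then produce $e - b = 1$, so the degenerate branch applies and $|\Exp{G}{N'}| = 1$. The only step requiring any care is tracking the direction reversal when the target lands in $\Exp{G}{X}$: $\delta$ is measured from the left of $\Exp{G}{N}$, but the returned $\delta'$ must be measured from the right of $\Exp{G}{X}$, so one has to check that the reflected arithmetic lines up. Beyond this bookkeeping, the proof is mechanical and all the conceptual content sits in \cref{lm:hook}.
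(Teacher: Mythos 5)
Your proposal is correct and follows essentially the same route as the paper's proof: the same preliminary bounds $b < \delta \leq e$ and $e - b \leq 2^{\LevelId}$, the same two-branch case split from \cref{def:left-map}, and the same reliance on \cref{lm:hook} (the identity $\Exp{G}{N}(b \dd e] = \Exp{G}{H}(\alpha \dd \beta]$ together with $0 \leq \alpha$, $\alpha < \ell < \beta$, and $\beta \leq |\Exp{G}{H}|$) to establish all three items and the $\LevelId = 0$ clause. The only nitpick is that in the sub-case $\delta - b > \ell - \alpha$ the ingredient $\ell - \alpha \geq 1$ is what yields the upper bound $\delta' \leq 2^{\LevelId}$, not the lower bound $\delta' \geq 1$ (which is immediate from the case condition); this mislabeling does not affect correctness.
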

\begin{proof}

  Denote:
  \begin{itemize}
  \item
    $k = \lceil \tfrac{\delta}{2^{\LevelId}} \rceil - 1 \in \{0,1\}$,
    $b = k \cdot 2^{\LevelId}$,
    $e = \min(m, (k + 1) \cdot 2^{\LevelId})$,
  \item $H = \Hook{G}{N}{b}{e}$ (\cref{def:hook}),
  \item $\alpha = \Offset{G}{N}{b}{e}$ (\cref{def:offset}), and
  \item $\beta = \alpha + (e - b)$.
  \end{itemize}

  Observe that
  $b = k \cdot 2^{\LevelId} =
  (\lceil \tfrac{\delta}{2^{\LevelId}} \rceil - 1) \cdot 2^{\LevelId} =
  (\lfloor \tfrac{\delta + 2^{\LevelId} - 1}{2^{\LevelId}} \rfloor - 1) \cdot 2^{\LevelId} \leq
  \delta + 2^{\LevelId} - 1 - 2^{\LevelId} < \delta$.
  On the other hand, by
  $\delta \leq \lceil \tfrac{\delta}{2^{\LevelId}} \rceil 2^{\LevelId} = (k + 1) \cdot 2^{\LevelId}$
  and $\delta \leq m$,
  it follows that
  $\delta \leq \min(m, (k + 1) \cdot 2^{\LevelId}) = e$.
  Note also that
  $e - b \leq (k + 1) \cdot 2^{\LevelId} - k \cdot 2^{\LevelId} = 2^{\LevelId}$.
  We have thus established
  that:
  \begin{itemize}
  \item $b < \delta \leq e$ and
  \item $e - b \leq 2^{\LevelId}$.
  \end{itemize}

  We prove each of the three main claims separately:
  \begin{enumerate}

  \item First, we prove that $\delta' \in [1 \dd |\Exp{G}{N'}|]$.
    If $e - b = 1$, then by \cref{def:left-map}, it holds $\delta' = 1$,
    and the claim follows immediately.
    Let us thus assume that
    $e - b \geq 2$. Then,
    it holds $|\Exp{G}{H}| \geq 2$ (see also \cref{rm:map}).
    Let $X, Y \in V$ be such that $\Rhs{G}{H} = XY$.
    Denote $\ell = |\Exp{G}{X}|$.
    We consider two cases:
    \begin{itemize}

    \item If $\delta - b \leq \ell - \alpha$ then, by \cref{def:left-map}, it holds
      $N' = X$ and $\delta' = (\ell - \alpha) - (\delta - b) + 1$.
      By the assumption $\delta - b \leq \ell - \alpha$, we immediately obtain
      $\delta' = (\ell - \alpha) - (\delta - b) + 1 \geq 1$.
      On the other hand, $b < \delta$ and $\alpha \geq 0$ (see \cref{lm:hook}) imply
      $\delta' = (\ell - \alpha) - (\delta - b) + 1 \leq \ell = |\Exp{G}{X}| = |\Exp{G}{N'}|$.
      We have thus proved $\delta' \in [1 \dd |\Exp{G}{N'}|]$.

    \item Otherwise (i.e., if $\delta - b > \ell - \alpha$), by \cref{def:left-map}, it holds
      $N' = Y$ and $\delta' = (\delta - b) - (\ell - \alpha)$.
      The assumption $\delta - b > \ell - \alpha$ immediately yields
      $\delta' = (\delta - b) - (\ell - \alpha) \geq 1$.
      On the other hand, $\delta \leq e$ and
      $(e - b) - (\ell - \alpha) = ((e - b) + \alpha) - \ell = \beta - \ell \leq
      |\Exp{G}{H}| - \ell = |\Exp{G}{Y}|$
      (where $\beta \leq |\Exp{G}{H}|$ follows by \cref{lm:hook}) imply
      $\delta' = (\delta - b) - (\ell - \alpha) \leq (e - b) - (\ell - \alpha)
      \leq |\Exp{G}{Y}| = |\Exp{G}{N'}|$.
      We have thus proved $\delta' \in [1 \dd |\Exp{G}{N'}|]$.
    \end{itemize}

  \item Second, we prove that
    $\delta' \leq 2^{\LevelId}$.
    If $e - b = 1$, then by \cref{def:left-map}, it holds $\delta' = 1$,
    and the claim follows immediately.
    Let us thus assume that
    $e - b > 1$.
    Then, it holds
    $|\Exp{G}{H}| \geq 2$ (see also \cref{rm:map}).
    Let $X, Y \in V$ be such that $\Rhs{G}{H} = XY$.
    Denote $\ell = |\Exp{G}{X}|$.
    We consider two cases:
    \begin{itemize}

    \item If $\delta - b \leq \ell - \alpha$ then, by \cref{def:left-map}, we have
      $\delta' = (\ell - \alpha) - (\delta - b) + 1$.
      Note that $\delta > b$ implies that
      $\delta' = (\ell - \alpha) - (\delta - b) + 1 \leq \ell - \alpha$.
      On the other hand, by
      $\ell < \beta$ (\cref{lm:hook}) and the definition of $\beta$,
      we have $\ell - \alpha \leq \beta - \alpha = e - b$.
      Thus,
      $\delta' \leq \ell - \alpha \leq e - b \leq 2^{\LevelId}$.

    \item Otherwise (i.e., if $\delta - b > \ell - \alpha$), by \cref{def:left-map},
      we have $\delta' = (\delta - b) - (\ell - \alpha)$.
      To show $\delta' \leq 2^{\LevelId}$,
      it suffices to note that by $\ell > \alpha$ (\cref{lm:hook})
      and $\delta' \leq (e - b) - (\ell - \alpha)$
      (shown above in the analogous case),
      it follows that
      $\delta' \leq (e - b) - (\ell - \alpha) \leq e - b \leq 2^{\LevelId}$.
    \end{itemize}

  \item Finally, we prove that it holds
    $\Access{G}{N}{\delta}{\DirLeft} = \Access{G}{N'}{\delta'}{c}$.
    If $e - b = 1$, then by \cref{def:left-map},
    we have
    $(N', \delta', c) = (H, 1, \DirLeft)$.
    Observe that by $b < \delta \leq e$,
    we then must have
    $\delta = e$.
    On the other hand, by
    \cref{lm:hook}\eqref{lm:hook-it-1},
    we then have
    $\Exp{G}{N}[e] = \Rhs{G}{H}$.
    Thus,
    \begin{align*}
      \Access{G}{N}{\delta}{\DirLeft}
        &= \Access{G}{N}{e}{\DirLeft}
        = \Exp{G}{N}[e]
        = \Rhs{G}{H}\\
        &= \Rhs{G}{N'}[\delta']
        = \Access{G}{N'}{\delta'}{\DirLeft}\\
        &= \Access{G}{N'}{\delta'}{c}.
    \end{align*}
    Let us now assume that $e - b > 1$.
    Let $X,Y \in V$ be such that $\Rhs{G}{H} = XY$.
    Denote $\ell = |\Exp{G}{X}|$.
    We consider two cases:
    \begin{itemize}

    \item If $\delta - b \leq \ell - \alpha$ then, by \cref{def:left-map}, we have
      $(N', \delta', c) = (X, (\ell - \alpha) - (\delta - b) + 1,\DirRight)$.
      Combining $\delta \leq b + (\ell - \alpha)$ (the assumption) with
      $b < \delta$ (proved above),
      we obtain
      $b < \delta \leq b + (\ell - \alpha)$.
      On the other hand, it holds
      $w(b \dd e] = \Exp{G}{H}(\alpha \dd \beta]$
      and $\alpha < \ell < \beta$ (\cref{lm:hook})
      Thus, for every
      $t \in (b \dd b + (\ell - \alpha)]$,
      it holds
      $w[t]
        = \Exp{G}{H}(\alpha + (t - b)]
        = \Exp{G}{X}(\alpha + (t - b)]
        = \Exp{G}{X}[\ell - ((\ell - \alpha) - (t - b))]$.
      In particular,
      $w[\delta] =
      \Exp{G}{X}[\ell - \delta' + 1]$.
      Thus,
      \begin{align*}
        \Access{G}{N}{\delta}{\DirLeft}
          &= w[\delta]
           = \Exp{G}{X}[\ell - \delta' + 1]\\
          &= \Access{G}{X}{\delta'}{\DirRight}
           = \Access{G}{N'}{\delta'}{c}.
      \end{align*}

    \item Otherwise (i.e., if $\delta - b > \ell - \alpha$),
      by \cref{def:left-map}, we have
      $(N', \delta', c) = (Y, (\delta - b) - (\ell - \alpha), \DirLeft)$.
      Combining
      $\delta > b + (\ell - \alpha)$ (the assumption) with
      $\delta \leq e$ (proved above),
      we obtain
      $b + (\ell - \alpha) < \delta \leq e$.
      On the other hand, it holds
      $w(b \dd e] = \Exp{G}{H}(\alpha \dd \beta]$
      and $\alpha < \ell < \beta$ (\cref{lm:hook}).
      Thus, for every
      $t \in (b + (\ell - \alpha) \dd e]$,
      it holds
      $w[t]
        = \Exp{G}{H}[\alpha + (t - b)]
        = \Exp{G}{Y}[\alpha + (t - b) - \ell]
        = \Exp{G}{Y}[(t - b) - (\ell - \alpha)]$.
      In particular,
      $w[\delta] = \Exp{G}{Y}[\delta']$.
      Thus,
      \begin{align*}
        \Access{G}{N}{\delta}{\DirLeft}
          &= w[\delta]
           = \Exp{G}{Y}[\delta']\\
          &= \Access{G}{Y}{\delta'}{\DirLeft}
           = \Access{G}{N'}{\delta'}{c}.
      \end{align*}
    \end{itemize}
  \end{enumerate}

  We now show the remaining claim, i.e., that
  $\LevelId = 0$ implies $|\Exp{G}{N'}| = 1$.
  By the above, $b < e$ and $e - b \leq 2^{\LevelId}$.
  Thus, if $\LevelId = 0$, we must have $b = 0$ and $e = 1$.
  By \cref{def:left-map}, we then have $N' = \Hook{G}{N}{b}{e} = H$.
  It remains to observe that by \cref{lm:hook}\eqref{lm:hook-it-1},
  we then have $|\Exp{G}{H}| = 1$.
\end{proof}

\begin{lemma}\label{lm:right-map}
  Let $N \in V$ and
  $m = |\Exp{G}{N}|$.
  Consider any
  $\LevelId \in \Zn$ and
  $\delta \in [1 \dd m]$
  satisfying
  $\delta \leq 2^{\LevelId+1}$.
  Then, the tuple
  $(N',\delta',c) = \RightMap{G}{N}{\LevelId}{\delta}$ (\cref{def:right-map})
  satisfies:
  \begin{enumerate}
  \item
    $\delta' \in [1 \dd |\Exp{G}{N'}|]$,
  \item
    $\delta' \leq 2^{\LevelId}$,
  \item
    $\Access{G}{N}{\delta}{\DirRight} = \Access{G}{N'}{\delta'}{c}$.
  \end{enumerate}
  Moreover, if
  $\LevelId = 0$,
  then
  $|\Exp{G}{N'}| = 1$.
\end{lemma}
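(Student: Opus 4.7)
The plan is to mirror the proof of \cref{lm:left-map}, exploiting the symmetry between accessing $\Exp{G}{N}$ from the left at distance $\delta$ and from the right at distance $\delta$. I would begin by introducing the same auxiliary quantities used in \cref{def:right-map}: $k = \lceil \tfrac{\delta}{2^{\LevelId}} \rceil - 1 \in \{0,1\}$, $b = k \cdot 2^{\LevelId}$, $e = \min(m,(k+1)\cdot 2^{\LevelId})$, $H = \Hook{G}{N}{m-e}{m-b}$, $\alpha = \Offset{G}{N}{m-e}{m-b}$, and $\beta = |\Exp{G}{H}| - (\alpha + (e-b))$. The same arithmetic that appears at the start of \cref{lm:left-map} yields $b < \delta \le e$ and $e-b \le 2^{\LevelId}$; I would record these inequalities once and reuse them throughout.

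The key translation step is to express ``right access'' in terms of \cref{lm:hook}. Applying \cref{lm:hook} to the triple $(N, m-e, m-b)$ gives $w(m-e \dd m-b] = \Exp{G}{H}(\alpha \dd \alpha+(e-b)]$. Since $\Access{G}{N}{\delta}{\DirRight} = w[m-\delta+1]$ and $b < \delta \le e$, the accessed symbol equals $\Exp{G}{H}[\alpha + (e-\delta+1)]$. Rewriting from the right end of $\Exp{G}{H}$, its right-distance inside $\Exp{G}{H}$ is exactly $\beta + (\delta - b)$. This identity is the mirror of the corresponding displayed identity in the proof of \cref{lm:left-map} and is what makes the case split in \cref{def:right-map} correct.

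I would then case-split exactly as before. If $e - b = 1$, \cref{lm:hook}\eqref{lm:hook-it-1} forces $|\Exp{G}{H}| = 1$ and $\delta = e$, so $(N',\delta',c) = (H,1,\DirLeft)$ trivially satisfies all three claims, with $\Access{G}{N}{\delta}{\DirRight} = \Rhs{G}{H}$. If $e - b > 1$, \cref{lm:hook}\eqref{lm:hook-it-2} (see \cref{rm:map}) gives $|\Exp{G}{H}| \ge 2$ and, writing $\Rhs{G}{H} = XY$ with $\ell = |\Exp{G}{Y}|$, the inequality $\alpha < |\Exp{G}{H}| - \ell < \alpha + (e-b)$, which is equivalent to $0 < \ell - \beta < \ell$. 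This is exactly the regime under which the two subcases of \cref{def:right-map} make sense: when $\delta - b \le \ell - \beta$, the accessed position lands in $\Exp{G}{Y}$ at right-distance $\beta + (\delta-b)$, whose corresponding left-distance in $Y$ is $(\ell-\beta)-(\delta-b)+1$; when $\delta - b > \ell - \beta$, the position lands in $\Exp{G}{X}$ at right-distance $(\delta-b)-(\ell-\beta)$. In each subcase, the bounds $\delta' \in [1 \dd |\Exp{G}{N'}|]$ and $\delta' \le 2^{\LevelId}$ follow from $b < \delta \le e$, $e - b \le 2^{\LevelId}$, and $0 < \ell - \beta < \ell$ by elementary arithmetic, and the access equality is immediate from the right-distance identity above. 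Finally, $\LevelId = 0$ combined with $b < e$ and $e - b \le 2^{\LevelId}$ forces $e - b = 1$, which by the first case yields $N' = H$ with $|\Exp{G}{H}| = 1$.

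I expect the main obstacle to be purely notational: keeping the role of $\beta$ straight. In \cref{def:right-map}, $\beta$ denotes the right-distance of the block's right end inside $\Exp{G}{H}$, not the right analogue of $\alpha + (e-b)$ used in \cref{def:offset}. Once this convention is fixed and the identity ``right-distance in $\Exp{G}{H} = \beta + (\delta - b)$'' is established, every subsequent calculation becomes a literal mirror of the corresponding step in \cref{lm:left-map}, with $\DirLeft$ and $\DirRight$ interchanged in the two non-trivial subcases.
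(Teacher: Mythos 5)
Your overall plan is exactly the paper's: the paper proves \cref{lm:right-map} by declaring it analogous to \cref{lm:left-map}, and your mirrored argument (block $(m-e \dd m-b]$, hook $H$, the identity that the accessed symbol sits at right-distance $\beta+(\delta-b)$ inside $\Exp{G}{H}$, then a case split on whether it lands in $Y$ or $X$) is precisely that intended analogue; the access-equality computations and the $\LevelId=0$ argument are correct.

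One concrete slip, though: you claim that the hook inequality $\alpha < |\Exp{G}{H}| - \ell < \alpha + (e-b)$ is ``equivalent to $0 < \ell - \beta < \ell$.'' It is not. With $\beta = |\Exp{G}{H}| - (\alpha + (e-b))$, the correct translation is $0 < \ell - \beta < e - b$: the upper bound $|\Exp{G}{H}| - \ell < \alpha + (e-b)$ gives $\beta < \ell$, and the lower bound $\alpha < |\Exp{G}{H}| - \ell$ gives $\ell - \beta < e - b$. Your inequality $\ell - \beta < \ell$ asserts $\beta > 0$, which can fail (e.g.\ when $k=0$, $b=0$ and the block ends flush with the right end of $\Exp{G}{H}$, so $\alpha + (e-b) = |\Exp{G}{H}|$ and $\beta = 0$); only $\beta \geq 0$ is guaranteed, via $\alpha + (e-b) \leq |\Exp{G}{H}|$ from \cref{lm:hook}. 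Moreover, as used, your version is insufficient: in the subcase $\delta - b \leq \ell - \beta$ the bound $\delta' \leq 2^{\LevelId}$ requires $\delta' \leq \ell - \beta \leq e - b \leq 2^{\LevelId}$, i.e.\ exactly the missing inequality $\ell - \beta \leq e - b$, which $\ell - \beta < \ell$ does not provide (and $\delta' \leq |\Exp{G}{Y}|$ there needs $\beta \geq 0$, not $\beta > 0$). Since you do cite the correct hook inequality one line earlier, the fix is purely local: replace the translated inequality by $0 < \ell - \beta < e-b$ together with $\beta \geq 0$, and the elementary-arithmetic conclusions then go through as in \cref{lm:left-map}.
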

\begin{proof}
  the proof is analogous to the proof of \cref{lm:left-map}.
\end{proof}

\begin{proposition}\label{pr:1d-log-query}
  Given the data structure from
  \cref{sec:1d-log-structure}
  and any
  $i \in [1 \dd n]$,
  we can compute
  $\Exp{G}{S}[i]$ in
  $\bigO(\log n)$
  time.
\end{proposition}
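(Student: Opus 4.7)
The plan is to answer the query by starting at the root $S$ and iteratively descending through the grammar, one bookmark level at a time, applying the left/right-map operation at each step. I would maintain a tuple $(N, \delta, c) \in V \times \Zp \times \{\DirLeft, \DirRight\}$ together with a level $\LevelId$, subject to the invariants that $\delta \in [1 \dd |\Exp{G}{N}|]$, that $\delta \leq 2^{\LevelId+1}$, and that $\Exp{G}{S}[i] = \Access{G}{N}{\delta}{c}$. I would initialize $(N, \delta, c) := (S, i, \DirLeft)$ and $\LevelId := \lceil \log n \rceil$; all three invariants hold at the start, since $i \leq n \leq 2^{\lceil \log n \rceil}$ and $\Access{G}{S}{i}{\DirLeft} = \Exp{G}{S}[i]$ by \cref{def:access}.

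The main loop then decrements $\LevelId$ from $\lceil \log n \rceil$ down to $0$. At each iteration, if $c = \DirLeft$, I replace $(N, \delta, c)$ by $\LeftMap{G}{N}{\LevelId}{\delta}$; otherwise by $\RightMap{G}{N}{\LevelId}{\delta}$. By \cref{lm:left-map,lm:right-map}, the new tuple still satisfies $\delta \in [1 \dd |\Exp{G}{N}|]$ and preserves the access value, and additionally $\delta \leq 2^{\LevelId}$, which matches the invariant $\delta \leq 2^{(\LevelId-1)+1}$ required at the next level. Once the iteration at $\LevelId = 0$ has completed, the same two lemmas guarantee $|\Exp{G}{N}| = 1$, so the answer is $\Access{G}{N}{1}{c} = \Rhs{G}{N}$, which we can read in $\bigO(1)$ time from $A_{\rm rhs}$.

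To bound each iteration by $\bigO(1)$ time, I would unfold \cref{def:left-map,def:right-map} against the stored components from \cref{sec:1d-log-structure}. Letting $i'$ be the index of $N$ in the ordering $(N_1, \ldots, N_{|V|})$, the values $m = |\Exp{G}{N}|$ and $k = \lceil \delta/2^{\LevelId} \rceil - 1 \in \{0,1\}$ are computed via $\bigO(1)$ arithmetic and a single lookup in $A_{\rm expsize}$. The index of the hook $H$ and its offset $\alpha$ (or $\beta$) are obtained by one lookup in $H^{\rm left}[i', \LevelId, k]$ and $O^{\rm left}[i', \LevelId, k]$ (or their right-boundary counterparts). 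The children $X, Y$ of $H$ come from $A_{\rm rhs}$ applied to the hook index, and their expansion sizes from $A_{\rm expsize}$; the comparison and arithmetic prescribed by the definition are $\bigO(1)$. Summing over $\bigO(\log n)$ iterations yields the claimed running time.

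The only subtlety---rather than a genuine obstacle---is verifying that the preconditions $\delta \in [1 \dd |\Exp{G}{N}|]$ and $\delta \leq 2^{\LevelId+1}$ of \cref{lm:left-map,lm:right-map} persist through the iteration. This follows by a direct induction: they hold at initialization, and each lemma re-establishes them (after decrementing $\LevelId$) via its guarantees that the output $\delta$ lies in $[1 \dd |\Exp{G}{N}|]$ and is bounded by $2^{\LevelId}$. Once this bookkeeping is in place, the proof is essentially a mechanical application of the two mapping lemmas.
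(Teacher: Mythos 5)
Your proposal is correct and follows essentially the same route as the paper: initialize at $(S,i,\DirLeft)$ with level $\lceil \log n \rceil$, repeatedly apply the left/right boundary mappings using \cref{lm:left-map,lm:right-map} to maintain the invariants while decrementing the level, and read the answer from $A_{\rm rhs}$ once the final-level guarantee $|\Exp{G}{N}| = 1$ kicks in, with each iteration implemented in $\bigO(1)$ time via the stored arrays. This matches the paper's proof (which is the pseudocode in \cref{fig:1d-log-access}), so nothing further is needed.
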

\begin{proof}
  The pseudocode of the algorithm is presented in
  \cref{fig:1d-log-access}.
  To compute the symbol
  $\Exp{G}{S}[i]$,
  we call
  $\texttt{RandomAccess}(i)$.
  At the beginning of the algorithm, we set
  $t$, $\delta$, and $c$
  so that it holds
  $\Access{G}{N_t}{\delta}{c} = \Exp{G}{S}[i]$.
  In each iteration of the while loop we then
  compute variables
  $t'$, $\delta'$, and $c'$
  such that
  \[
    \Access{G}{N_{t}}{\delta}{c} =
    \Access{G}{N_{t'}}{\delta'}{c'}.
  \]
  By \cref{lm:left-map,lm:right-map}, each iteration of the while loop
  decreases the variable $p$
  indicating the current level in the data structure.
  Once the algorithm reaches $p < 0$,
  it must hold 
  $|\Exp{G}{N_t}| = 1$.
  Thus, $\Rhs{G}{N_t} \in \Sigma$, and hence
  $\Rhs{G}{N_t} = \Exp{G}{N_t}[1] = \Exp{G}{S}[i]$.
  The total running time is $\bigO(\log n)$.
\end{proof}

\section{Revisiting Random Access using SLPs in Optimal Time}\label{sec:1d-opt}

In this section, we generalize the structure from \cref{sec:1d-log-structure}
to support faster random access queries. Specifically, instead of creating two blocks
of size $2^{\LevelId}$ for each of the endpoints of every variable expansion, we
choose a parameter $\tau \geq 2$, and create $\tau$ blocks of size $\tau^{\LevelId}$
at each endpoint.

\subsection{The Data Structure}\label{sec:1d-opt-structure}

\paragraph{Definitions}

The arrays
$H^{\rm left}$, $O^{\rm left}$,
are now defined for every
$i \in [1 \dd |V|]$,
$\LevelId \in [0 \dd \ceil{\log_{\tau} n}]$, and
$k \in [0 \dd \tau)$
satisfying
$k \cdot \tau^{\LevelId} < |\Exp{G}{N_i}|$
such that it holds
\begin{align*}
  N_{H^{\rm left}[i,\LevelId,k]} &= \Hook{G}{N_i}{b}{e},\\
  O^{\rm left}[i,\LevelId,k] &= \Offset{G}{N_i}{b}{e},
\end{align*}
where
\begin{itemize}
\item $m = |\Exp{G}{N_i}|$,
\item $b = k \cdot \tau^{\LevelId}$, and
\item $e = \min(m, (k+1) \cdot \tau^{\LevelId})$.
\end{itemize}
The arrays $H^{\rm right}$ and $O^{\rm right}$
are generalized analogously.

\paragraph{Components}

The data structure consists of the following components:
\begin{enumerate}
\item The array $A_{\rm rhs}$ using $\bigO(|V|)$ space.
\item The array $A_{\rm expsize}$ using $\bigO(|V|)$ space.
\item The arrays
  $H^{\rm left}$, $O^{\rm left}$,
  $H^{\rm right}$, $O^{\rm right}$
  using
  $\bigO(|V| \cdot \tau \cdot \log_{\tau} n)$
  space.
\end{enumerate}
In total, the data structure needs
$\bigO(|V| \cdot \tau \cdot \log_{\tau} n) = \bigO(|G| \cdot \tau \cdot \log_{\tau} n)$
space.

\subsection{Implementation of Queries}\label{sec:1d-opt-queries}

\begin{definition}[Generalized left boundary mapping]\label{def:generalized-left-map}
  Let $N \in V$
  Denote $m = |\Exp{G}{N}|$.
  Consider any
  $\LevelId \in \Zn$, and
  $\delta \in [1 \dd m]$
  satisfying
  $\delta \leq \tau^{\LevelId+1}$.
  Denote
  $k = \lceil \tfrac{\delta}{\tau^{\LevelId}} \rceil - 1 \in [0 \dd \tau)$,
  $b = k \cdot \tau^{\LevelId}$,
  $e = \min(m, (k + 1) \cdot \tau^{\LevelId})$, and
  $H = \Hook{G}{N}{b}{e}$.
  The rest of the definition of the
  generalized left boundary mapping
  $\LeftMap{G}{N}{\LevelId}{\delta}$
  is as in \cref{def:left-map}.
\end{definition}

\noindent
The generalized boundary mapping
$\RightMap{G}{N}{\LevelId}{\delta}$
is defined analogously.

\begin{lemma}\label{lm:generalized-left-map}
  Let $N \in V$ and
  $m = |\Exp{G}{N}|$.
  Consider any
  $\LevelId \in \Zn$ and
  $\delta \in [1 \dd m]$
  satisfying
  $\delta \leq \tau^{\LevelId+1}$.
  Then, the tuple
  $(N', \delta', c) =
  \LeftMap{G}{N}{\LevelId}{\delta}$
  (\cref{def:generalized-left-map})
  satisfies:
  \begin{enumerate}
  \item
    $\delta' \in [1 \dd |\Exp{G}{N'}|]$,
  \item
    $\delta' \leq \tau^{\LevelId}$,
  \item
    $\Access{G}{N}{\delta}{\DirLeft} = \Access{G}{N'}{\delta'}{c}$ (\cref{def:access}).
  \end{enumerate}
  Moreover, if
  $\LevelId = 0$,
  then
  $|\Exp{G}{N'}| = 1$.
\end{lemma}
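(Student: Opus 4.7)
The plan is to follow the proof of Lemma \ref{lm:left-map} almost verbatim, with the parameter $2$ systematically replaced by $\tau$. The statement, the data laid out in \cref{sec:1d-opt-structure}, and the case split in \cref{def:generalized-left-map} are structurally identical to those in the $\tau=2$ setting; the only change is that the index $k$ now ranges over $[0 \dd \tau)$ rather than $\{0,1\}$. In particular, I first want to check that the hypothesis $\delta \leq \tau^{\LevelId+1}$ keeps $k = \lceil \delta/\tau^{\LevelId}\rceil - 1$ inside $[0 \dd \tau)$, so that the entries of $H^{\rm left}$ and $O^{\rm left}$ invoked through $H = \Hook{G}{N}{b}{e}$ are actually stored.

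Next I would establish the same three arithmetic facts that drive the original proof: $b < \delta$, $\delta \leq e$, and $e - b \leq \tau^{\LevelId}$. The first follows from $b = (\lceil \delta/\tau^{\LevelId}\rceil - 1)\tau^{\LevelId} < \delta$, which is an identity independent of the base; the second from $\delta \leq \lceil \delta/\tau^{\LevelId}\rceil\,\tau^{\LevelId} = (k+1)\tau^{\LevelId}$ together with $\delta \leq m$; and the third from $e - b \leq (k+1)\tau^{\LevelId} - k\tau^{\LevelId}$. These are exactly the bounds used throughout the $\tau=2$ proof, so nothing downstream changes.

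With those bounds in hand, the three claims are dispatched exactly as in Lemma \ref{lm:left-map}. Each splits on whether $e - b = 1$ (in which case $\delta = e$, and Lemma \ref{lm:hook}\eqref{lm:hook-it-1} gives $|\Exp{G}{H}| = 1$, handling claims 1--3 immediately) or $e - b > 1$ (in which case Lemma \ref{lm:hook}\eqref{lm:hook-it-2} gives $|\Exp{G}{H}| \geq 2$ and $\alpha < \ell < \beta$ with $\beta = \alpha + (e-b)$, after which the two sub-cases $\delta - b \leq \ell - \alpha$ and $\delta - b > \ell - \alpha$ are analyzed via the same chains of inequalities). The bound $\delta' \leq \tau^{\LevelId}$ in particular uses only $\delta > b$, $\ell > \alpha$, and $\ell - \alpha \leq e - b \leq \tau^{\LevelId}$, so it picks up the new base automatically.

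Finally, for the supplementary claim, when $\LevelId = 0$ the combination $b < e$ and $e - b \leq \tau^{0} = 1$ forces $e - b = 1$, so \cref{def:generalized-left-map} (inheriting the $e-b=1$ branch of \cref{def:left-map}) outputs $N' = H$, and Lemma \ref{lm:hook}\eqref{lm:hook-it-1} yields $|\Exp{G}{N'}| = 1$. The main obstacle here is purely organizational: there is no new combinatorial ingredient to introduce, but I need to be careful to cite Lemma \ref{lm:hook} with $b, e$ coming from the new $\tau$-ary partition and to verify that every inequality in the original proof still reads correctly after the substitution $2 \mapsto \tau$.
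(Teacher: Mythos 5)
Your proposal is correct and matches the paper exactly: the paper's own proof of this lemma is simply the observation that the argument of \cref{lm:left-map} carries over verbatim with the base $2$ replaced by $\tau$, which is precisely what you carry out (including the check that $k \in [0 \dd \tau)$ and the bounds $b < \delta \leq e$, $e - b \leq \tau^{\LevelId}$).
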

\begin{proof}
  The proof is analogous to the proof of \cref{lm:left-map}.
\end{proof}

\begin{proposition}\label{pr:1d-opt-query}
  Given the data structure from
  \cref{sec:1d-opt}
  and any
  $i \in [1 \dd n]$,
  we can compute
  $\Exp{G}{S}[i]$ in
  $\bigO(\log_{\tau} n)$
  time.
\end{proposition}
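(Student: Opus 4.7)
The plan is to essentially replay the proof of \cref{pr:1d-log-query}, with the base $2$ replaced by the parameter $\tau$, and to verify that each step of the query algorithm still runs in $\bigO(1)$ time using the generalized arrays stored in \cref{sec:1d-opt-structure}.

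First, I would set up the initial state analogously to \cref{fig:1d-log-access}: let $t$ be the index of the starting nonterminal $S$ in $V$, let $\delta = i$, let $c = \DirLeft$, and let $\LevelId = \lceil \log_{\tau} n \rceil$. The initial invariant $\Access{G}{N_t}{\delta}{c} = \Exp{G}{S}[i]$ holds by \cref{def:access} with $c = \DirLeft$, and the level condition $\delta \leq \tau^{\LevelId+1}$ holds because $\delta \leq n \leq \tau^{\lceil \log_{\tau} n \rceil} \leq \tau^{\LevelId+1}$.

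Next, I would describe the main loop: while $\LevelId \geq 0$, look up the stored hook and offset entries for $N_t$ at level $\LevelId$ and block index $k = \lceil \delta / \tau^{\LevelId} \rceil - 1 \in [0 \dd \tau)$ (using $H^{\rm left}, O^{\rm left}$ when $c = \DirLeft$ and $H^{\rm right}, O^{\rm right}$ when $c = \DirRight$), retrieve $|\Exp{G}{H}|$ from $A_{\rm expsize}$, read the children of $H$ from $A_{\rm rhs}$, and compute the updated tuple $(N', \delta', c') = \LeftMap{G}{N_t}{\LevelId}{\delta}$ or $\RightMap{G}{N_t}{\LevelId}{\delta}$ according to \cref{def:generalized-left-map} (and its symmetric right version). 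By \cref{lm:generalized-left-map} (and its symmetric counterpart), the updated tuple satisfies $\delta' \in [1 \dd |\Exp{G}{N'}|]$, $\delta' \leq \tau^{\LevelId}$, and $\Access{G}{N_t}{\delta}{c} = \Access{G}{N'}{\delta'}{c'}$. Thus, after updating $t, \delta, c$ to reflect $N', \delta', c'$ and decrementing $\LevelId$ by one, the loop invariant (access value preserved, and $\delta \leq \tau^{(\LevelId)+1}$ with the new $\LevelId$) is maintained.

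Termination and output: after the iteration with $\LevelId = 0$, \cref{lm:generalized-left-map} guarantees $|\Exp{G}{N_t}| = 1$, so $\Rhs{G}{N_t} \in \Sigma$ and $\Exp{G}{N_t}[1] = \Rhs{G}{N_t} = \Access{G}{N_t}{\delta}{c}$; we return $A_{\rm rhs}[t]$. The loop runs at most $\lceil \log_{\tau} n \rceil + 1 = \bigO(\log_{\tau} n)$ iterations, and each iteration performs a constant number of arithmetic operations and lookups into $A_{\rm rhs}$, $A_{\rm expsize}$, $H^{\rm left}$, $O^{\rm left}$, $H^{\rm right}$, $O^{\rm right}$, which are all $\bigO(1)$. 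Hence the total time is $\bigO(\log_{\tau} n)$.

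The main obstacle, if any, is verifying that the arithmetic $k = \lceil \delta / \tau^{\LevelId} \rceil - 1$ and the condition $\delta \leq \tau^{\LevelId+1}$ combine to guarantee $k \in [0 \dd \tau)$ and $k \cdot \tau^{\LevelId} < |\Exp{G}{N_t}|$ (which is needed for the precomputed hook and offset entries to be defined in \cref{sec:1d-opt-structure}); but this follows from $1 \leq \delta \leq \tau^{\LevelId+1}$ together with $\delta \leq |\Exp{G}{N_t}|$, exactly as in the proof of \cref{lm:left-map}. All other details are direct generalizations of \cref{sec:1d-log-queries} with $2$ replaced by $\tau$, so no genuinely new argument is needed.
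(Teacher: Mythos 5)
Your proof is correct and follows essentially the same route as the paper: the paper's proof simply states that the argument of \cref{pr:1d-log-query} carries over with \cref{lm:generalized-left-map} used in place of \cref{lm:left-map}, which is exactly what you spell out (including the routine checks that $k \in [0 \dd \tau)$, that the stored entries are defined, and that each iteration costs $\bigO(1)$).
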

\begin{proof}
  The proof is analogous to the proof of \cref{pr:1d-log-query}, except
  instead of \cref{lm:left-map}, we use \cref{lm:generalized-left-map}.
\end{proof}

\begin{theorem}\label{th:1d-opt}
  For every SLP $G$ of size
  $|G| = g$ representing a string $T \in \Sigma^{n}$
  (see \cref{sec:prelim-grammars}),
  and any constant $\epsilon > 0$,
  there exists a data structure of size
  $\bigO(g \cdot \log^{1+\epsilon} n)$,
  that, given any
  $i \in [1 \dd n]$,
  returns
  $T[i]$
  in
  $\bigO(\tfrac{\log n}{\log \log n})$ time.
\end{theorem}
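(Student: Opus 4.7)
The plan is to instantiate the parameterized random-access structure developed in Section 5 with a well-chosen value of $\tau$. Recall that, for any integer $\tau \geq 2$, Section 5.1 builds, on top of an SLP of size $g$ representing $T \in \Sigma^n$, a data structure of size $\bigO(g \cdot \tau \cdot \log_{\tau} n)$, and \cref{pr:1d-opt-query} shows that it answers random-access queries in $\bigO(\log_{\tau} n)$ time. The goal is simply to tune $\tau$ so that the query time drops to $\bigO(\tfrac{\log n}{\log\log n})$ while the space inflates by only a $\bigO(\log^{1+\epsilon} n)$ factor.

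First, I would dispose of a trivial base case: if $n$ is bounded by an absolute constant, we store $T$ explicitly in $\bigO(1)$ space and answer queries in $\bigO(1)$ time. Hence we may assume from now on that $\log n$ is large enough that $\lceil \log^{\epsilon} n \rceil \geq 2$. I then set $\tau := \lceil \log^{\epsilon} n \rceil$, which satisfies the precondition $\tau \geq 2$ of Section 5 and enjoys $\tau = \Theta(\log^{\epsilon} n)$ and $\log \tau = \Theta(\log\log n)$.

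Next, I would verify the two bounds by direct substitution. For space, plugging the chosen $\tau$ into the bound from \cref{sec:1d-opt-structure} gives
\[
  \bigO(g \cdot \tau \cdot \log_{\tau} n)
  \;=\; \bigO\!\left(g \cdot \log^{\epsilon} n \cdot \tfrac{\log n}{\epsilon \log\log n}\right)
  \;=\; \bigO(g \cdot \log^{1+\epsilon} n),
\]
where the last equality uses that $\epsilon > 0$ is a constant and absorbs the $1/\log\log n$ factor. For the query time, \cref{pr:1d-opt-query} yields
\[
  \bigO(\log_{\tau} n)
  \;=\; \bigO\!\left(\tfrac{\log n}{\epsilon \log\log n}\right)
  \;=\; \bigO\!\left(\tfrac{\log n}{\log\log n}\right),
\]
again because $\epsilon$ is a constant.

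There is essentially no substantive obstacle at this stage: the heavy lifting (correctness of hooks, offsets, and the generalized left/right boundary maps for arbitrary degree $\tau$) has already been carried out in Section 5, culminating in \cref{lm:generalized-left-map} and its symmetric counterpart. The only minor technicalities are ensuring $\tau \geq 2$ for small $n$, which is handled by the base case above, and observing that the constants hidden in $\log^{1+\epsilon} n$ and $\tfrac{\log n}{\log\log n}$ depend only on the fixed parameter $\epsilon$. This proves the theorem.
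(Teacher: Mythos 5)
Your proposal is correct and follows essentially the same route as the paper: instantiate the Section~\ref{sec:1d-opt} structure with $\tau = \log^{\epsilon} n$, then read off the space bound $\bigO(g \cdot \tau \cdot \log_{\tau} n) = \bigO(g \log^{1+\epsilon} n)$ and the query time $\bigO(\log_{\tau} n) = \bigO(\tfrac{\log n}{\log\log n})$ from \cref{pr:1d-opt-query}. Your extra care with the ceiling and the small-$n$ base case is a harmless refinement of the paper's argument.
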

\begin{proof}
  We let $\tau = \log^{\epsilon} n$ and use the data structure from
  \cref{sec:1d-opt}.
  It needs
  $\bigO(g \cdot \tau \cdot \log_{\tau} n) =
  \bigO(g \cdot \log^{1+\epsilon} n)$ space.
  By \cref{pr:1d-opt-query}, the query time
  is then $\bigO(\log_{\tau} n) = \bigO(\tfrac{\log n}{\log \tau})
  = \bigO(\tfrac{\log n}{\log (\log^{\epsilon} n)}) =
  \bigO(\tfrac{\log n}{\log \log n})$.
\end{proof}

\section{Random Access using 2D SLPs in Logarithmic Time}\label{sec:2d-log}

Let
$G = (V_l, V_h, V_v, \Sigma, R, S)$
be a 2D SLP (see \cref{sec:prelim-2d-grammars})
fixed for the duration of this section.
Denote
$n = \max(\Rows{\Exp{G}{S}}, \Cols{\Exp{G}{S}})$
and
$V = V_l \cup V_h \cup V_v = \{N_{1}, \ldots, N_{|V|}\}$.
In this section, we generalize the data structure from \cref{sec:1d-log}
to 2D SLPs, i.e.,
we show that there exists a data structure using
$\bigO(|G| \log^2 n)$ space
that implements random access queries to $\Exp{G}{S}$
in $\bigO(\log n)$ time.

\subsection{Preliminaries}\label{sec:2d-log-prelim}

\begin{definition}[2D hook]\label{def:2d-hook}
  Consider any nonterminal $N \in V$. Let $b_r, e_r \in [0 \dd \Rows{\Exp{G}{N}}]$ and
  $b_c, e_c \in [0 \dd \Cols{\Exp{G}{N}}]$ be such that $b_r < e_r$
  and $b_c < e_c$.
  \begin{enumerate}
    \item If $N \in V_{l}$, then we define $\HookTwoDim{G}{N}{b_r}{b_c}{e_r}{e_c} = N$.
    \item Otherwise, if $N \in V_h$, then letting $A,B \in V$ be such that
      $\Rhs{G}{N} = AB$ and $\ell = \Rows{\Exp{G}{A}}$, we define:
      \[
        \HookTwoDim{G}{N}{b_r}{b_c}{e_r}{e_c} =
          \begin{cases}
            N & \text{if }b_r < \ell < e_r,\\
            \HookTwoDim{G}{A}{b_r}{b_c}{e_r}{e_c} & \text{if }e_r \leq \ell,\\
            \HookTwoDim{G}{B}{b_r - \ell}{b_c}{e_r - \ell}{e_c} & \text{if }\ell \leq b_r.
          \end{cases}
      \]
    \item Finally, if $N \in V_v$ then, letting $A,B \in V$ be such that
      $\Rhs{G}{N} = AB$ and $\ell = \Cols{\Exp{G}{A}}$, we define:
      \[
        \HookTwoDim{G}{N}{b_r}{b_c}{e_r}{e_c} =
          \begin{cases}
            N & \text{if }b_c < \ell < e_c,\\
            \HookTwoDim{G}{A}{b_r}{b_c}{e_r}{e_c} & \text{if }e_c \leq \ell,\\
            \HookTwoDim{G}{B}{b_r}{b_c-\ell}{e_r}{e_c-\ell} & \text{if }\ell \leq b_c.
          \end{cases}
      \]
  \end{enumerate}
\end{definition}

\begin{definition}[2D offset]\label{def:2d-offset}
  Consider any nonterminal $N \in V$.
  Let $b_r, e_r \in [0 \dd \Rows{\Exp{G}{N}}]$ and
  $b_c, e_c \in [0 \dd \Cols{\Exp{G}{N}}]$ be such that $b_r < e_r$ and $b_c < e_c$.
  \begin{enumerate}
    \item If $N \in V_{l}$, then we define $\OffsetTwoDim{G}{N}{b_r}{b_c}{e_r}{e_c} = (0,0)$.
    \item Otherwise, if $N \in V_h$, then letting $A,B \in V$ be such that
      $\Rhs{G}{N} = AB$ and $\ell = \Rows{\Exp{G}{A}}$, we define:
      \[
        \OffsetTwoDim{G}{N}{b_r}{b_c}{e_r}{e_c} =
          \begin{cases}
            (b_r,b_c) & \text{if }b_r < \ell < e_r,\\
            \OffsetTwoDim{G}{A}{b_r}{b_c}{e_r}{e_c} & \text{if }e_r \leq \ell,\\
            \OffsetTwoDim{G}{B}{b_r - \ell}{b_c}{e_r - \ell}{e_c} & \text{if }\ell \leq b_r.
          \end{cases}
      \]
    \item Finally, if $N \in V_v$ then, letting $A,B \in V$ be such that
      $\Rhs{G}{N} = AB$ and $\ell = \Cols{\Exp{G}{A}}$, we define:
      \[
        \OffsetTwoDim{G}{N}{b_r}{b_c}{e_r}{e_c} =
          \begin{cases}
            (b_r,b_c) & \text{if }b_c < \ell < e_c,\\
            \OffsetTwoDim{G}{A}{b_r}{b_c}{e_r}{e_c} & \text{if }e_c \leq \ell,\\
            \OffsetTwoDim{G}{B}{b_r}{b_c-\ell}{e_r}{e_c-\ell} & \text{if }\ell \leq b_c.
          \end{cases}
      \]
  \end{enumerate}
\end{definition}

\begin{figure}[t!]
	\centering
	\begin{tikzpicture}[scale=0.35]
    \small
		\draw[black, line width = 0.8mm] (0,0) rectangle (20,20);
    \draw[black, line width = 0.9mm] (4.97,0) -- (4.97,20);
		\draw[blue, line width = 0.8mm] (5.2,0.2) rectangle (19.77,19.77);
    \draw[blue, line width = 0.9mm] (5.2,15.04) -- (19.8,15.04);
		\draw[red, line width = 0.8mm] (5.4,0.4) rectangle (19.6,14.8);
		\draw[red, line width = 0.8mm] (12,0.4) -- (12,14.7);
		\draw[dotted, teal, line width = 0.6mm] (8,8) rectangle (14,12);

		\draw[stealth-stealth] (0,21.2) -- (8,21.2) node[midway,yshift=0.6em]{$b_c$};
		\draw[stealth-stealth] (0,22.4) -- (14,22.4) node[midway,yshift=0.6em]{$e_c$};
		\draw[stealth-stealth] (-1.2,20) -- (-1.2,12) node[midway,xshift=-0.7em]{$b_r$};
		\draw[stealth-stealth] (-2.4,20) -- (-2.4,8) node[midway,xshift=-0.7em]{$e_r$};

		\draw[stealth-stealth] (5.5,10) -- (8,10) node[midway,yshift=0.6em]{$\alpha_c$};
		\draw[stealth-stealth] (10,12) -- (10,14.6) node[midway,xshift=-0.7em]{$\alpha_r$};

		\draw[decorate,decoration={brace,amplitude=15pt},line width = 0.4mm]
      (-3,0) -- (-3,20) node[midway,xshift=-4em]{$\Exp{G}{N}$};
		\draw[red,decorate,decoration={brace,amplitude=12pt,mirror},line width = 0.4mm]
      (20.5,0.4) -- (20.5,14.8) node[midway,xshift=4em]{$\Exp{G}{H}$};

		\draw [dashed,gray, line width = 0.1mm] (8,12) -- (8,20);
		\draw [dashed,gray, line width = 0.1mm] (14,12) -- (14,20);
		\draw [dashed,gray, line width = 0.1mm] (0,8) -- (8,8);
		\draw [dashed,gray, line width = 0.1mm] (0,12) -- (8,12);
	\end{tikzpicture}
	\caption{Illustration of \cref{def:2d-hook,def:2d-offset}. Here $N \in V$, $\HookTwoDim{G}{N}{b_r}{b_c}{e_r}{e_c} = H$, and $\OffsetTwoDim{G}{N}{b_r}{b_c}{e_r}{e_c} = (\alpha_r,\alpha_c)$.}
	\label{fig:2d-hook-and-offset}
\end{figure}

\begin{lemma}\label{lm:2d-hook}
  Let $N \in V$, $w = \Exp{G}{N}$, and $b_r, b_e \in [0 \dd \Rows{w}]$ and
  $b_c, e_c \in [0 \dd \Cols{w}]$ be such that $b_r < e_r$ and $b_c < e_c$.
  Denote
  $H = \HookTwoDim{G}{N}{b_r}{b_c}{e_r}{e_c}$ (\cref{def:2d-hook}),
  $(\alpha_r, \alpha_c) = \OffsetTwoDim{G}{N}{b_r}{b_c}{e_r}{e_c}$ (\cref{def:2d-offset}), and
  $(\beta_r, \beta_c) = (\alpha_r + (e_r - b_r), \alpha_c + (e_c - b_c))$.
  Then, it holds
  \begin{itemize}
  \item $\alpha_r \leq b_r$,
  \item $\alpha_c \leq b_c$,
  \item $0 \leq \alpha_r < \beta_r \leq \Rows{\Exp{G}{H}}$,
  \item $0 \leq \alpha_c < \beta_c \leq \Cols{\Exp{G}{H}}$, and
  \item $w(b_r \dd e_r](b_c \dd e_c] = \Exp{G}{H}(\alpha_r \dd \beta_r](\alpha_c \dd \beta_c]$.
  \end{itemize}
  Moreover,
  \begin{enumerate}
  \item\label{lm:2d-hook-it-1}
    If $e_r - b_r = 1$ and $e_c - b_c = 1$, then $H \in V_{l}$.
  \item\label{lm:2d-hook-it-2}
    Otherwise, it holds $H \in V_{h} \cup V_{v}$.
    Furthermore, letting $X, Y \in V$ be such that
    $\Rhs{G}{H} = XY$,
    \begin{itemize}
    \item $H \in V_{h}$ implies that it holds $\alpha_r < \ell < \beta_r$, where $\ell = \Rows{\Exp{G}{X}}$.
    \item $H \in V_{v}$ implies that it holds $\alpha_c < \ell < \beta_c$, where $\ell = \Cols{\Exp{G}{X}}$.
    \end{itemize}
  \end{enumerate}
\end{lemma}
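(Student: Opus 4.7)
The plan is to mirror the structure of the proof of \cref{lm:hook}, but with a case split that tracks the horizontal and vertical coordinates separately according to whether the current nonterminal is literal, horizontal, or vertical. Unfolding \cref{def:2d-hook,def:2d-offset} produces a unique finite sequence of tuples $(N_i, b_r^{(i)}, b_c^{(i)}, e_r^{(i)}, e_c^{(i)})_{i \in [0 \dd q]}$ with $(N_0, b_r^{(0)}, b_c^{(0)}, e_r^{(0)}, e_c^{(0)}) = (N, b_r, b_c, e_r, e_c)$, such that each step descends from $N_i$ to one of the two children $X_i, Y_i$ of $\Rhs{G}{N_i}$, and such that the recursion terminates at $N_q = H$ with $(b_r^{(q)}, b_c^{(q)}) = (\alpha_r, \alpha_c)$. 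I would first verify the termination conditions: either $N_q \in V_l$, or $N_q \in V_h$ and $b_r^{(q)} < \Rows{\Exp{G}{X_q}} < e_r^{(q)}$, or $N_q \in V_v$ and $b_c^{(q)} < \Cols{\Exp{G}{X_q}} < e_c^{(q)}$.

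Next, I would prove the following invariants by induction on $i$: (i) the widths $e_r^{(i)} - b_r^{(i)}$ and $e_c^{(i)} - b_c^{(i)}$ are both preserved across steps; (ii) $b_r^{(i)} \leq b_r^{(i-1)}$ and $b_c^{(i)} \leq b_c^{(i-1)}$ (since at each step the relevant coordinate either stays the same or is decreased by $\ell$); and (iii) the submatrix identity
\[
  \Exp{G}{N_i}(b_r^{(i)} \dd e_r^{(i)}](b_c^{(i)} \dd e_c^{(i)}]
  = \Exp{G}{N_{i-1}}(b_r^{(i-1)} \dd e_r^{(i-1)}](b_c^{(i-1)} \dd e_c^{(i-1)}].
\]
Invariant (iii) requires a small case analysis: when $N_{i-1} \in V_h$ and $e_r^{(i-1)} \leq \ell$, the chosen child $A$ expands to the top $\ell$ rows, so the content is unchanged; when $\ell \leq b_r^{(i-1)}$ the child $B$ expands to the remaining rows and the row coordinates shift by $\ell$; the vertical case is symmetric.

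Combining the invariants gives the main claim immediately. From invariant (ii), $\alpha_r = b_r^{(q)} \leq b_r$ and $\alpha_c = b_c^{(q)} \leq b_c$; from invariant (i), $\beta_r - \alpha_r = e_r - b_r$ and $\beta_c - \alpha_c = e_c - b_c$, so the width bounds reduce to the constraints $0 \leq b_r^{(q)} < e_r^{(q)} \leq \Rows{\Exp{G}{N_q}}$ and $0 \leq b_c^{(q)} < e_c^{(q)} \leq \Cols{\Exp{G}{N_q}}$ which hold throughout the descent; and invariant (iii) telescoped from $0$ to $q$ yields $\Exp{G}{N}(b_r \dd e_r](b_c \dd e_c] = \Exp{G}{H}(\alpha_r \dd \beta_r](\alpha_c \dd \beta_c]$.

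For the final dichotomy, I would argue by contradiction using the termination conditions. If $e_r - b_r = e_c - b_c = 1$ and $H \in V_h$, then the horizontal termination condition forces $e_r^{(q)} - b_r^{(q)} \geq 2$, which combined with invariant (i) contradicts $e_r - b_r = 1$; the case $H \in V_v$ is symmetric, so $H \in V_l$. Conversely, if $e_r - b_r > 1$ or $e_c - b_c > 1$, then $H$ cannot be literal (otherwise, by \cref{def:2d-hook}, we would have $\HookTwoDim{G}{N_{q-1}}{b_r^{(q-1)}}{b_c^{(q-1)}}{e_r^{(q-1)}}{e_c^{(q-1)}} = N_q$ only if $N_{q-1}$ were already forced to terminate, but a literal right-hand side would already have terminated the recursion earlier; alternatively, $H \in V_l$ means $\Exp{G}{H}$ is a single cell, contradicting the width bounds proved above), so $H \in V_h \cup V_v$, and the strict splitting inequality $\alpha_r < \ell < \beta_r$ (resp.\ $\alpha_c < \ell < \beta_c$) is exactly the termination condition that caused the recursion to stop at $N_q$. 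The main obstacle is purely bookkeeping: the 2D case has three subcases per recursive step (literal/horizontal/vertical) instead of one, so the case analysis for invariants (i)--(iii) is longer than in the 1D proof, but no new ideas are needed.
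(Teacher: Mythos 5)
Your proposal is correct and follows essentially the same route as the paper's proof: unfolding \cref{def:2d-hook,def:2d-offset} into the unique descent sequence, establishing the same three invariants (width preservation, monotonicity of $b_r, b_c$, and the submatrix identity) by the same horizontal/vertical case analysis, and deriving the final dichotomy from the termination conditions exactly as the paper does (in particular, your second argument that $H \in V_l$ would force a $1 \times 1$ expansion contradicting the size bounds is the paper's argument). No gaps; only the bookkeeping details remain to be written out.
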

\begin{proof}

  By \cref{def:2d-hook,def:2d-offset},
  there exists a unique sequence of tuples
  $(N_i, b_{r,i}, b_{c,i}, e_{r,i}, e_{c,i})_{i \in [0 \dd q]}$,
  where $q \geq 0$, such that:
  \begin{itemize}
  \item for every $i \in [0 \dd q]$,
  \begin{itemize}
    \item $0 \leq b_{r,i} < e_{r,i} \leq \Rows{\Exp{G}{N_i}}$,
    \item $0 \leq b_{c,i} < e_{c,i} \leq \Cols{\Exp{G}{N_i}}$,
  \end{itemize}
  \item $(N_{0}, b_{r,0}, b_{c,0}, e_{r,0}, e_{c,0}) = (N, b_r, b_c, e_r, e_c)$,
  \item for every $i \in [0 \dd q)$,
    \begin{itemize}
    \item $N_i \in V_{h} \cup V_{v}$
      and, letting $X_i, Y_i \in V$ be such that
      $\Rhs{G}{N_i} = X_i Y_i$, $N_{i+1} \in \{X_i, Y_i\}$,
    \item $\HookTwoDim{G}{N_i}{b_{r,i}}{b_{c,i}}{e_{r,i}}{e_{c,i}} =
      \HookTwoDim{G}{N_{i+1}}{b_{r,i+1}}{b_{c,i+1}}{e_{r,i+1}}{e_{c,i+1}}$,
    \item $\OffsetTwoDim{G}{N_i}{b_{r,i}}{b_{c,i}}{e_{r,i}}{e_{c,i}} =
      \OffsetTwoDim{G}{N_{i+1}}{b_{r,i+1}}{b_{c,i+1}}{e_{r,i+1}}{e_{c,i+1}}$,
    \end{itemize}
  \item $\HookTwoDim{G}{N_q}{b_{r,q}}{b_{c,q}}{e_{r,q}}{e_{c,q}} = N_{q}$ and
    $\OffsetTwoDim{G}{N_q}{b_{r,q}}{b_{c,q}}{e_{r,q}}{e_{c,q}} = (b_{r,q},b_{c,q})$.
  \end{itemize}

  To prove the main claim, we first prove auxiliary properties of the above
  sequence. More specifically, we will first show that for every $i \in [0 \dd q)$,
  it holds:
  \begin{itemize}
  \item $b_{r,i+1} \leq b_{r,i}$,
  \item $b_{c,i+1} \leq b_{c,i}$,
  \item $e_{r,i+1} - b_{r,i+1} = e_{r,i} - b_{r,i}$,
  \item $e_{c,i+1} - b_{c,i+1} = e_{c,i} - b_{c,i}$,
  \item $\Exp{G}{N_{i+1}}(b_{r,i+1} \dd e_{r,i+1}](b_{c,i+1} \dd e_{c,i+1}] =
    \Exp{G}{N_i}(b_{r,i} \dd e_{r,i}](b_{c,i} \dd e_{c,i}]$.
  \end{itemize}
  Let $i \in [0 \dd q)$. As noted above, it holds
  $N_i \in V_h \cup V_v$.
  Let $X_i, Y_i \in V$ be such that $\Rhs{G}{N_i} = X_i Y_i$.
  We consider two cases:
  \begin{itemize}
  \item First, assume that $N_i \in V_h$. Denote $\ell = \Rows{\Exp{G}{X_i}}$.
    Above, we observed that $N_{i+1} \in \{X_i, Y_i\}$ and
    $\HookTwoDim{G}{N_i}{b_{r,i}}{b_{c,i}}{e_{r,i}}{e_{c,i}} =
    \HookTwoDim{G}{N_{i+1}}{b_{r,i+1}}{b_{c,i+1}}{e_{r,i+1}}{e_{c,i+1}}$.
    By \cref{def:2d-hook}, this implies that we must either have
    $e_{r,i} \leq \ell$ or $\ell \leq b_{r,i}$. Consider two cases:
    \begin{itemize}
    \item First, assume that $e_{r,i} \leq \ell$. By \cref{def:2d-hook}, we
      have $(N_{i+1}, b_{r,i+1}, b_{c,i+1}, e_{r,i+1}, e_{c,i+1}) =
      (X_i, b_{r,i}, b_{c,i}, e_{r,i}, e_{c,i})$.
      We thus immediately obtain
      $b_{r,i+1} \leq b_{r,i}$,
      $b_{c,i+1} \leq b_{c,i}$,
      $e_{r,i+1} - b_{r,i+1} = e_{r,i} - b_{r,i}$, and
      $e_{c,i+1} - b_{c,i+1} = e_{c,i} - b_{c,i}$.
      On the other hand,
      $e_{r,i} \leq \ell$ implies that it holds $\Exp{G}{N_i}(b_{r,i} \dd e_{r,i}](b_{c,i} \dd e_{c,i}]
       = \Exp{G}{X_i}(b_{r,i} \dd e_{r,i}](b_{c,i} \dd e_{c,i}]$. Putting everything
      together, we thus obtain
      \begin{align*}
        \Exp{G}{N_{i+1}}(b_{r,i+1} \dd e_{r,i+1}](b_{c,i+1} \dd e_{c,i+1}]
          &= \Exp{G}{X_i}(b_{r,i} \dd e_{r,i}](b_{c,i} \dd e_{c,i}]\\
          &= \Exp{G}{N_i}(b_{r,i} \dd e_{r,i}](b_{c,i} \dd e_{c,i}].
      \end{align*}
    \item Assume now that $\ell \leq b_{r,i}$.
      By \cref{def:2d-hook}, we
      have $(N_{i+1}, b_{r,i+1}, b_{c,i+1}, e_{r,i+1}, e_{c,i+1}) =
      (Y_i, b_{r,i} - \ell, b_{c,i}, e_{r,i} - \ell, e_{c,i})$.
      We thus again obtain that
      $b_{r,i+1} = b_{r,i} - \ell \leq b_{r,i}$,
      $b_{c,i+1} \leq b_{c,i}$,
      $e_{r,i+1} - b_{r,i+1} = (e_{r,i} - \ell) - (b_{r,i} - \ell) = e_{r,i} - b_{r,i}$,
      and $e_{c,i+1} - b_{c,i+1} = e_{c,i} - b_{c,i}$.
      On the other hand,
      $\ell \leq b_{r,i}$ implies that it holds $\Exp{G}{N_i}(b_{r,i} \dd e_{r,i}](b_{c,i} \dd e_{c,i}]
       = \Exp{G}{Y_i}(b_{r,i} - \ell \dd e_{r,i} - \ell](b_{c,i} \dd e_{c,i}]$. Putting everything
      together, we thus obtain
      \begin{align*}
        \Exp{G}{N_{i+1}}(b_{r,i+1} \dd e_{r,i+1}](b_{c,i+1} \dd e_{c,i+1}]
          &= \Exp{G}{Y_i}(b_{r,i} - \ell \dd e_{r,i} - \ell](b_{c,i} \dd e_{c,i}]\\
          &= \Exp{G}{N_i}(b_{r,i} \dd e_{r,i}](b_{c,i} \dd e_{c,i}].
      \end{align*}
    \end{itemize}
  \item The proof of the case $N_i \in V_v$ is analogous.
  \end{itemize}

  Using the properties above, we are now ready to prove the main claims.
  First, observe that, from the properties of the sequence
  $(N_i, b_{r,i}, b_{c,i}, e_{r,i}, e_{c,i})_{i \in [0 \dd q]}$,
  we immediately obtain that $H = N_q$, $(\alpha_r, \alpha_c) = (b_{r,q}, b_{c,q})$, and
  \begin{align*}
    (\beta_r, \beta_c)
      &= (\alpha_r + (e_r - b_r), \alpha_c + (e_c - b_c))\\
      &= (b_{r,q} + (e_{r,q} - b_{r,q}), b_{c,q} + (e_{c,q} - b_{c,q}))\\
      &= (e_{r,q}, e_{c,q}).
  \end{align*}
  This immediately implies the first four claims, i.e.,
  $\alpha_r \leq b_r$,
  $\alpha_c \leq b_c$,
  $0 \leq \alpha_r < \beta_r \leq \Rows{\Exp{G}{H}}$, and
  $0 \leq \alpha_c < \beta_c \leq \Cols{\Exp{G}{H}}$,
  since by the above we have
  $\alpha_r = b_{r,q} \leq b_{r,q-1} \leq \ldots \leq b_{r,0} = b_r$,
  $\alpha_c = b_{c,q} \leq b_{c,q-1} \leq \ldots \leq b_{c,0} = b_c$,
  $0 \leq b_{r,q} < e_{r,q} \leq \Rows{\Exp{G}{N_{q}}}$, and
  $0 \leq b_{c,q} < e_{c,q} \leq \Cols{\Exp{G}{N_{q}}}$.
  To obtain the fifth
  claim note that, by the above, it holds
  \begin{align*}
    w(b_r \dd e_r](b_c \dd e_c]
      &= \Exp{G}{N}(b_r \dd e_r](b_c \dd e_c]\\
      &= \Exp{G}{N_{0}}(b_{r,0} \dd e_{r,0}](b_{c,0} \dd e_{c,0}]\\
      &= \Exp{G}{N_{1}}(b_{r,1} \dd e_{r,1}](b_{c,1} \dd e_{c,1}]\\
      &= \dots\\
      &= \Exp{G}{N_{q}}(b_{r,q} \dd e_{r,q}](b_{c,q} \dd e_{c,q}]\\
      &= \Exp{G}{H}(\alpha_r \dd \beta_r](\alpha_c \dd \beta_c].\\
  \end{align*}

  We now prove the second part of the main claim, considering each of the items separately:
  \begin{enumerate}

  \item First, assume that $e_r - b_r = 1$ and $e_c - b_c = 1$.
    Suppose the claim does not hold, i.e.,
    $H \not\in V_l$. Then, we have $H \in V_h \cup V_v$.
    By the above, this is equivalent
    to $N_{q} \in V_h \cup V_v$.
    Then, there exist $X,Y \in V$ such that
    $\Rhs{G}{N_{q}} = XY$. Assume that $N_{q} \in V_h$.
    Denote $\ell = \Rows{\Exp{G}{X}}$.
    Observe that we cannot have
    $b_{r,q} < \ell < e_{r,q}$, since that implies
    $e_{r,q} - b_{r,q} > 1$, which contradicts $e_r - b_r = 1$
    (recall that $e_{r,q} - b_{r,q} = e_r - b_r$).
    Thus, we must have either $\ell \leq b_{r,q}$ or $e_{r,q} \leq \ell$.
    In either case, by \cref{def:2d-hook}, it holds
    $\HookTwoDim{G}{N_{q}}{b_{r,q}}{b_{r,q}}{e_{r,q}}{e_{c,q}} \neq N_{q}$,
    a contradiction. The proof in the case $N_{q} \in V_v$ is analogous,
    also leading to contradiction.
    We thus cannot have $N_{q} \in V_h \cup V_v$, and hence
    it holds $N_{q} \in V_l$, or equivalently, $H \in V_l$.

  \item Let us now assume that $e_r - b_r > 1$ or $e_c - b_c > 1$.
    Recall that above we observed that
    $H = N_{q}$,
    $e_{r,q} - b_{r,q} = e_r - b_r$,
    $e_{c,q} - b_{c,q} = e_c - b_c$,
    $0 \leq b_{r,q} < e_{r,q} \leq \Rows{\Exp{G}{N_{q}}}$, and
    $0 \leq b_{c,q} < e_{c,q} \leq \Cols{\Exp{G}{N_{q}}}$.
    This immediately implies that
    $\Rows{\Exp{G}{H}} = \Rows{\Exp{G}{N_{q}}} \geq e_{r,q} - b_{r,q} = e_r - b_r$ and
    $\Cols{\Exp{G}{H}} = \Cols{\Exp{G}{N_{q}}} \geq e_{c,q} - b_{c,q} = e_c - b_c$.
    Consequently, by the assumptions about
    $e_r - b_r$ and $e_c - b_c$,
    we either have
    $\Rows{\Exp{G}{H}} > 1$ or $\Cols{\Exp{G}{H}} > 1$,
    and thus $H \in V_h \cup V_v$, i.e.,
    the first part of the claim.
    Let then $X, Y \in V$ be as in the claim,
    i.e., such that $\Rhs{G}{H} = XY$.
    Consider two cases:
    \begin{itemize}
    \item Let us first assume $H \in V_h$.
      Denote $\ell = \Rows{\Exp{G}{X}}$. Recall that above we proved
      that in this case, we cannot have $e_{r,q} \leq \ell$ or $\ell \leq b_{r,q}$, because
      each case leads to a contradiction. Thus, we must have $b_{r,q} < \ell < e_{r,q}$.
      It remains to recall that above we also showed that $\alpha_r = b_{r,q}$
      and $\beta_r = e_{r,q}$. Thus, we obtain $\alpha_r < \ell < \beta_r$, i.e., the claim.
    \item The case $H \in V_v$ analogously yields $\alpha_c < \ell < \beta_c$,
      where $\ell = \Cols{\Exp{G}{X}}$.
      \qedhere
    \end{itemize}
  \end{enumerate}
\end{proof}

\subsection{The Data Structure}\label{sec:2d-log-structure}

\paragraph{Definitions}

Let $H^{NW}$ and $O^{NW}$ denote 5-dimensional arrays defined
so that, for every
$i \in [1 \dd |V|]$,
$\LevelId_{r}, \LevelId_{c} \in [0 \dd \ceil{\log n}]$, and
$k_r, k_c \in \{0,1\}$
satisfying
$k_r \cdot 2^{\LevelId_r} < \Rows{\Exp{G}{N_i}}$ and
$k_c \cdot 2^{\LevelId_c} < \Cols{\Exp{G}{N_i}}$,
it holds
\begin{align*}
  N_{H^{NW}[i,\LevelId_r,\LevelId_c,k_r,k_c]} &= \HookTwoDim{G}{N_i}{b_r}{b_c}{e_r}{e_c},\\
  O^{NW}[i,\LevelId_r,\LevelId_c,k_r,k_c] &= \OffsetTwoDim{G}{N_i}{b_r}{b_c}{e_r}{e_c},
\end{align*}
where
\begin{itemize}
\item $m_r = \Rows{\Exp{G}{N_i}}$,
\item $m_c = \Cols{\Exp{G}{N_i}}$,
\item $b_r = k_r \cdot 2^{\LevelId_r}$,
\item $b_c = k_c \cdot 2^{\LevelId_c}$,
\item $e_r = \min(m_r, (k_r+1) \cdot 2^{\LevelId_r})$, and
\item $e_c = \min(m_c, (k_c+1) \cdot 2^{\LevelId_c})$.
\end{itemize}
Similarly, for every $i, \LevelId_r, \LevelId_c, k_r, k_c$ as above, we define
$H^{NE}$ and $O^{NE}$ such that it holds
\begin{align*}
  N_{H^{NE}[i,\LevelId_r,\LevelId_c,k_r,k_c]} &= \HookTwoDim{G}{N_i}{b_r}{m_c - e_c}{e_r}{m_c - b_c},\\
  O^{NE}[i,\LevelId_r,\LevelId_c,k_r,k_c] &= \OffsetTwoDim{G}{N_i}{b_r}{m_c - e_c}{e_r}{m_c - b_c},
\end{align*}
where $m_r, m_c, b_r, b_c, e_r, e_c$ are as above.

We analogously define the arrays $H^{SW}$, $O^{SW}$, $H^{SE}$, and $O^{SE}$.
By $A_{\rm rows}[1 \dd |V|]$ and $A_{\rm cols}[1 \dd |V|]$ we denote arrays defined by
$A_{\rm rows}[i] = \Rows{\Exp{G}{N_i}}$ and
$A_{\rm cols}[i] = \Cols{\Exp{G}{N_i}}$.
By $A_{\rm rhs}[1 \dd |V|]$ we denote an array defined such that,
for $i \in [1 \dd |V|]$,
\begin{itemize}
\item If $|\Rhs{G}{N_i}| = 1$, then $A_{\rm rhs}[i] = \Rhs{G}{N_i}$,
\item Otherwise, $A_{\rm rhs}[i] = (i',i'')$, where $i', i'' \in [1 \dd |V|]$
  are such that $\Rhs{G}{N_i} = N_{i'} N_{i''}$.
\end{itemize}
By $A_{\rm horiz}[1 \dd |V|]$ we denote an array defined so
that, for every $i \in [1 \dd |V|]$, $A_{\rm horiz}[i] \in \{0,1\}$ is
such that $A_{\rm horiz}[i] = 1$ holds if and only if $N_i \in V_h$.

\paragraph{Components}

The data structure consists of the following components:
\begin{enumerate}
\item The array $A_{\rm rhs}$ using $\bigO(|V|)$ space.
\item The arrays $A_{\rm rows}$ and $A_{\rm cols}$ using $\bigO(|V|)$ space.
\item The array $A_{\rm horiz}$ using $\bigO(|V|)$ space.
\item The arrays
  $H^{NW}$, $O^{NW}$,
  $H^{NE}$, $O^{NE}$,
  $H^{SW}$, $O^{SW}$,
  $H^{SE}$, $O^{SE}$
  using
  $\bigO(|V| \log^2 n)$
  space.
\end{enumerate}
In total, the data structure needs
$\bigO(|V| \log^2 n) = \bigO(|G| \log^2 n)$
space.

\subsection{Implementation of Queries}\label{sec:2d-log-queries}

\begin{definition}[2D access function]\label{def:2d-access}
  Let $N \in V$,
  $m_r = \Rows{\Exp{G}{N}}$, and
  $m_c = \Cols{\Exp{G}{N}}$.
  For every
  $\delta_r \in [1 \dd m_r]$,
  $\delta_c \in [1 \dd m_c]$,
  $c_r \in \{\DirTop, \DirBottom\}$, and
  $c_c \in \{\DirLeft, \DirRight\}$,
  we define
  \vspace{1ex}
  \[
    \AccessTwoDim{G}{N}{\delta_r}{\delta_c}{c_r}{c_c} =
      \begin{cases}
        \Exp{G}{N}[\delta_r, \delta_c]                     & \text{if } (c_r, c_c) = (\DirTop, \DirLeft),\\
        \Exp{G}{N}[\delta_r, m_c - \delta_c + 1]           & \text{if } (c_r, c_c) = (\DirTop, \DirRight),\\
        \Exp{G}{N}[m_r - \delta_r + 1, \delta_c]           & \text{if } (c_r, c_c) = (\DirBottom, \DirLeft),\\
        \Exp{G}{N}[m_r - \delta_r + 1, m_c - \delta_c + 1] & \text{if } (c_r, c_c) = (\DirBottom, \DirRight).\\
      \end{cases}
  \]
  \vspace{1ex}
\end{definition}

\begin{definition}[Top-left corner mapping]\label{def:top-left-map}
  Let $N \in V$.
  Denote
  $m_r = \Rows{\Exp{G}{N}}$ and
  $m_c = \Cols{\Exp{G}{N}}$.
  Consider any
  $\LevelId_r, \LevelId_c \in \Zn$,
  $\delta_r \in [1 \dd m_r]$, and
  $\delta_c \in [1 \dd m_c]$
  satisfying
  $\delta_r \leq 2^{\LevelId_r+1}$ and
  $\delta_c \leq 2^{\LevelId_c+1}$.
  Denote
  $k_r = \lceil \tfrac{\delta_r}{2^{\LevelId_r}} \rceil - 1 \in \{0,1\}$,
  $b_r = k_r \cdot 2^{\LevelId_r}$,
  $e_r = \min(m_r, (k_r + 1) \cdot 2^{\LevelId_r})$,
  $k_c = \lceil \tfrac{\delta_c}{2^{\LevelId_c}} \rceil - 1 \in \{0,1\}$,
  $b_c = k_c \cdot 2^{\LevelId_c}$,
  $e_c = \min(m_c, (k_c + 1) \cdot 2^{\LevelId_c})$, and
  $H = \HookTwoDim{G}{N}{b_r}{b_c}{e_r}{e_c}$.
  \begin{itemize}
  \item
    If $e_r - b_r = 1$ and $e_c - b_c = 1$,
    then we define
    \[
      \TopLeftMap{G}{N}{\LevelId_r}{\LevelId_c}{\delta_r}{\delta_c} = (H, 1, 1, \DirTop, \DirLeft).
    \]
  \item
    Otherwise, letting
    $(\alpha_r, \alpha_c) = \OffsetTwoDim{G}{N}{b_r}{b_c}{e_r}{e_c}$ and
    $X, Y \in V$ be such that $\Rhs{G}{H} = XY$,
    we consider two cases:
    \begin{itemize}
    \item If $H \in V_h$, then, letting $\ell = \Rows{\Exp{G}{X}}$,
      we define $\TopLeftMap{G}{N}{\LevelId_r}{\LevelId_c}{\delta_r}{\delta_c}$ as:
      \vspace{1ex}
      \[
        \begin{cases}
          (X, (\ell - \alpha_r) - (\delta_r - b_r) + 1, \alpha_c + (\delta_c - b_c), \DirBottom, \DirLeft) & \text{if }\delta_r - b_r \leq \ell - \alpha_r,\\
          (Y, (\delta_r - b_r) - (\ell - \alpha_r), \alpha_c + (\delta_c - b_c), \DirTop, \DirLeft)        & \text{otherwise}.\\
        \end{cases}
      \]
      \vspace{1ex}
    \item If $H \in V_v$, then, letting $\ell = \Cols{\Exp{G}{X}}$,
      we define $\TopLeftMap{G}{N}{\LevelId_r}{\LevelId_c}{\delta_r}{\delta_c}$ as:
      \vspace{1ex}
      \[
        \begin{cases}
          (X, \alpha_r + (\delta_r - b_r), (\ell - \alpha_c) - (\delta_c - b_c) + 1, \DirTop, \DirRight) & \text{if }\delta_c - b_c \leq \ell - \alpha_c,\\
          (Y, \alpha_r + (\delta_r - b_r), (\delta_c - b_c) - (\ell - \alpha_c), \DirTop, \DirLeft)     & \text{otherwise}.\\
        \end{cases}
      \]
      \vspace{1ex}
    \end{itemize}
  \end{itemize}
\end{definition}

\begin{remark}\label{rm:top-left-map}
  Note that $X$ and $Y$ in \cref{def:top-left-map} are well-defined when
  $e_r - b_r > 1$ or $e_c - b_c > 1$,
  since in this case we have
  $H \in V_h \cup V_v$ (\cref{lm:2d-hook}).
\end{remark}

\noindent
We analogously define the other three corner mappings:
\begin{itemize}
\item $\TopRightMap{G}{N}{\LevelId_r}{\LevelId_c}{\delta_r}{\delta_c}$,
\item $\BottomLeftMap{G}{N}{\LevelId_r}{\LevelId_c}{\delta_r}{\delta_c}$, and
\item $\BottomRightMap{G}{N}{\LevelId_r}{\LevelId_c}{\delta_r}{\delta_c}$.
\end{itemize}

\begin{lemma}\label{lm:top-left-map}
  Let $N \in V$,
  $m_r = \Rows{\Exp{G}{N}}$, and
  $m_c = \Cols{\Exp{G}{N}}$.
  Consider any $\LevelId_r, \LevelId_c \in \Zn$,
  $\delta_r \in [1 \dd m_r]$ and
  $\delta_c \in [1 \dd m_c]$
  satisfying
  $\delta_r \leq 2^{\LevelId_r+1}$ and
  $\delta_c \leq 2^{\LevelId_c+1}$.
  Then, the tuple
  $(N', \delta_r', \delta_c', c_r, c_c) = \TopLeftMap{G}{N}{\LevelId_r}{\LevelId_c}{\delta_r}{\delta_c}$
  (\cref{def:top-left-map})
  satisfies:
  \begin{enumerate}
  \item
    $\delta_r' \in [1 \dd \Rows{\Exp{G}{N'}}]$ and
    $\delta_c' \in [1 \dd \Cols{\Exp{G}{N'}}]$,
  \item
    ($\delta_r' \leq 2^{\LevelId_r}$ and $\delta_c' \leq \delta_c$) or
    ($\delta_c' \leq 2^{\LevelId_c}$ and $\delta_r' \leq \delta_r$), and
  \item
    $\AccessTwoDim{G}{N}{\delta_r}{\delta_c}{\DirTop}{\DirLeft} =
    \AccessTwoDim{G}{N'}{\delta_r'}{\delta_c'}{c_r}{c_c}$ (\cref{def:2d-access}).
  \end{enumerate}
  Moreover, if $\LevelId_r = 0$ and $\LevelId_c = 0$, then $N' \in V_l$.
\end{lemma}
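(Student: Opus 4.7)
The plan is to follow the template of the proof of Lemma~\ref{lm:left-map}, handling row and column coordinates in parallel but using Lemma~\ref{lm:2d-hook} in place of Lemma~\ref{lm:hook}. I begin by introducing the derived quantities $k_r, k_c, b_r, b_c, e_r, e_c, H$ exactly as in Definition~\ref{def:top-left-map}, and establish 1D-style preliminary bounds in each coordinate: from $k_r = \lceil \delta_r / 2^{\LevelId_r} \rceil - 1$ and $\delta_r \leq m_r$ I obtain $b_r < \delta_r \leq e_r$ and $e_r - b_r \leq 2^{\LevelId_r}$, and the analogous inequalities for the column coordinate. Let $(\alpha_r,\alpha_c) = \OffsetTwoDim{G}{N}{b_r}{b_c}{e_r}{e_c}$ and $\beta_r = \alpha_r+(e_r-b_r)$, $\beta_c = \alpha_c+(e_c-b_c)$; Lemma~\ref{lm:2d-hook} gives $\alpha_r \leq b_r$, $\alpha_c \leq b_c$, $0 \leq \alpha_r < \beta_r \leq \Rows{\Exp{G}{H}}$, $0 \leq \alpha_c < \beta_c \leq \Cols{\Exp{G}{H}}$, and the crucial rewriting $\Exp{G}{N}(b_r \dd e_r](b_c \dd e_c] = \Exp{G}{H}(\alpha_r \dd \beta_r](\alpha_c \dd \beta_c]$.

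I then split into the three cases prescribed by Definition~\ref{def:top-left-map}. In the terminal case $e_r - b_r = e_c - b_c = 1$, Lemma~\ref{lm:2d-hook}\eqref{lm:2d-hook-it-1} gives $H \in V_l$, and the output $(H,1,1,\DirTop,\DirLeft)$ makes all three main claims immediate; moreover, $\LevelId_r = \LevelId_c = 0$ forces precisely this case (since then $e_r - b_r \leq 1$ and $e_c - b_c \leq 1$ together with $b_r < e_r$ and $b_c < e_c$), so the ``moreover'' clause follows. Otherwise, Lemma~\ref{lm:2d-hook}\eqref{lm:2d-hook-it-2} yields $H \in V_h \cup V_v$, so $X, Y$ with $\Rhs{G}{H} = XY$ are well-defined (Remark~\ref{rm:top-left-map}), and Lemma~\ref{lm:2d-hook}\eqref{lm:2d-hook-it-2} further gives $\alpha_r < \ell < \beta_r$ when $H \in V_h$ (for $\ell = \Rows{\Exp{G}{X}}$) and $\alpha_c < \ell < \beta_c$ when $H \in V_v$ (for $\ell = \Cols{\Exp{G}{X}}$).

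In the case $H \in V_h$, the argument for the \emph{row} coordinate is the verbatim analog of the three-case analysis in Lemma~\ref{lm:left-map}: the two sub-branches $\delta_r - b_r \leq \ell - \alpha_r$ and $\delta_r - b_r > \ell - \alpha_r$ yield $\delta_r' \in [1 \dd \Rows{\Exp{G}{N'}}]$ and $\delta_r' \leq e_r - b_r \leq 2^{\LevelId_r}$, using $\alpha_r \geq 0$, $\ell < \beta_r \leq \Rows{\Exp{G}{H}}$, and $\beta_r - \alpha_r = e_r - b_r$. The genuinely new ingredient is the \emph{column} coordinate, which is set to $\delta_c' = \alpha_c + (\delta_c - b_c)$ regardless of sub-branch. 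Since $N' \in \{X,Y\}$ with $H \in V_h$, we have $\Cols{\Exp{G}{N'}} = \Cols{\Exp{G}{H}} \geq \beta_c$, and combining $\alpha_c \geq 0$ with $b_c < \delta_c \leq e_c$ gives $1 \leq \delta_c' \leq \beta_c \leq \Cols{\Exp{G}{N'}}$; the inequality $\alpha_c \leq b_c$ from Lemma~\ref{lm:2d-hook} gives $\delta_c' \leq \delta_c$, producing the first disjunct of claim~2. The case $H \in V_v$ is symmetric, producing the second disjunct.

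For claim~3, I use the rectangle identity $\Exp{G}{N}(b_r \dd e_r](b_c \dd e_c] = \Exp{G}{H}(\alpha_r \dd \beta_r](\alpha_c \dd \beta_c]$ to rewrite $\Exp{G}{N}[\delta_r,\delta_c]$ as an entry of $\Exp{G}{H}$, then split $\Exp{G}{H}$ at row $\ell$ (if $H \in V_h$) or column $\ell$ (if $H \in V_v$) into $\Exp{G}{X}$ and $\Exp{G}{Y}$. In the sub-branch $\delta_r - b_r \leq \ell - \alpha_r$ with $H \in V_h$, the target row lies in $\Exp{G}{X}$ at position $\ell - ((\ell - \alpha_r) - (\delta_r - b_r))$ from the top, which is exactly $\delta_r'$ counted \emph{from the bottom}; this is precisely why $c_r = \DirBottom$ appears in the output, and matching this with Definition~\ref{def:2d-access} closes the equality. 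The remaining three sub-branches follow by the same direction bookkeeping. The main obstacle I expect is this last piece, namely cleanly reconciling the four direction flags produced by Definition~\ref{def:top-left-map} against the index reversals baked into $\AccessTwoDim{G}{\cdot}{\cdot}{\cdot}{\cdot}{\cdot}$; the rest is essentially arithmetic reused from the 1D proof.
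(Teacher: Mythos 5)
Your proposal is correct and follows essentially the same route as the paper's proof: the same derived quantities $b_r,e_r,b_c,e_c,H,(\alpha_r,\alpha_c),(\beta_r,\beta_c)$, the same reliance on \cref{lm:2d-hook} (including $\alpha_c \leq b_c$ for the $\delta_c' \leq \delta_c$ part of claim~2 and the rectangle identity for claim~3), the same case split (terminal case, $H \in V_h$ with two sub-branches, $H \in V_v$ symmetric), and the same handling of the $\DirBottom$/$\DirRight$ index reversal and of the ``moreover'' clause via the terminal case. No gaps beyond routine symmetric bookkeeping that the paper itself also leaves as ``analogous.''
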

\begin{proof}

  Denote:
  \begin{itemize}
  \item
    $k_r = \lceil \tfrac{\delta_r}{2^{\LevelId_r}} \rceil - 1 \in \{0,1\}$,
    $b_r = k_r \cdot 2^{\LevelId_r}$,
    $e_r = \min(m_r, (k_r + 1) \cdot 2^{\LevelId_r})$,
  \item
    $k_c = \lceil \tfrac{\delta_c}{2^{\LevelId_c}} \rceil - 1 \in \{0,1\}$,
    $b_c = k_c \cdot 2^{\LevelId_c}$,
    $e_c = \min(m_c, (k_c + 1) \cdot 2^{\LevelId_c})$,
  \item $H = \HookTwoDim{G}{N}{b_r}{b_c}{e_r}{e_c}$ (\cref{def:2d-hook}),
  \item $(\alpha_r, \alpha_c) = \OffsetTwoDim{G}{N}{b_r}{b_r}{e_r}{e_c}$ (\cref{def:2d-offset}), and
  \item $(\beta_r, \beta_c) = (\alpha_r + (e_r - b_r), \alpha_c + (e_c - b_c))$.
  \end{itemize}

  Observe that
  $b_r = k_r \cdot 2^{\LevelId_r} =
  (\lceil \tfrac{\delta_r}{2^{\LevelId_r}} \rceil - 1) \cdot 2^{\LevelId_r} =
  (\lfloor \tfrac{\delta_r + 2^{\LevelId_r} - 1}{2^{\LevelId_r}} \rfloor - 1) \cdot 2^{\LevelId_r} \leq
  \delta_r + 2^{\LevelId_r} - 1 - 2^{\LevelId_r} < \delta_r$.
  On the other hand, by
  $\delta_r \leq \lceil \tfrac{\delta_r}{2^{\LevelId_r}} \rceil 2^{\LevelId_r} = (k_r + 1) \cdot 2^{\LevelId_r}$
  and $\delta_r \leq m_r$,
  it follows that
  $\delta_r \leq \min(m_r, (k_r + 1) \cdot 2^{\LevelId_r}) = e_r$.
  Note also that
  $e_r - b_r \leq (k_r + 1) \cdot 2^{\LevelId_r} - k_r \cdot 2^{\LevelId_r} = 2^{\LevelId_r}$.
  Analogous calculations for $b_c$, $e_c$, and $\delta_c$, altogether establish
  that:
  \begin{itemize}
  \item $b_r < \delta_r \leq e_r$, $e_r - b_r \leq 2^{\LevelId_r}$,
  \item $b_c < \delta_c \leq e_c$, $e_c - b_c \leq 2^{\LevelId_c}$.
  \end{itemize}

  We prove each of the three main claims separately:
  \begin{enumerate}

  \item First, we prove that
    $\delta_r' \in [1 \dd \Rows{\Exp{G}{N'}}]$ and
    $\delta_c' \in [1 \dd \Cols{\Exp{G}{N'}}]$.
    If $e_r - b_r = 1$ and $e_c - b_c = 1$, then by \cref{def:top-left-map}, it holds
    $\delta_r' = 1$ and $\delta_c' = 1$, and the claim follows immediately.
    Let us thus assume that $e_r - b_r > 1$ or $e_c - b_c > 1$.
    Then, it holds $H \in V_h \cup V_v$ (see also \cref{rm:top-left-map}).
    Assume that $H \in V_{h}$ (the claim for $H \in V_{v}$ is proved analogously).
    \begin{itemize}

    \item We first prove that $\delta_c' \in [1 \dd \Cols{\Exp{G}{N'}}]$.
      To this end, first observe that by \cref{def:top-left-map}, we
      have $\delta_c' = \alpha_c + (\delta_c - b_c)$. By \cref{lm:2d-hook}, it holds
      $\alpha_c \geq 0$. On the other hand, above we noted that $\delta_c > b_c$.
      This yields $\delta_c' = \alpha_c + (\delta_c - b_c) \geq 1$.
      To show $\delta_c' \leq \Cols{\Exp{G}{N'}}$, we first observe that regardless
      of which of the two cases holds in \cref{def:top-left-map}, in the case $H \in V_h$
      we always have $\Cols{\Exp{G}{N'}} = \Cols{\Exp{G}{H}}$. Thus, by $\delta_c \leq e_c$
      and $\beta_c \leq \Exp{G}{H}$ (\cref{lm:2d-hook}), we obtain
      $\delta_c' = \alpha_c + (\delta_c - b_c) \leq \alpha_c + (e_c - b_c) =
      \beta_c \leq \Cols{\Exp{G}{H}} = \Cols{\Exp{G}{N'}}$.
      We have thus proved that $\delta_c' \in [1 \dd \Cols{\Exp{G}{N'}}]$.

    \item We now show that $\delta_r' \in [1 \dd \Rows{\Exp{G}{N'}}]$.    
      Let $X, Y \in V$ be such that $\Rhs{G}{H} = XY$ and
      let $\ell = \Rows{\Exp{G}{X}}$.
      We consider two cases:
      \begin{itemize}

      \item If $\delta_r - b_r \leq \ell - \alpha_r$ then, by \cref{def:top-left-map}, it holds
        $N' = X$ and $\delta_r' = (\ell - \alpha_r) - (\delta_r - b_r) + 1$.
        By the assumption $\delta_r - b_r \leq \ell - \alpha_r$, we immediately obtain
        $\delta_r' = (\ell - \alpha_r) - (\delta_r - b_r) + 1 \geq 1$.
        On the other hand, $b_r < \delta_r$ and $\alpha_r \geq 0$ (see \cref{lm:2d-hook}) imply
        $\delta_r' = (\ell - \alpha_r) - (\delta_r - b_r) + 1 \leq \ell = \Rows{\Exp{G}{X}} = \Rows{\Exp{G}{N'}}$.
        We have thus proved $\delta_r' \in [1 \dd \Rows{\Exp{G}{N'}}]$.

      \item Otherwise (i.e., if $\delta_r - b_r > \ell - \alpha_r$), by \cref{def:top-left-map}, it holds
        $N' = Y$ and $\delta_r' = (\delta_r - b_r) - (\ell - \alpha_r)$.
        The assumption $\delta_r - b_r > \ell - \alpha_r$ immediately yields
        $\delta_r' = (\delta_r - b_r) - (\ell - \alpha_r) \geq 1$.
        On the other hand, $\delta_r \leq e_r$ and
        $(e_r - b_r) - (\ell - \alpha_r) = ((e_r - b_r) + \alpha_r) - \ell = \beta_r - \ell \leq
        \Rows{\Exp{G}{H}} - \ell = \Rows{\Exp{G}{Y}}$
        (where $\beta_r \leq \Rows{\Exp{G}{H}}$ follows by \cref{lm:2d-hook}) imply
        $\delta_r' = (\delta_r - b_r) - (\ell - \alpha_r) \leq (e_r - b_r) - (\ell - \alpha_r)
        \leq \Rows{\Exp{G}{Y}} = \Rows{\Exp{G}{N'}}$.
        We have thus proved $\delta_r' \in [1 \dd \Rows{\Exp{G}{N'}}]$.
      \end{itemize}
    \end{itemize}

  \item Second, we prove that
    ($\delta_r' \leq 2^{\LevelId_r}$ and $\delta_c' \leq \delta_c$) or
    ($\delta_c' \leq 2^{\LevelId_c}$ and $\delta_r' \leq \delta_r$).
    If $e_r - b_r = 1$ and $e_c - b_c = 1$, then by \cref{def:top-left-map}, it holds
    $\delta_r' = 1$ and $\delta_c' = 1$,
    and the claim follows immediately.
    Let us thus assume that
    $e_r - b_r > 1$ or $e_c - b_c > 1$.
    Then, it holds
    $H \in V_h \cup V_v$ (see also \cref{rm:top-left-map}).
    Assume that $H \in V_{h}$ (the claim for $H \in V_{v}$ is proved analogously).
    We will show that in this case, it holds
    $\delta_r' \leq 2^{\LevelId_r}$ and $\delta_c' \leq \delta_c$.
    To show that $\delta_c' \leq \delta_c$, we first note that,
    by \cref{def:top-left-map}, in the case $H \in V_h$, we always have
    $\delta_c' = \alpha_c + (\delta_c - b_c)$. By \cref{lm:2d-hook},
    it holds $\alpha_c \leq b_c$.
    We therefore obtain
    $\delta_c' = \alpha_c + (\delta_c - b_c) = \delta_c - (b_c - \alpha_c) \leq \delta_c$.
    It remains to prove that $\delta_r' \leq 2^{\LevelId_r}$.
    Let $X, Y \in V$ be such that $\Rhs{G}{H} = XY$.
    Denote $\ell = \Rows{\Exp{G}{X}}$.
    We consider two cases:
    \begin{itemize}

    \item If $\delta_r - b_r \leq \ell - \alpha_r$ then, by \cref{def:top-left-map}, we have
      $\delta_r' = (\ell - \alpha_r) - (\delta_r - b_r) + 1$.
      Note that $\delta_r > b_r$ implies that
      $\delta_r' = (\ell - \alpha_r) - (\delta_r - b_r) + 1 \leq \ell - \alpha_r$.
      On the other hand, by
      $\ell < \beta_r$ (\cref{lm:2d-hook}) and the definition of $\beta_r$,
      we have $\ell - \alpha_r \leq \beta_r - \alpha_r = e_r - b_r$.
      Thus,
      $\delta_r' \leq \ell - \alpha_r \leq e_r - b_r \leq 2^{\LevelId_r}$.

    \item Otherwise (i.e., if $\delta_r - b_r > \ell - \alpha_r$), by \cref{def:top-left-map},
      we have $\delta_r' = (\delta_r - b_r) - (\ell - \alpha_r)$.
      To show $\delta_r' \leq 2^{\LevelId_r}$,
      it suffices to note that by $\ell > \alpha_r$ (\cref{lm:2d-hook})
      and $\delta_r' \leq (e_r - b_r) - (\ell - \alpha_r)$
      (shown above in the analogous case),
      it follows that
      $\delta_r' \leq (e_r - b_r) - (\ell - \alpha_r) \leq e_r - b_r \leq 2^{\LevelId_r}$.
    \end{itemize}

  \item Finally, we prove that it holds
    $\AccessTwoDim{G}{N}{\delta_r}{\delta_c}{\DirTop}{\DirLeft} =
    \AccessTwoDim{G}{N'}{\delta_r'}{\delta_c'}{c_r}{c_c}$.
    If $e_r - b_r = 1$ and $e_c - b_c = 1$, then by \cref{def:top-left-map},
    we have
    $(N', \delta_r', \delta_c', c_r, c_c) = (H, 1, 1, \DirTop, \DirLeft)$.
    Observe that by $b_r < \delta_r \leq e_r$ and $b_c < \delta_c \leq e_c$,
    we then must have
    $\delta_r = e_r$ and $\delta_c = e_c$.
    On the other hand, by
    \cref{lm:2d-hook},
    we then have
    $\Exp{G}{N}[e_r,e_c] = \Exp{G}{H}[1,1]$.
    Thus,
    \begin{align*}
      \AccessTwoDim{G}{N}{\delta_r}{\delta_c}{\DirTop}{\DirLeft}
        &= \AccessTwoDim{G}{N}{e_r}{e_c}{\DirTop}{\DirLeft}\\
        &= \Exp{G}{N}[e_r,e_c]\\
        &= \Exp{G}{H}[1,1]\\
        &= \Exp{G}{N'}[\delta_r',\delta_c']\\
        &= \AccessTwoDim{G}{N'}{\delta_r'}{\delta_c'}{\DirTop}{\DirLeft}\\
        &= \AccessTwoDim{G}{N'}{\delta_r'}{\delta_c'}{c_r}{c_c}.
    \end{align*}
    Let us now assume that $e_r - b_r > 1$ or $e_c - b_c > 1$.
    Then, it holds $H \in V_h \cup V_v$ (see also \cref{rm:top-left-map}).
    Assume that $H \in V_h$ (the proof in the case $H \in V_v$ is analogous).
    Let $X, Y \in V$ be such that $\Rhs{G}{H} = XY$.
    Denote $\ell = \Rows{\Exp{G}{X}}$.
    We consider two cases:
    \begin{itemize}

    \item If $\delta_r - b_r \leq \ell - \alpha_r$ then, by \cref{def:top-left-map}, we have
      $(N', \delta_r', \delta_c', c_r, c_c) =
      (X, (\ell - \alpha_r) - (\delta_r - b_r) + 1, \alpha_c + (\delta_c - b_c), \DirBottom, \DirLeft)$.
      Combining $\delta_r \leq b_r + (\ell - \alpha_r)$ (the assumption) with
      $b_r < \delta_r$ (proved above),
      we obtain
      $b_r < \delta_r \leq b_r + (\ell - \alpha_r)$.
      Furthermore, $b_c < \delta_c \leq e_c$.
      On the other hand, it holds
      $w(b_r \dd e_r](b_c \dd e_c] = \Exp{G}{H}(\alpha_r \dd \beta_r](\alpha_c \dd \beta_c]$
      and $\alpha_r < \ell < \beta_r$ (\cref{lm:2d-hook}).
      Thus, for every
      $t_r \in (b_r \dd b_r + (\ell - \alpha_r)]$ and $t_c \in (b_c \dd e_c]$,
      it holds
      $w[t_r, t_c]
        = \Exp{G}{H}[\alpha_r + (t_r - b_r), \alpha_c + (t_c - b_c)]
        = \Exp{G}{X}[\alpha_r + (t_r - b_r), \alpha_c + (t_c - b_c)]
        = \Exp{G}{X}[\ell - ((\ell - \alpha_r) - (t_r - b_r)), \alpha_c + (t_c - b_c)]$.
      In particular,
      $w[\delta_r,\delta_c] =
      \Exp{G}{X}[\ell - \delta_r' + 1, \delta_c']$.
      Thus,
      \begin{align*}
        \AccessTwoDim{G}{N}{\delta_r}{\delta_c}{\DirTop}{\DirLeft}
          &= w[\delta_r,\delta_c]\\
          &= \Exp{G}{X}[\ell - \delta_r' + 1,\delta_c']\\
          &= \AccessTwoDim{G}{X}{\delta_r'}{\delta_c'}{\DirBottom}{\DirLeft}\\
          &= \AccessTwoDim{G}{N'}{\delta_r'}{\delta_c'}{c_r}{c_c}.
      \end{align*}

    \item Otherwise (i.e., if $\delta_r - b_r > \ell - \alpha_r$),
      by \cref{def:top-left-map}, we have
      $(N', \delta_r', \delta_c', c_r, c_c) =
      (Y, (\delta_r - b_r) - (\ell - \alpha_r), \alpha_c + (\delta_c - b_c), \DirTop, \DirLeft)$.
      Combining
      $\delta_r > b_r + (\ell - \alpha_r)$ (the assumption) with
      $\delta_r \leq e_r$ (proved above),
      we obtain
      $b_r + (\ell - \alpha_r) < \delta_r \leq e_r$.
      Furthermore, $b_c < \delta_c \leq e_c$.
      On the other hand, it holds
      $w(b_r \dd e_r](b_c \dd e_c] = \Exp{G}{H}(\alpha_r \dd \beta_r](\alpha_c \dd \beta_c]$
      and $\alpha_r < \ell < \beta_r$ (\cref{lm:2d-hook}).
      Thus, for every
      $t_r \in (b_r + (\ell - \alpha_r) \dd e_r]$ and $t_c \in (b_c \dd e_c]$,
      it holds
      $w[t_r,t_c]
        = \Exp{G}{H}[\alpha_r + (t_r - b_r), \alpha_c + (t_c - b_c)]
        = \Exp{G}{Y}[\alpha_r + (t_r - b_r) - \ell, \alpha_c + (t_c - b_c)]
        = \Exp{G}{Y}[(t_r - b_r) - (\ell - \alpha_r), \alpha_c + (t_c - b_c)]$.
      In particular,
      $w[\delta_r, \delta_c] = \Exp{G}{Y}[\delta_r', \delta_c']$.
      Thus,
      \begin{align*}
        \AccessTwoDim{G}{N}{\delta_r}{\delta_c}{\DirTop}{\DirLeft}
          &= w[\delta_r,\delta_c]\\
          &= \Exp{G}{Y}[\delta_r',\delta_c']\\
          &= \AccessTwoDim{G}{Y}{\delta_r'}{\delta_c'}{\DirTop}{\DirLeft}\\
          &= \AccessTwoDim{G}{N'}{\delta_r'}{\delta_c'}{c_r}{c_c}.
      \end{align*}
    \end{itemize}
  \end{enumerate}

  We now show the remaining claim, i.e., that
  $\LevelId_r = 0$ and $\LevelId_c = 0$ imply $N' \in V_l$.
  By the above, $b_r < e_r$, $b_c < e_c$, $e_r - b_r \leq 2^{\LevelId_r}$, and $e_c - b_c \leq 2^{\LevelId_c}$.
  Thus, if $\LevelId_r = 0$ and $\LevelId_c = 0$, we must have $(b_r,e_r) = (0,1)$ and $(b_c,e_c) = (0,1)$.
  By \cref{def:top-left-map}, we then have $N' = \HookTwoDim{G}{N}{b_r}{b_c}{e_r}{e_c} = H$.
  It remains to observe that by \cref{lm:2d-hook}\eqref{lm:2d-hook-it-1},
  we then have $H \in V_l$.
\end{proof}

\begin{proposition}\label{pr:2d-log-query}
  Given the data structure from
  \cref{sec:2d-log-structure}
  and any
  $(i,j) \in [1 \dd \Rows{\Exp{G}{S}}] \times [1 \dd \Cols{\Exp{G}{S}}]$,
  we can compute $\Exp{G}{S}[i,j]$ in
  $\bigO(\log n)$
  time.
\end{proposition}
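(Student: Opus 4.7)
The plan is to mirror the argument of \cref{pr:1d-log-query} but in two dimensions, using the four corner mappings from \cref{sec:2d-log-queries}. The query algorithm maintains a tuple $(N,\delta_r,\delta_c,c_r,c_c,\LevelId_r,\LevelId_c)$ with the invariant
\[
  \Exp{G}{S}[i,j] \;=\; \AccessTwoDim{G}{N}{\delta_r}{\delta_c}{c_r}{c_c},
\]
together with the precondition-preserving bounds $\delta_r \le 2^{\LevelId_r+1}$ and $\delta_c \le 2^{\LevelId_c+1}$ required by the boundary-mapping lemmas. It is initialized with $(N,\delta_r,\delta_c,c_r,c_c) = (S,i,j,\DirTop,\DirLeft)$ and $(\LevelId_r,\LevelId_c) = (\lceil \log m_r \rceil, \lceil \log m_c \rceil)$, where $m_r = \Rows{\Exp{G}{S}}$ and $m_c = \Cols{\Exp{G}{S}}$; the initial bounds hold since $i\le m_r \le 2^{\LevelId_r}$ and $j\le m_c \le 2^{\LevelId_c}$.

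Each iteration inspects the current direction $(c_r,c_c)$ and applies the corresponding corner mapping from \cref{def:top-left-map} (or its three analogues). Using $A_{\rm rows}$, $A_{\rm cols}$, $A_{\rm rhs}$, $A_{\rm horiz}$, and one lookup in the appropriate hook/offset array from \cref{sec:2d-log-structure}, this can be carried out in $\bigO(1)$ time. By \cref{lm:top-left-map} (and its three symmetric variants), the new tuple falls into one of two cases: either $\delta_r' \le 2^{\LevelId_r}$ with $\delta_c' \le \delta_c$ (which happens when the hook is horizontal), or $\delta_c' \le 2^{\LevelId_c}$ with $\delta_r' \le \delta_r$ (when the hook is vertical). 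In the first case, the algorithm decrements $\LevelId_r$; in the second, it decrements $\LevelId_c$. Which case occurs is determined by $A_{\rm horiz}$ at the hook, or equivalently by comparing $\delta_r'$ to $2^{\LevelId_r}$.

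Because each iteration strictly decreases $\LevelId_r + \LevelId_c$ by one, after at most $\lceil\log m_r\rceil + \lceil\log m_c\rceil = \bigO(\log n)$ iterations we reach $\LevelId_r = \LevelId_c = 0$. The last clause of \cref{lm:top-left-map} (and of its analogues) then guarantees that the current $N$ lies in $V_{l}$, so $\Exp{G}{N}[1,1] = \Rhs{G}{N}$ can be read off from $A_{\rm rhs}$ in constant time. By the maintained access invariant, this symbol equals $\Exp{G}{S}[i,j]$. Since each iteration runs in $\bigO(1)$ time, the total cost is $\bigO(\log n)$.

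The main point to verify is that the precondition $\delta_r \le 2^{\LevelId_r+1}$ and $\delta_c \le 2^{\LevelId_c+1}$ is preserved across iterations. After a horizontal step, the bound $\delta_r' \le 2^{\LevelId_r} = 2^{(\LevelId_r-1)+1}$ comes directly from the lemma, while $\delta_c' \le \delta_c \le 2^{\LevelId_c+1}$ carries over because $\LevelId_c$ is unchanged; the vertical case is symmetric. This is precisely where the strengthened 2D bookmark guarantee --- that a mapping reducing one direction never inflates the distance in the other --- is used, and it is the only nontrivial step of the argument; everything else is routine bookkeeping inherited from the 1D analogue in \cref{pr:1d-log-query}.
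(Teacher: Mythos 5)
Your high-level strategy is the same as the paper's: maintain the access invariant through the four corner mappings, spend $\bigO(1)$ time per step via the stored hook/offset tables, and charge each step to a decrement of $\LevelId_r$ or $\LevelId_c$ using the case distinction of \cref{lm:top-left-map}; your verification that the preconditions $\delta_r \le 2^{\LevelId_r+1}$ and $\delta_c \le 2^{\LevelId_c+1}$ survive each step is exactly the intended use of the lemma. The gap is in the endgame. Reaching $\LevelId_r = \LevelId_c = 0$ does \emph{not} imply that the current nonterminal lies in $V_l$. The last clause of \cref{lm:top-left-map} concerns the output of a mapping that is \emph{invoked with} $\LevelId_r = 0$ and $\LevelId_c = 0$; in your loop the levels are decremented \emph{after} the mapping, so the last mapping you ever perform is invoked at $(1,0)$ or $(0,1)$, and its output is merely a child of a horizontal (resp.\ vertical) hook. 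Such a child can have an arbitrarily large expansion: for instance, with $\LevelId_c = 0$ and $\LevelId_r = 1$ the block is $2\times 1$, its hook is some $H \in V_h$ whose split separates the two block rows, and the returned child $X$ or $Y$ can be a huge submatrix --- at that point all you know is $\delta_r,\delta_c \le 2$. Hence the concluding step, reading the answer from $A_{\rm rhs}$ at that moment, is not justified (and is false in general). The repair is cheap --- perform one more mapping at levels $(0,0)$, whose output is a literal by the last clause, or simply iterate until the current variable is literal --- but as written the proof asserts something untrue at its final step.

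A related loose end is the case where the hook is a literal while only one of the two levels is $0$ (this happens whenever the current block degenerates to $1\times 1$ even though a level is still positive): then neither disjunct is tied to $A_{\rm horiz}$, and your tie-break of comparing $\delta_r'$ with $2^{\LevelId_r}$ can decrement a level that is already $0$, taking it to $-1$, outside the domain $\Zn$ of \cref{def:top-left-map} and \cref{lm:top-left-map} (e.g., $b_r = k_r\cdot 2^{\LevelId_r}$ stops being an integer). Note that the paper's algorithm sidesteps both issues with a different level update: after each mapping it lowers $\LevelId_r$ while $2^{\LevelId_r} > A_{\rm rows}[t]$ and $\LevelId_c$ while $2^{\LevelId_c} > A_{\rm cols}[t]$; since expansion dimensions are non-increasing along the walk, $\LevelId_r = \LevelId_c = 0$ at loop exit then really forces $\Rows{\Exp{G}{N_t}} = \Cols{\Exp{G}{N_t}} = 1$, i.e., a literal. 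Your per-step invariant argument is fine; only this termination/read-off bookkeeping needs to be fixed to match the claimed $\bigO(\log n)$ algorithm.
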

\begin{proof}
  The pseudocode of the algorithm is presented in
  \cref{fig:2d-log-access}.
  To compute the symbol
  $\Exp{G}{S}[i,j]$,
  we call
  $\texttt{RandomAccess}(i,j)$.
  At the beginning of the algorithm, we set
  $t$, $\delta_r$, $\delta_c$, $c_r$, and $c_c$
  so that it holds
  $\AccessTwoDim{G}{N_t}{\delta_r}{\delta_c}{c_r}{c_c} = \Exp{G}{S}[i,j]$.
  In each iteration of the while loop we then
  compute variables
  $t'$, $\delta'_r$, $\delta'_c$, $c'_r$, and $c'_c$
  such that
  \[
    \AccessTwoDim{G}{N_{t}}{\delta_r}{\delta_c}{c_r}{c_c} =
    \AccessTwoDim{G}{N_{t'}}{\delta'_r}{\delta'_c}{c'_r}{c'_c}
  \]
  By \cref{lm:top-left-map}, each iteration of the while loop
  decreases one of the two variables $p_r$ and $p_c$
  indicating the current level in the data structure.
  Once the algorithm reaches $p_r = 0$ and $p_c = 0$,
  it must hold 
  $\Rows{\Exp{G}{N_t}} = \Cols{\Exp{G}{N_t}} = 1$.
  Thus, $\Rhs{G}{N_t} \in \Sigma$, and hence
  $\Rhs{G}{N_t} = \Exp{G}{N_t}[1,1] = \Exp{G}{S}[i,j]$.
  The total running time is $\bigO(\log n)$.
\end{proof}

We thus obtain the following result.

\begin{theorem}\label{th:2d-log}
  For every 2D SLP $G$ of size $|G| = g$ representing a 2D string $T \in \Sigma^{r \times c}$, there exists
  a data structure of size $\bigO(g \log^2 n)$, where $n = \max(r,c)$, that,
  given any $(i,j) \in [1 \dd r] \times [1 \dd c]$, returns $T[i,j]$ in $\bigO(\log n)$ time.
\end{theorem}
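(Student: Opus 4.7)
The plan is to combine the data structure of \cref{sec:2d-log-structure} with the query algorithm analyzed in \cref{pr:2d-log-query}, since together they already essentially give the theorem; what remains is only to verify the space and time bounds in the form stated.

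First, I would instantiate the construction of \cref{sec:2d-log-structure} on the input 2D SLP $G$. Counting each component separately, the arrays $A_{\rm rhs}$, $A_{\rm rows}$, $A_{\rm cols}$, and $A_{\rm horiz}$ each occupy $\bigO(|V|) = \bigO(g)$ space, while the eight corner arrays $H^{NW}, O^{NW}, H^{NE}, O^{NE}, H^{SW}, O^{SW}, H^{SE}, O^{SE}$ each store, for every variable $N_i$ and every pair of levels $(\LevelId_r, \LevelId_c) \in [0 \dd \lceil \log n \rceil]^{2}$, a constant number of hook/offset entries (four corners, so constantly many $(k_r, k_c)$ values per level). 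This yields $\bigO(|V| \cdot \log^2 n) = \bigO(g \log^2 n)$ space in total, which matches the claimed bound.

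Next, for the query time, I would invoke \cref{pr:2d-log-query}, which shows that given this data structure, the algorithm of \cref{fig:2d-log-access} returns $\Exp{G}{S}[i,j]$ in $\bigO(\log n)$ time. The correctness rests on \cref{lm:top-left-map} (and its three analogues for the other corners), which guarantee that each iteration strictly decreases one of the two level indices $p_r, p_c$ while preserving the invariant $\AccessTwoDim{G}{N_t}{\delta_r}{\delta_c}{c_r}{c_c} = \Exp{G}{S}[i,j]$; after $\bigO(\log n)$ iterations both levels reach $0$, at which point the active nonterminal lies in $V_l$ by the final clause of \cref{lm:top-left-map}, and its right-hand side is the desired symbol.

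No step of the proof is genuinely difficult, since all the structural work has already been done in the preceding subsections; the only item to double-check is that the space accounting truly gives $\bigO(g \log^2 n)$ and not a larger polylog factor, which follows because each of the eight corner arrays has only a constant number of $(k_r,k_c)$-entries per variable per level pair. Thus the theorem follows directly from \cref{sec:2d-log-structure} and \cref{pr:2d-log-query}.
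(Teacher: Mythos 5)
Your proposal is correct and follows essentially the same route as the paper: the theorem is stated there as an immediate consequence of the data structure of \cref{sec:2d-log-structure} (whose space accounting is exactly the $\bigO(g\log^2 n)$ bound you verify) together with \cref{pr:2d-log-query}, whose proof relies on \cref{lm:top-left-map} and its three analogues just as you describe.
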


\section{Random Access using 2D SLPs in Optimal Time}\label{sec:2d-opt}

In this section, we generalize the structure from \cref{sec:2d-log-structure} to
support faster random access queries. The structure from \cref{sec:2d-log-structure}
creates four blocks of size $2^{\LevelId_r} \times 2^{\LevelId_c}$ at level
$(\LevelId_r,\LevelId_c)$ for each of the four corners of each variable expansion,
and stores information regarding each of those blocks. In the generalized version
of the data structure, we instead choose a parameter
$\tau \geq 2$, and for each corner we store $\tau^2$ blocks of size
$\tau^{\LevelId_r} \times \tau^{\LevelId_c}$ at level $(\LevelId_r,\LevelId_c)$.

\subsection{The Data Structure}\label{sec:2d-opt-structure}

\paragraph{Definitions}

The arrays
$H^{NW}$ and $O^{NW}$
are now defined for every
$i \in [1 \dd |V|]$,
$\LevelId_{r}, \LevelId_{c} \in [0 \dd \ceil{\log_{\tau} n}]$, and
$k_r, k_c \in [0 \dd \tau)$
satisfying
$k_r \cdot \tau^{\LevelId_r} < \Rows{\Exp{G}{N_i}}$ and
$k_c \cdot \tau^{\LevelId_c} < \Cols{\Exp{G}{N_i}}$
such that it holds
\begin{align*}
  N_{H^{NW}[i,\LevelId_r,\LevelId_c,k_r,k_c]} &= \HookTwoDim{G}{N_i}{b_r}{b_c}{e_r}{e_c},\\
  O^{NW}[i,\LevelId_r,\LevelId_c,k_r,k_c] &= \OffsetTwoDim{G}{N_i}{b_r}{b_c}{e_r}{e_c},
\end{align*}
where
\begin{itemize}
\item $m_r = \Rows{\Exp{G}{N_i}}$,
\item $m_c = \Cols{\Exp{G}{N_i}}$,
\item $b_r = k_r \cdot \tau^{\LevelId_r}$,
\item $b_c = k_c \cdot \tau^{\LevelId_c}$,
\item $e_r = \min(m_r, (k_r+1) \cdot \tau^{\LevelId_r})$, and
\item $e_c = \min(m_c, (k_c+1) \cdot \tau^{\LevelId_c})$.
\end{itemize}
The arrays $H^{NE}$, $O^{NE}$, $H^{SW}$, $O^{SW}$, $H^{SE}$, and $O^{SE}$
are generalized analogously.

\paragraph{Components}

The data structure consists of the following components:
\begin{enumerate}
\item The array $A_{\rm rhs}$ using $\bigO(|V|)$ space.
\item The arrays $A_{\rm rows}$ and $A_{\rm cols}$ using $\bigO(|V|)$ space.
\item The array $A_{\rm horiz}$ using $\bigO(|V|)$ space.
\item The arrays
  $H^{NW}$, $O^{NW}$,
  $H^{NE}$, $O^{NE}$,
  $H^{SW}$, $O^{SW}$,
  $H^{SE}$, $O^{SE}$
  using
  $\bigO(|V| \cdot \tau^2 \cdot \log^2_{\tau} n)$
  space.
\end{enumerate}
In total, the data structure needs
$\bigO(|V| \cdot \tau^2 \cdot \log^2_{\tau} n) = \bigO(|G| \cdot \tau^2 \cdot \log^2_{\tau} n)$
space.

\subsection{Implementation of Queries}\label{sec:2d-opt-queries}

\begin{definition}[Generalized top-left corner mapping]\label{def:generalized-top-left-map}
  Let $N \in V$.
  Denote
  $m_r = \Rows{\Exp{G}{N}}$ and
  $m_c = \Cols{\Exp{G}{N}}$.
  Consider any
  $\LevelId_r, \LevelId_c \in \Zn$,
  $\delta_r \in [1 \dd m_r]$, and
  $\delta_c \in [1 \dd m_c]$
  satisfying
  $\delta_r \leq \tau^{\LevelId_r+1}$ and
  $\delta_c \leq \tau^{\LevelId_c+1}$.
  Denote
  $k_r = \lceil \tfrac{\delta_r}{\tau^{\LevelId_r}} \rceil - 1 \in [0 \dd \tau)$,
  $b_r = k_r \cdot \tau^{\LevelId_r}$,
  $e_r = \min(m_r, (k_r + 1) \cdot \tau^{\LevelId_r})$,
  $k_c = \lceil \tfrac{\delta_c}{\tau^{\LevelId_c}} \rceil - 1 \in [0 \dd \tau)$,
  $b_c = k_c \cdot \tau^{\LevelId_c}$,
  $e_c = \min(m_c, (k_c + 1) \cdot \tau^{\LevelId_c})$, and
  $H = \HookTwoDim{G}{N}{b_r}{b_c}{e_r}{e_c}$.
  The rest of the definition of the
  generalized top-left corner mapping
  $\TopLeftMap{G}{N}{\LevelId_r}{\LevelId_c}{\delta_r}{\delta_c}$
  is as in \cref{def:top-left-map}.
\end{definition}

\noindent
The generalized corner mappings:
\begin{itemize}
\item $\TopRightMap{G}{N}{\LevelId_r}{\LevelId_c}{\delta_r}{\delta_c}$,
\item $\BottomLeftMap{G}{N}{\LevelId_r}{\LevelId_c}{\delta_r}{\delta_c}$, and
\item $\BottomRightMap{G}{N}{\LevelId_r}{\LevelId_c}{\delta_r}{\delta_c}$
\end{itemize}
are defined analogously.

\begin{lemma}\label{lm:generalized-top-left-map}
  Let $N \in V$,
  $m_r = \Rows{\Exp{G}{N}}$, and
  $m_c = \Cols{\Exp{G}{N}}$.
  Consider any
  $\LevelId_r, \LevelId_c \in \Zn$,
  $\delta_r \in [1 \dd m_r]$ and
  $\delta_c \in [1 \dd m_c]$
  satisfying
  $\delta_r \leq \tau^{\LevelId_r+1}$ and
  $\delta_c \leq \tau^{\LevelId_c+1}$.
  Then, the tuple
  $(N', \delta_r', \delta_c', c_r, c_c) =
  \TopLeftMap{G}{N}{\LevelId_r}{\LevelId_c}{\delta_r}{\delta_c}$
  (\cref{def:generalized-top-left-map})
  satisfies:
  \begin{enumerate}
  \item
    $\delta_r' \in [1 \dd \Rows{\Exp{G}{N'}}]$ and
    $\delta_c' \in [1 \dd \Cols{\Exp{G}{N'}}]$,
  \item
    ($\delta_r' \leq \tau^{\LevelId_r}$ and $\delta_c' \leq \delta_c$) or
    ($\delta_c' \leq \tau^{\LevelId_c}$ and $\delta_r' \leq \delta_r$), and
  \item
    $\AccessTwoDim{G}{N}{\delta_r}{\delta_c}{\DirTop}{\DirLeft} =
    \AccessTwoDim{G}{N'}{\delta_r'}{\delta_c'}{c_r}{c_c}$ (\cref{def:2d-access}).
  \end{enumerate}
  Moreover, if
  $\LevelId_r = 0$ and $\LevelId_c = 0$,
  then
  $N' \in V_l$.
\end{lemma}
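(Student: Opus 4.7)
The plan is to observe that \cref{def:generalized-top-left-map} coincides with \cref{def:top-left-map} in every respect except that the block side lengths $2^{\LevelId_r}, 2^{\LevelId_c}$ are replaced by $\tau^{\LevelId_r}, \tau^{\LevelId_c}$, and consequently the block indices $k_r, k_c$ now range over $[0 \dd \tau)$ rather than $\{0,1\}$. Since the right-hand side of the definition (the selection between $X$ and $Y$ based on $H \in V_h$ versus $H \in V_v$, and the resulting formulas for $\delta_r', \delta_c', c_r, c_c$) is literally inherited from \cref{def:top-left-map}, the proof of \cref{lm:top-left-map} can be reused almost verbatim, with $\tau$ in place of $2$ at every occurrence.

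First I would re-establish the two boundary inequalities that were the only place where the base $2$ entered the argument of \cref{lm:top-left-map}:
\[
  b_r < \delta_r \leq e_r, \qquad e_r - b_r \leq \tau^{\LevelId_r},
\]
and the analogous pair for the column coordinate. These follow from exactly the same elementary manipulation: from $k_r = \lceil \delta_r / \tau^{\LevelId_r} \rceil - 1$ one gets $b_r = (\lceil \delta_r / \tau^{\LevelId_r} \rceil - 1)\tau^{\LevelId_r} < \delta_r$; from $\delta_r \leq \lceil \delta_r / \tau^{\LevelId_r} \rceil \tau^{\LevelId_r} = (k_r+1)\tau^{\LevelId_r}$ together with $\delta_r \leq m_r$ one gets $\delta_r \leq e_r$; and $e_r - b_r \leq (k_r+1)\tau^{\LevelId_r} - k_r \tau^{\LevelId_r} = \tau^{\LevelId_r}$. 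The hypothesis $\delta_r \leq \tau^{\LevelId_r + 1}$ ensures $k_r \in [0 \dd \tau)$, so the arrays $H^{NW}, O^{NW}$ are in fact defined at the requested index.

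Once these inequalities are in hand, the three numbered claims are proved by literally copying the three blocks of the proof of \cref{lm:top-left-map}: (1) $\delta_r' \in [1 \dd \Rows{\Exp{G}{N'}}]$ and $\delta_c' \in [1 \dd \Cols{\Exp{G}{N'}}]$ via the case split on $H \in V_h$ vs.\ $H \in V_v$ and on $\delta_r - b_r \leq \ell - \alpha_r$ vs.\ the opposite (and symmetrically for the column case), using $\alpha_r, \alpha_c \geq 0$, $\alpha_r \leq b_r$, $\alpha_c \leq b_c$, $\beta_r \leq \Rows{\Exp{G}{H}}$, and $\beta_c \leq \Cols{\Exp{G}{H}}$ from \cref{lm:2d-hook}; (2) the bound $\delta_r' \leq \tau^{\LevelId_r}$ (resp.\ $\delta_c' \leq \tau^{\LevelId_c}$) from $\ell - \alpha_r \leq \beta_r - \alpha_r = e_r - b_r \leq \tau^{\LevelId_r}$ (resp.\ the column analogue), together with the non-increase $\delta_c' \leq \delta_c$ (resp.\ $\delta_r' \leq \delta_r$) from $\alpha_c \leq b_c$ (resp.\ $\alpha_r \leq b_r$); and (3) the access-preservation identity, obtained from \cref{lm:2d-hook} by the same window-translation computation $w(b_r \dd e_r](b_c \dd e_c] = \Exp{G}{H}(\alpha_r \dd \beta_r](\alpha_c \dd \beta_c]$ combined with $\alpha_r < \ell < \beta_r$ (resp.\ $\alpha_c < \ell < \beta_c$).

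For the final clause, $\LevelId_r = 0$ and $\LevelId_c = 0$ force $e_r - b_r = 1$ and $e_c - b_c = 1$, whence the first branch of \cref{def:generalized-top-left-map} returns $N' = H$, and \cref{lm:2d-hook}\eqref{lm:2d-hook-it-1} gives $H \in V_l$, exactly as in the logarithmic case. The only conceptual step is to notice that none of the three arguments in the proof of \cref{lm:top-left-map} exploited the particular value $2$; they all used only that $k_r, k_c$ are integer indices with $k_r \cdot (\text{block side}) \leq \delta_r - 1$ and $(k_r+1)\cdot(\text{block side}) \geq \delta_r$, and the structural facts from \cref{lm:2d-hook}. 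I expect no real obstacle: the main ``difficulty'' is simply resisting the temptation to reprint the entire earlier proof, and instead noting that the case analysis and the invocation of \cref{lm:2d-hook} (and the corresponding analogue of \cref{rm:top-left-map}, which again applies since $e_r - b_r > 1$ or $e_c - b_c > 1$ forces $H \in V_h \cup V_v$) go through with $\tau$ substituted for $2$ throughout.
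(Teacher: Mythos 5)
Your proposal is correct and matches the paper's approach: the paper's own proof of this lemma simply states that it is analogous to the proof of \cref{lm:top-left-map}, which is exactly the substitution-of-$\tau$-for-$2$ argument you spell out (re-deriving $b_r < \delta_r \leq e_r$, $e_r - b_r \leq \tau^{\LevelId_r}$ and their column analogues, then reusing the three-part case analysis via \cref{lm:2d-hook}). No issues.
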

\begin{proof}
  The proof is analogous to the proof of \cref{lm:top-left-map}.
\end{proof}

\begin{proposition}\label{pr:2d-opt-query}
  Given the data structure from
  \cref{sec:2d-opt}
  and any
  pair $(i,j) \in [1 \dd \Rows{\Exp{G}{S}}] \times [1 \dd \Cols{\Exp{G}{S}}]$,
  we can compute
  $\Exp{G}{S}[i,j]$ in
  $\bigO(\log_{\tau} n)$
  time.
\end{proposition}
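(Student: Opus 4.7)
The plan is to mirror the proof of \cref{pr:2d-log-query}, replacing the use of \cref{lm:top-left-map} with its generalized counterpart \cref{lm:generalized-top-left-map} (and the analogous generalized versions for the remaining three corner mappings). Operationally, I would present essentially the same algorithm as the one in \cref{fig:2d-log-access}, but with the base $2$ in the level indexing replaced by $\tau$.

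First, I would initialize the state of the algorithm by setting $t$ to be the index of $S$, $\delta_r = i$, $\delta_c = j$, $c_r = \DirTop$, $c_c = \DirLeft$, and $p_r = p_c = \lceil \log_{\tau} n \rceil$, so that by \cref{def:2d-access} the invariant $\AccessTwoDim{G}{N_t}{\delta_r}{\delta_c}{c_r}{c_c} = \Exp{G}{S}[i,j]$ holds. Note that the initial values $\delta_r, \delta_c$ satisfy the preconditions $\delta_r \leq \tau^{p_r + 1}$ and $\delta_c \leq \tau^{p_c + 1}$ required by \cref{def:generalized-top-left-map}, since $\delta_r \leq \Rows{\Exp{G}{S}} \leq n \leq \tau^{p_r+1}$, and similarly for $\delta_c$.

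In each iteration I would apply the corner mapping dictated by $(c_r, c_c)$, reading the relevant entries from the arrays $H^{NW}, O^{NW}, H^{NE}, O^{NE}, H^{SW}, O^{SW}, H^{SE}, O^{SE}$ (together with $A_{\rm rhs}$, $A_{\rm rows}$, $A_{\rm cols}$, and $A_{\rm horiz}$) to compute the new tuple $(t', \delta_r', \delta_c', c_r', c_c')$ in constant time. By \cref{lm:generalized-top-left-map}, the access invariant is preserved, the new offsets remain valid inside the expansion of $N_{t'}$, and either $\delta_r' \leq \tau^{p_r}$ with $\delta_c' \leq \delta_c \leq \tau^{p_c+1}$, or $\delta_c' \leq \tau^{p_c}$ with $\delta_r' \leq \delta_r \leq \tau^{p_r+1}$. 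In the former case I decrement $p_r$ by one, and in the latter $p_c$ by one; in either case the preconditions for the next iteration are maintained.

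Since each iteration strictly decreases $p_r + p_c$ by one, after at most $2\lceil \log_{\tau} n \rceil = \bigO(\log_{\tau} n)$ iterations we reach $p_r = p_c = 0$. The ``moreover'' clause of \cref{lm:generalized-top-left-map} then guarantees that $N_t \in V_l$, so $\Rhs{G}{N_t} \in \Sigma$ and this value equals $\Exp{G}{S}[i,j]$ by the access invariant. Since every iteration executes $\bigO(1)$ array lookups and arithmetic operations, the total query time is $\bigO(\log_{\tau} n)$. There is no real obstacle here: the combinatorics of the mappings were already verified in \cref{sec:2d-log} for $\tau = 2$, and the extension to general $\tau$ only affects the ranges of the block indices $k_r, k_c$, which is precisely what \cref{lm:generalized-top-left-map} accounts for.
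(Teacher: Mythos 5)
Your proposal takes essentially the same route as the paper: it is the algorithm of \cref{fig:2d-log-access} run with base $\tau$, with \cref{lm:generalized-top-left-map} (and its three analogues) supplying the access invariant, the per-iteration improvement, and the terminal case, for $\bigO(\log_{\tau} n)$ iterations of $\bigO(1)$ work each. Two bookkeeping points in your level-update rule, however, are left unaddressed. First, the rule ``decrement $p_r$ in the former case and $p_c$ in the latter'' can in principle drive a level below $0$, where the mapping, the arrays $H^{NW},O^{NW},\dots$, and the lemma's precondition ($\LevelId_r,\LevelId_c\in\Zn$) are simply undefined. To rule this out you should observe that a step whose only guarantee is the row case comes from a hook in $V_h$, which forces the block height $e_r-b_r\geq 2$ and hence $\tau^{p_r}\geq 2$, i.e.\ $p_r\geq 1$ (symmetrically for columns), and that when both disjuncts hold simultaneously (e.g., when the hook is a literal) the tie must be broken toward a level that is still positive. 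Second, the endgame is slightly off as stated: under your decrement rule, merely \emph{reaching} $p_r=p_c=0$ does not make $N_t$ a literal (the invariant only gives $\delta_r,\delta_c\leq\tau$); the ``moreover'' clause of \cref{lm:generalized-top-left-map} applies to the output of a mapping call made \emph{with} both levels equal to $0$, so you need one final call at level $(0,0)$ before returning $A_{\rm rhs}[t]$ (this keeps the total at $\bigO(\log_{\tau} n)$ calls), or equivalently a loop guard that runs the iteration at level $(0,0)$.

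For comparison, the paper's pseudocode does the level bookkeeping differently: after each mapping it lowers $p_r$ (resp.\ $p_c$) while the corresponding power of the base exceeds $A_{\rm rows}[t]$ (resp.\ $A_{\rm cols}[t]$), i.e., the decrement is driven by the dimensions of the current variable rather than by which case of the lemma fired; your rule is in fact the one described in the prose of the proof of \cref{pr:2d-log-query} (``each iteration decreases one of $p_r$ and $p_c$''). With the two fixes above, your argument is complete and yields the claimed $\bigO(\log_{\tau} n)$ query time.
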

\begin{proof}
  The proof is analogous to the proof of \cref{pr:2d-log-query}, except
  instead of \cref{lm:top-left-map}, we use \cref{lm:generalized-top-left-map}.
\end{proof}

\thtwodopt*
\begin{proof}
  We let $\tau = \log^{\epsilon/2} n$ and use the data structure from
  \cref{sec:2d-opt}.
  It needs
  $\bigO(g \cdot \tau^2 \cdot \log^2_{\tau} n) =
  \bigO(g \cdot \log^{2+\epsilon} n)$ space.
  By \cref{pr:2d-opt-query}, the query time
  is then $\bigO(\log_{\tau} n) = \bigO(\tfrac{\log n}{\log \tau})
  = \bigO(\tfrac{\log n}{\log (\log^{\epsilon/2} n)}) =
  \bigO(\tfrac{\log n}{\log \log n})$.
\end{proof}

\section{Hardness of Computation over 2D SLPs}\label{sec:hardness}

\subsection{Hardness of Compressed Pattern Matching
  for 2D Strings}\label{sec:hardness-pattern-matching}

\subsubsection{Problem Definition}\label{sec:2d-pattern-matching-problem-def}
\vspace{-1.5ex}

\begin{framed}
  \noindent
  \probname{Pattern Matching of a 1D Pattern over a Grammar-Compressed 2D Text}
  \begin{description}[style=sameline,itemsep=0ex,font={\normalfont\bf}]
  \item[Input:] A single-row 2D string $P \in \Sigma^{1 \times k}$ (the pattern)
    represented using a length-$k$ array, and a 2D string
    $T \in \Sigma^{r \times c}$ (the text) represented using
    a 2D SLP (see \cref{sec:prelim}).
  \item[Output:] A \texttt{YES}/\texttt{NO} answer indicating whether
    $P$ occurs in $T$, i.e., if there exists
    $(i,j) \in [1 \dd r] \times [1 \dd c-k+1]$ such that
    $T[i \dd i+1)[j \dd j+k) = P$.
  \end{description}
  \vspace{-1.3ex}
\end{framed}
\vspace{1ex}

\subsubsection{Hardness Assumptions}\label{sec:ov}
\vspace{-1.5ex}

\begin{framed}
  \noindent
  \probname{Orthogonal Vectors (OV)}
  \begin{description}[style=sameline,itemsep=0ex,font={\normalfont\bf}]
  \item[Input:] A set of vectors $A \subseteq \BinaryAlphabet^d$ with $|A| = n$.
  \item[Output:] Determine whether there exist vectors $x,y \in A$ such that $x \cdot y = \sum_{i=1}^{d} x[i] \cdot y[i] = 0$.
  \end{description}
  \vspace{-1.3ex}
\end{framed}
\vspace{1ex}

A naive algorithm for the Orthogonal Vectors problem that checks all
possible pairs of vectors runs in $\bigO(n^2 d)$ time.
For $d = \omega(\log n)$, there is currently no algorithm running
in time $\bigO(n^{2 - \epsilon} \cdot d^{\bigO(1)})$ for any $\epsilon > 0$. This
difficulty in solving the OV problem
has led to the following hypothesis being used as a basis for fine-grained
complexity hardness results.

\begin{conjecture}\label{con:ov}
  The Orthogonal Vectors (OV) Conjecture
  states that given any set $A \subseteq \BinaryAlphabet^{d}$ of $|A| = n$
  binary vectors in $d = \omega(\log n)$ dimensions, it is not
  possible to determine whether there exists a pair $x, y \in A$ satisfying
  $x \cdot y = \sum_{i=1}^{d}x[i]y[i] = 0$ in
  $\bigO(n^{2-\epsilon} \cdot d^{\bigO(1)})$ time,
  for any constant $\epsilon > 0$.
\end{conjecture}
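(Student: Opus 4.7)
The final statement is a \emph{conjecture}, not a theorem, so it does not admit a genuine proof proposal. The Orthogonal Vectors Conjecture is one of the canonical unproven hypotheses of fine-grained complexity, introduced in this paper (via the \texttt{conjecture} environment) precisely as an assumption to be used downstream; the subsequent hardness result \cref{th:2d-pattern-matching-hardness} is conditional on it for exactly this reason. No unconditional proof is known, and none is expected from current techniques, so the best I can offer as a ``plan'' is a meta-level explanation of why the statement is widely believed.

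The standard way to argue for the plausibility of OV is to derive it from a stronger hypothesis. The plan would be to reduce from the Strong Exponential Time Hypothesis (SETH), which asserts that for every constant $\epsilon > 0$ there exists $k$ such that $k$-SAT on $n$ variables cannot be solved in $\bigO(2^{(1-\epsilon)n})$ time. Given a $k$-CNF $\phi$ on $n$ variables with $m$ clauses, I would split the variables into two halves, enumerate the $N = 2^{n/2}$ partial assignments of each half, and encode each partial assignment $\alpha$ as a vector $x_\alpha \in \BinaryAlphabet^{m}$ whose $i$th entry is $1$ iff $\alpha$ fails to satisfy the $i$th clause. Two such vectors $x_\alpha, y_\beta$ are orthogonal iff every clause is satisfied by at least one of $\alpha, \beta$, i.e., iff the combined assignment satisfies $\phi$. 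Hence an $\bigO(N^{2-\epsilon} \cdot m^{\bigO(1)})$ algorithm for OV would yield a $\bigO(2^{(1-\epsilon/2)n} \cdot m^{\bigO(1)})$ algorithm for $k$-SAT, refuting SETH. This implication, due to Williams, is the principal source of confidence in the conjecture.

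The hard part --- and the reason this proposal stops here --- is that an unconditional proof is beyond the reach of existing lower-bound techniques: establishing superlinear (let alone near-quadratic) time lower bounds for explicit problems in $\mathsf{P}$ on the word RAM is a notorious open problem, and OV is at least as hard to prove as SETH itself. Accordingly, the correct course of action in this paper is to treat the OV Conjecture as a standing hypothesis and to derive \emph{conditional} lower bounds from it, exactly as is done in \cref{sec:hardness-pattern-matching}.
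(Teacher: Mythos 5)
You are right that this statement is a conjecture used only as a hardness assumption: the paper gives no proof of it (and none exists), merely motivating it by the absence of any known $\bigO(n^{2-\epsilon}\cdot d^{\bigO(1)})$ algorithm before invoking it in \cref{th:2d-pattern-matching-hardness}. Your treatment matches the paper's, and your sketch of the standard SETH-to-OV reduction as supporting evidence is accurate, though it is not part of the paper itself.
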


Below we prove a useful lemma showing that every Orthogonal Vectors
instance can be efficiently reduced to the case where the number of ones
in every vector is the same. We will use this transformation in our
reduction in \cref{sec:reducing-ov-to-2d-compressed-pattern-matching}.

\begin{proposition}\label{pr:uniform-ov}
  Let $A \subseteq \BinaryAlphabet^{d}$ and $|A| = n$. Given the set
  $A$, we can, in $\bigO(dn)$ time, construct a set $A' \subseteq
  \BinaryAlphabet^{d'}$, where $d' = 3d$, that satisfies the following
  conditions:
  \begin{enumerate}
  \item $|A'| = 2n$,
  \item For every $x' \in A'$, it holds $\sum_{i=1}^{d'}x'[i] = d$,
  \item There exist $x,y \in A$ such that $x \cdot y = 0$ if and only
    if there exist $x',y' \in A'$ such that $x' \cdot y' = 0$.
  \end{enumerate}
\end{proposition}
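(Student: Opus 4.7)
The plan is to send each $x \in A$ to two weight-$d$ vectors $L(x), R(x) \in \BinaryAlphabet^{3d}$ in such a way that the original dot products are recovered exactly on ``cross'' $L$-$R$ pairs, while every ``within-type'' $L$-$L$ or $R$-$R$ dot product is either strictly positive or vanishes only in a way that already witnesses an orthogonal pair in $A$. Concretely, split $[1 \dd 3d]$ into three consecutive blocks of $d$ coordinates each and, writing $k := \sum_{i=1}^{d} x[i]$ for the Hamming weight, define
\[
L(x) \;=\; (x,\; \one^{d-k}\zero^{k},\; \zero^{d})
\qquad\text{and}\qquad
R(x) \;=\; (x,\; \zero^{d},\; \one^{d-k}\zero^{k}).
\]
Both images have weight $k + (d-k) + 0 = d$, and each is computable in $\bigO(d)$ time, so the whole map fits in the required $\bigO(dn)$ budget.

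Orthogonality preservation then reduces to a block-by-block calculation. Because the $L$-padding and $R$-padding occupy disjoint coordinate ranges, the cross product simplifies to $L(x) \cdot R(y) = x \cdot y$, which immediately gives the ``only if'' direction of condition~3. For the ``if'' direction I would compute
\[
L(x) \cdot L(y) \;=\; x \cdot y \;+\; \bigl(d - \max(k(x), k(y))\bigr),
\]
and symmetrically for $R$-$R$. Both summands are non-negative, so any within-type orthogonal pair in $A'$ forces $x \cdot y = 0$ together with $\max(k(x), k(y)) = d$; unwinding these two constraints shows $\{x, y\} = \{\one^{d}, \zero^{d}\}$, which is itself an orthogonal pair in $A$. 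Hence every orthogonal pair in $A'$ descends to one in $A$, giving the full equivalence in condition~3.

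The remaining point is the exact count $|A'| = 2n$. The maps $L$ and $R$ are each injective (the first block determines $x$), and $L(x) = R(y)$ is only possible when $x = y = \one^{d}$, so $|L(A) \cup R(A)|$ equals $2n$ whenever $\one^{d} \notin A$ and $2n-1$ otherwise. I would remove this edge case with a cheap $\bigO(dn)$-time preprocessing pass: test whether $\zero^{d} \in A$; if so, the OV answer is YES and I emit any trivially-YES, size-$2n$, weight-$d$ instance $A'$; if not, then $\one^{d}$ cannot participate in any orthogonal pair of $A$, so it can be removed from $A$ without affecting the OV answer, after which I apply the main construction to the remaining $n-1$ vectors and top $A'$ up to size $2n$ with two carefully chosen weight-$d$ padding vectors whose dot products with each other and with every existing element of $A'$ are strictly positive. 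The main care point of the whole proof is the explicit construction of those two padding vectors and the verification that they introduce no spurious orthogonal pairs; once that is in place, the three required properties of~\cref{pr:uniform-ov} follow from the block arithmetic above.
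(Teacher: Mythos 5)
Your construction is the paper's construction verbatim: $L$ and $R$ are exactly the maps $f_0$ and $f_1$ used in the paper's proof, and your verification of conditions~2 and~3 via the cross identity $L(x)\cdot R(y)=x\cdot y$ and non-negativity of the blockwise contributions is the same argument; in fact the converse can be done more simply than your within-type case analysis, since for any $x'=f_i(x)$, $y'=f_j(y)$ with $x'\cdot y'=0$ the restriction to the first $d$ coordinates already gives $x\cdot y=0$ (the OV problem does not require $x\neq y$), which is how the paper argues it. The only place you diverge is the cardinality claim: you correctly note that $L(\one^{d})=R(\one^{d})$, so the image has size $2n-1$ when $\one^{d}\in A$; the paper instead asserts $f_0(x)\neq f_1(x)$ for every $x$, which is false exactly for $x=\one^{d}$, so your preprocessing step is patching a genuine (though harmless for the intended application, where $d=\omega(\log n)$) slip in the paper rather than taking a different route. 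Be aware, however, that your patch is left incomplete: the ``two padding vectors'' are never constructed, and for degenerate parameters (e.g., $d\le 2$ with $n$ close to $2^{d}$) no such padding can exist, since there are too few pairwise non-orthogonal weight-$d$ vectors in $\BinaryAlphabet^{3d}$. For $d\ge 3$ the padding is easy --- take two distinct weight-$d$ vectors with first block $\zero^{d}$ whose ones include positions $d+1$ and $2d+1$; they share position $d+1$ with every $L(x)$ and position $2d+1$ with every $R(x)$ (as $k_x\le d-1$ after removing $\one^{d}$), and they differ from all image vectors because $\zero^{d}\notin A$ in that branch --- so you should either spell this out or restrict attention to large $d$ as the application does.
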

\begin{proof}

  For every $x \in A$, denote
  \begin{align*}
    f_0(x) &=
      x \cdot \one^{d-k} \cdot \zero^{d+k} \in \BinaryAlphabet^{3d},\\
    f_1(x)
      &= x \cdot \zero^{d} \cdot \one^{d-k} \cdot \zero^{k} \in \BinaryAlphabet^{3d},
  \end{align*}
  where $k = \sum_{i=1}^{d} x[i]$, i.e., $k$ is the number of ones in $x$.
  The set $A'$ is then defined as
  \[
    A' = \{f_0(x) : x \in A\} \cup \{f_1(x) : x \in A\}.
  \]
  Construction of $A'$ from $A$ is easily done in $\bigO(dn)$ time.

  We now prove the claimed properties of $A'$:
  \begin{enumerate}
  \item For every $x \in A$, we have $f_0(x) \neq
    f_1(x)$. Moreover, since the first $d$ symbols of $f_0(x)$ and
    $f_1(x)$ match $x$, it follows that for all distinct elements
    $x, y \in A$, it holds $f_i(x) \neq f_j(y)$ for any $i,j \in
    \{0,1\}$. Hence, $|A'| = 2n$.
  \item To show the second property, note that, by
    definition, for every $x \in A$, both $f_0(x)$ and $f_1(x)$
    contain precisely $d$ ones.
  \item Let us first consider any $x, y \in A$ such that $x \cdot y =
    0$.  Then $f_0(x) \cdot f_1(y) = 0$, since
    the positions of ones among the first $d$ positions of $f_0(x)$
    and $f_1(y)$ differ by $x \cdot y = 0$, and the positions
    of ones in the remaining positions differ by the definition of
    $f_0(x)$ and $f_1(y)$. Thus, letting $x' = f_0(x)$ and $y' =
    f_1(y)$, we have $x', y' \in A'$ and $x' \cdot y' = 0$.  Conversely,
    assume that for some $x', y' \in A'$ it holds $x' \cdot y' =
    0$.  Then there exist $x, y \in A$ and $i,j \in \{0,1\}$ such
    that $x' = f_i(x)$ and $y' = f_j(y)$. The assumption
    $f_i(x) \cdot f_j(y) = 0$ implies that $f_i(x)[1 \dd
    d] \cdot f_j(y)[1 \dd d] = 0$.  By $x = f_i(x)[1 \dd d]$ and $y
    = f_j(y)[1 \dd d]$, we obtain $x \cdot y = 0$.
    \qedhere
  \end{enumerate}
\end{proof}

\subsubsection{Reducing Orthogonal Vectors
  to 2D Compressed Pattern Matching}\label{sec:reducing-ov-to-2d-compressed-pattern-matching}

\begin{lemma}\label{lm:ov-reduction}
  Let $A \subseteq \BinaryAlphabet^{d}$ and $|A| = n$. Assume that there
  exists $\ell \in [1 \dd d]$ such that, for every $x \in A$, it holds
  $\sum_{i=1}^{d} x[i] = \ell$.
  Given the set $A$, we can, in $\bigO(dn)$ time, construct a 2D string
  $P \in \BinaryAlphabet^{1 \times (\ell+2)}$ and a 2D SLG
  (see \cref{sec:prelim}) representing a 2D string $T \in \BinaryAlphabet^{n \times (\ell+2)n}$
  such that the following conditions are equivalent:
  \begin{enumerate}
  \item\label{lm:ov-reduction-it-1}
    There exist $x,y \in A$ such that $x \cdot y = 0$.
  \item\label{lm:ov-reduction-it-2}
    The pattern $P$ occurs in the text $T$; that is, there exists
    $(r,c) \in [1 \dd \Rows{T}-\Rows{P} + 1] \times [1 \dd \Cols{T} - \Cols{P} + 1]$
    such that $T[r \dd r+\Rows{P})[c \dd c+\Cols{P}) = P$.
  \end{enumerate}
\end{lemma}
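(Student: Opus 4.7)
The plan is to exhibit $P$ and a 2D SLG for $T$ directly from the vectors in $A$ and then verify \cref{lm:ov-reduction-it-1} $\Leftrightarrow$ \cref{lm:ov-reduction-it-2} by reasoning about where $P$ can align inside $T$. Set $P = \one \cdot \zero^{\ell} \cdot \one$. For each coordinate $j \in [1 \dd d]$ introduce a nonterminal whose expansion is the column $C_j \in \BinaryAlphabet^{n \times 1}$ with $C_j[i,1] = a_i[j]$, built by a single rule stacking $n$ literal nonterminals (one per entry); this contributes $\bigO(n)$ symbols per coordinate and $\bigO(nd)$ in total. For each $a_i \in A$ enumerate the ones of $a_i$ as $b_{i,1} < \cdots < b_{i,\ell}$ (well-defined by the uniformity hypothesis) and introduce a nonterminal $\Gamma_i$ with $\Gamma_i = C_{b_{i,1}} \vconcat \cdots \vconcat C_{b_{i,\ell}} \in \BinaryAlphabet^{n \times \ell}$ of size $\ell$; all children have height $n$, so the concatenation is well defined. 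Let $D$ be a nonterminal expanding to the $n \times 1$ all-ones column, let $B_i$ be the nonterminal with $B_i = D \vconcat \Gamma_i \vconcat D \in \BinaryAlphabet^{n \times (\ell+2)}$, and define the start symbol so that $T = B_1 \vconcat \cdots \vconcat B_n \in \BinaryAlphabet^{n \times n(\ell+2)}$. The total grammar size and construction time are both $\bigO(nd)$.

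For every $i, j \in [1 \dd n]$ and $t \in [1 \dd \ell]$ we have $\Gamma_i[j,t] = C_{b_{i,t}}[j,1] = a_j[b_{i,t}]$; since $\{b_{i,1},\ldots,b_{i,\ell}\}$ is the support of $a_i$, this gives $a_i \cdot a_j = \sum_{t=1}^{\ell}\Gamma_i[j,t]$, and because entries are binary, the row $\Gamma_i[j,\cdot]$ equals $\zero^{\ell}$ if and only if $a_i \cdot a_j = 0$. Moreover, every column of $T$ that lies outside the blocks $\Gamma_1,\ldots,\Gamma_n$ is a copy of the all-ones delimiter $D$, so any $\ell$ consecutive columns whose entries in some row $r$ are all zero must coincide with the $\ell$ columns of a single $\Gamma_i$. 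For \cref{lm:ov-reduction-it-1} $\Rightarrow$ \cref{lm:ov-reduction-it-2}, given $x = a_j, y = a_i \in A$ with $x \cdot y = 0$, set $c = (i-1)(\ell+2)+1$ (the column of the left delimiter of $B_i$); then $T[j \dd j+1)[c \dd c + \ell + 2) = \one \cdot \zero^{\ell} \cdot \one = P$. Conversely, if $P$ occurs at $(r,c)$, the $\ell$ zero symbols of $P$ must fill the columns of some $\Gamma_i$ entirely, forcing $\Gamma_i[r,\cdot] = \zero^{\ell}$ and hence $a_i \cdot a_r = 0$, yielding the orthogonal pair.

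The only subtle point is ruling out matches of $P$ whose zero block straddles two consecutive $\Gamma_i$'s; this is exactly what the two-column all-ones delimiter $DD$ between consecutive blocks enforces, since a run of $\ell$ consecutive zeros in any row of $T$ cannot cross a delimiter column, while each $\Gamma_i$ has width exactly $\ell$, so the alignment is forced. Everything else is a direct check of dimensions and rule counts.
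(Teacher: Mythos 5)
Your proposal is correct and follows essentially the same route as the paper's proof: the same pattern $P = \one\,\zero^{\ell}\,\one$, the same coordinate columns $C_j$, blocks $\Gamma_i$ flanked by all-ones delimiter columns $D$, an $\bigO(dn)$-size 2D SLG, and the same alignment argument forcing the zero run of $P$ to coincide with a single $\Gamma_i$. The only differences are cosmetic: you introduce intermediate nonterminals for $\Gamma_i$ and $B_i$ where the paper inlines them into the start rule, and you spell out the delimiter/alignment step a bit more explicitly than the paper does.
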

\begin{proof}

  We begin by defining the text $T$ (and the associated 2D SLG)
  and the pattern $P$ from the claim.
  Denote $A = \{a_1, \dots, a_n\}$.
  For every $i \in [1 \dd d]$, let $C_i \in \BinaryAlphabet^{n \times 1}$ be such
  that, for every $j \in [1 \dd n]$, $C_i[j,1] = a_j[i]$
  (see \cref{fig:ov}). Let $D \in \BinaryAlphabet^{n \times 1}$ be such that, for
  every $j \in [1 \dd n]$, $D[j,1] = \one$.
  For every $i \in [1 \dd n]$, let $(b_{i,j})_{j \in [1 \dd \ell]}$
  denote an increasing sequence containing
  the positions of ones in $a_i$, i.e., such that
  $b_{i,1} < b_{i,2} < \dots < b_{i,\ell}$ and
  $\{t \in [1 \dd d] : a_i[t] = \one\} = \{b_{i,j}\}_{j \in [1 \dd \ell]}$.
  For every $i \in [1 \dd n]$, let
  \[
    \Gamma_i = C_{b_{i,1}} \vconcat C_{b_{i,2}} \vconcat \dots \vconcat C_{b_{i,\ell}} \in \BinaryAlphabet^{n \times \ell}.
  \]
  We then let $P \in \BinaryAlphabet^{1 \times (\ell+2)}$ be such that
  $P[1,1] = P[1,\ell+2] = \one$ and, for every $j \in [2 \dd \ell+1]$,
  $P[1,j] = \zero$. Finally, we define $T$ as follows:
  \[
    T = \vconcat_{i=1}^{n} \big( D \vconcat \Gamma_i \vconcat D \big)
        \in \BinaryAlphabet^{n \times (\ell+2)n}.
  \]
  Next, we define a 2D SLG representing the 2D string $T$.
  Let $G = (V_{l}, V_{h}, V_{v}, \Sigma, R, S)$, where
  \begin{itemize}
  \item $\Sigma = \BinaryAlphabet$,
  \item $V_{l} = \{N_{\zero}, N_{\one}\}$,
  \item $V_{h} = \{N_{C,1}, \dots, N_{C,d}, N_{D}\}$, and
  \item $V_{v} = \{S\}$.
  \end{itemize}
  Denote $V = V_{l} \cup V_{h} \cup V_{v}$.
  We set $\Rhs{G}{N_{\zero}} = \zero$ and $\Rhs{G}{N_{\one}} = \one$.
  For every $i \in [1 \dd d]$, we set $\Rhs{G}{N_{C,i}} =
  \bigodot_{j=1}^{n} N_{C_i[j,1]} \in V^{n}$.
  Next, we set $\Rhs{G}{N_{D}} = N_{\one}^{n} \in V^{n}$. Finally, we set
  \[
    \Rhs{G}{S} = \textstyle\bigodot_{i=1}^{n}
      \big(
        N_{D} \cdot
        N_{C,b_{i,1}} \cdot N_{C,b_{i,2}} \cdots N_{C,b_{i,\ell}} \cdot
        N_{D}
      \big)
      \in V^{(\ell+2)n}.
  \]
  The size of $G$ is then
  \begin{align*}
    |G|
      &= \textstyle\sum_{N \in V_{l}} |\Rhs{G}{N}| +
         \textstyle\sum_{N \in V_{h}} |\Rhs{G}{N}| +
         \textstyle\sum_{N \in V_{v}} |\Rhs{G}{N}|\\
      &= 2 + (d+1)n + (\ell+2)n = \bigO(dn).
  \end{align*}

  The construction of $P$ takes $\bigO(\ell) = \bigO(d)$ time.
  To construct $G$, we proceed as follows:
  \begin{enumerate}
  \item Compute the sequence $(b_{i,j})_{j \in [1 \dd \ell]}$ for every $i \in [1 \dd n]$.
    Given $A$, this takes $\bigO(dn)$ time.
  \item In $\bigO(dn)$ time, compute the strings $C_i$, where $i \in [1 \dd d]$.
  \item Given $\{C_1, \dots, C_d\}$ and all sequences $(b_{i,j})_{j \in [1 \dd \ell]}$ (where $i \in [1 \dd n]$),
    we can construct $G$ in $\bigO(dn)$ time.
  \end{enumerate}
  In total, the construction of $G$ takes $\bigO(dn)$ time.

  It remains to prove the equivalence of the two conditions in the claim:

  (\ref{lm:ov-reduction-it-1}) $\Rightarrow$ (\ref{lm:ov-reduction-it-2}):
  Assume that there exist $x,y \in A$ such that $x \cdot y = 0$. Let
  $i,j \in [1 \dd n]$ be such that $a_i = x$ and $a_j = y$. The assumption
  $a_i \cdot a_j = 0$ implies, by definition of the sequence
  $(b_{i,t})_{t \in [1 \dd \ell]}$, that, for every $t \in [1 \dd \ell]$,
  it holds $a_j[b_{i,t}] = 0$. By definition of $\Gamma_i$, we thus obtain
  that, for every $t \in [1 \dd \ell]$, it holds
  $\Gamma_{i}[j,t] = C_{b_{i,t}}[j,1] = a_j[b_{i,t}] = 0$.
  This implies that the 2D pattern $P$ occurs in the 2D string
  $D \vconcat \Gamma_i \vconcat D$. Since $D \vconcat \Gamma_i \vconcat D$
  occurs in $T$, we thus obtain an occurrence of $P$ in $T$.

  (\ref{lm:ov-reduction-it-2}) $\Rightarrow$ (\ref{lm:ov-reduction-it-1}):
  Let us now assume that $P$ occurs in $T$, i.e., there exists
  $(r,c) \in [1 \dd \Rows{T} - \Rows{P} + 1] \times [1 \dd \Cols{T} - \Cols{P} + 1]$
  such that $T[r \dd r + \Rows{P})[c \dd c + \Cols{P}) = P$. Observe that,
  by definition of $T$ and $P$, it follows that there exists
  $i \in [1 \dd n]$ such that $c = (\ell+2)(i-1) + 1$.
  This implies that, for every $t \in [1 \dd \ell]$,
  it holds $\Gamma_i[r,t] = P[1+t] = \zero$. We thus obtain
  $\zero = \Gamma_i[r,t] = C_{b_{i,t}}[r,1] = a_r[b_{i,t}]$ for all $t \in [1 \dd \ell]$.
  By definition of the sequence $(b_{i,t})_{t \in [1 \dd \ell]}$, this implies
  that $a_i \cdot a_r = 0$. Thus, letting $x = a_i$ and $y = a_r$, we obtain that
  $x,y \in A$ and $x \cdot y = 0$.
\end{proof}

\begin{figure}[ht!]
	\centering
	\begin{tikzpicture}[scale=0.5]

		\draw node at (2,0){$a_1 = (\one,\zero,\zero,\one)$};
		\draw node at (2,-1){$a_2 = (\one,\one,\zero,\zero)$};
		\draw node at (2,-2){$a_3 = (\zero,\one,\zero,\one)$};
		\draw node at (2,-3){$a_4 = (\zero,\zero,\one,\one)$};
		\draw node at (2,-4){$a_5 = (\one,\zero,\one,\zero)$};

		\draw [] (8,0.5) rectangle (9,-4.5);
		\draw node at (8.5,0){\one};
		\draw node at (8.5,-1){\one};
		\draw node at (8.5,-2){\zero};
		\draw node at (8.5,-3){\zero};
		\draw node at (8.5,-4){\one};
		\draw node at (7,-2){$C_1=$};

		\draw [] (12,0.5) rectangle (13,-4.5);
		\draw node at (12.5,0){\zero};
		\draw node at (12.5,-1){\one};
		\draw node at (12.5,-2){\one};
		\draw node at (12.5,-3){\zero};
		\draw node at (12.5,-4){\zero};
		\draw node at (11,-2){$C_2=$};

		\draw [] (16,0.5) rectangle (17,-4.5);
		\draw node at (16.5,0){\zero};
		\draw node at (16.5,-1){\zero};
		\draw node at (16.5,-2){\zero};
		\draw node at (16.5,-3){\one};
		\draw node at (16.5,-4){\one};
		\draw node at (15,-2){$C_3=$};

		\draw [] (20,0.5) rectangle (21,-4.5);
		\draw node at (20.5,0){\one};
		\draw node at (20.5,-1){\zero};
		\draw node at (20.5,-2){\one};
		\draw node at (20.5,-3){\one};
		\draw node at (20.5,-4){\zero};
		\draw node at (19,-2){$C_4=$};

		\draw node at (0,-10){$T=$};

		\draw node at (1.5, -8){$\one$};
		\draw node at (1.5, -9){$\one$};
		\draw node at (1.5, -10){$\one$};
		\draw node at (1.5, -11){$\one$};
		\draw node at (1.5, -12){$\one$};

		\draw node at (3 , -7){$C_1 C_4$};
		\draw node at (3 , -8){\one\one};
		\draw node at (3 , -9){\one\zero};
		\draw node at (3 , -10){\zero\one};
		\draw node at (3 , -11){\zero\one};
		\draw node at (3 , -12){\one\zero};

		\draw node at (4.5, -8){$\one\one$};
		\draw node at (4.5, -9){$\one\one$};
		\draw node at (4.5, -10){$\one\one$};
		\draw node at (4.5, -11){$\one\one$};
		\draw node at (4.5, -12){$\one\one$};

		\draw node at (6 , -7){$C_1 C_2$};
		\draw node at (6,-8){\one\zero};
		\draw node at (6,-9){\one\one};
		\draw node at (6,-10){\zero\one};
		\draw node at (6,-11){\zero\zero};
		\draw node at (6,-12){\one\zero};

		\draw node at (7.5, -8){$\one\one$};
		\draw node at (7.5, -9){$\one\one$};
		\draw node at (7.5, -10){$\one\one$};
		\draw node at (7.5, -11){$\one\one$};
		\draw node at (7.5, -12){$\one\one$};

		\draw node at (9 , -7){$C_2 C_4$};
		\draw node at (9,-8){\zero\one};
		\draw node at (9,-9){\one\zero};
		\draw node at (9,-10){\one\one};
		\draw node at (9,-11){\zero\one};
		\draw node at (9,-12){\zero\zero};

		\draw node at (10.5, -8){$\one\one$};
		\draw node at (10.5, -9){$\one\one$};
		\draw node at (10.5, -10){$\one\one$};
		\draw node at (10.5, -11){$\one\one$};
		\draw node at (10.5, -12){$\one\one$};

		\draw node at (12 , -7){$C_3 C_4$};
		\draw node at (12,-8){\zero\one};
		\draw node at (12,-9){\zero\zero};
		\draw node at (12,-10){\zero\one};
		\draw node at (12,-11){\one\one};
		\draw node at (12,-12){\one\zero};

		\draw node at (13.5, -8){$\one\one$};
		\draw node at (13.5, -9){$\one\one$};
		\draw node at (13.5, -10){$\one\one$};
		\draw node at (13.5, -11){$\one\one$};
		\draw node at (13.5, -12){$\one\one$};

		\draw node at (15 , -7){$C_1 C_3$};
		\draw node at (15,-8){\one\zero};
		\draw node at (15,-9){\one\zero};
		\draw node at (15,-10){\zero\zero};
		\draw node at (15,-11){\zero\one};
		\draw node at (15,-12){\one\one};

		\draw node at (16.5, -8){$\one$};
		\draw node at (16.5, -9){$\one$};
		\draw node at (16.5, -10){$\one$};
		\draw node at (16.5, -11){$\one$};
		\draw node at (16.5, -12){$\one$};

		\draw[darkgray,fill=green,fill opacity=0.3,thin] (5,-10.5) rectangle (7,-11.5);
		\draw[darkgray,fill=green,fill opacity=0.3,thin] (8,-11.5) rectangle (10,-12.5);
		\draw[darkgray,fill=green,fill opacity=0.3,thin] (11,-8.5) rectangle (13,-9.5);
		\draw[darkgray,fill=green,fill opacity=0.3,thin] (14,-9.5) rectangle (16,-10.5);

		\draw node at (20,-10){$P = \one\zero\zero\one$};
	\end{tikzpicture}
	\caption{Illustration of the construction presented in \cref{lm:ov-reduction}.
    The highlighted blocks of zeros in $T$ correspond to orthogonal pairs
    $(a_2,a_4)$ and $(a_3,a_5)$.}\label{fig:ov}
\end{figure}

\twodpatternmatchinghardness*
\begin{proof}

  Suppose that there exists a constant $\epsilon > 0$ such that, given
  any $P \in \BinaryAlphabet^{1 \times p}$ and a 2D SLP $G$ encoding a
  2D text $T$ over the alphabet $\BinaryAlphabet$, one can determine whether $P$
  occurs in $T$ in $\bigO(|G|^{2-\epsilon} \cdot p^c)$ time, where
  $c > 0$ is some constant. We will show that this implies that \cref{con:ov}
  does not hold.

  Consider any $A \subseteq \BinaryAlphabet^{d}$, where $d > 0$,
  $|A| = n$, and $d = \omega(\log n)$. Given the set $A$, we can determine
  whether there exist $x, y \in A$ such that $x \cdot y = 0$ as follows:
  \begin{enumerate}
  \item Using \cref{pr:uniform-ov}, in $\bigO(dn)$ time we construct
    $A' \subseteq \BinaryAlphabet^{d'}$, where $d' = 3d$ and $|A'| = 2|A|$,
    such that:
    \begin{itemize}
    \item For every $a' \in A'$, it holds $\sum_{i=1}^{d'} a'[i] = d$.
    \item There exist $x,y \in A$ such that $x \cdot y = 0$ if and only if
      there exist $x',y' \in A'$ such that $x' \cdot y' = 0$.
    \end{itemize}
  \item Applying \cref{lm:ov-reduction} to $A'$, in $\bigO(d' \cdot |A'|) =
    \bigO(dn)$ time we construct a 2D string $P \in \BinaryAlphabet^{1 \times (d+2)}$
    and a 2D SLG $G$ representing a 2D string $T \in \BinaryAlphabet^{n \times (d+2)n}$
    such that the following conditions are equivalent:
    \begin{itemize}
    \item There exist $x',y' \in A'$ such that $x' \cdot y' = 0$.
    \item The pattern $P$ occurs in the text $T$.
    \end{itemize}
    Observe that the upper bound on the runtime in \cref{lm:ov-reduction}
    implies that $|G| = \bigO(dn)$.
  \item Using \cref{ob:2d-slg-to-2d-slp}, in $\bigO(|G|) = \bigO(dn)$
    time we construct a 2D SLP $G'$ such that $|G'| = \Theta(|G|)$ and
    $\Lang{G'} = \Lang{G} = \{T\}$.
  \item Applying the above hypothetical algorithm for 2D pattern matching to the 2D
    SLP $G'$ and the pattern $P$, we
    check whether $P$ occurs in $T$ in
    \[
      \bigO(|G'|^{2-\epsilon} \cdot (d+2)^c) =
      \bigO(|G|^{2-\epsilon} \cdot d^c) = 
      \bigO((dn)^{2-\epsilon} \cdot d^c) =
      \bigO(n^{2-\epsilon} \cdot d^{c+2})
    \]
    time. By the above discussion, this check is equivalent to determining
    whether there exist $x,y \in A$ such that $x \cdot y = 0$.
  \end{enumerate}
  In total, the above algorithm takes $\bigO(dn + n^{2-\epsilon} \cdot d^{c+2})
  = \bigO(n^{2-\epsilon} \cdot d^{\bigO(1)})$ time and determines
  whether there exist $x,y \in A$ such that $x \cdot y = 0$. This implies
  that \cref{con:ov} does not hold.
  To see why $\bigO(dn + n^{2-\epsilon} \cdot d^{c+2}) = \bigO(n^{2-\epsilon} \cdot d^{\bigO(1)})$,
  consider two cases:
  \begin{itemize}
  \item If $d = \bigO(n^{1-\epsilon})$, then $\bigO(dn + n^{2-\epsilon} \cdot d^{c+2}) = \bigO(n^{2-\epsilon} \cdot d^{\bigO(1)})$.
  \item If $d = \Omega(n^{1-\epsilon})$, then $n = \bigO(d^{1/(1-\epsilon)})$, and hence it holds
    \[
      \bigO(dn + n^{2-\epsilon} \cdot d^{c+2}) =
      \bigO(d^{1+1/(1-\epsilon)} + n^{2-\epsilon} \cdot d^{c+2})
      = \bigO(n^{2-\epsilon} \cdot d^{\bigO(1)}).
      \qedhere
    \]
  \end{itemize}
\end{proof}

\subsection{Hardness of Compressed Indexing for 2D Strings}\label{sec:hardness-data-structures}

\subsubsection{Problem Definition}\label{sec:2d-queries}

\begin{definition}[2D queries for integer alphabet]\label{def:2d-integer-problems}
  Let $T \in \Zn^{r \times c}$, where $r,c \geq 1$. We define
  the following queries:
  \begin{description}[style=sameline,itemsep=1ex]
  \item[Sum query:]
    Given any $b_r,e_r \in [0 \dd r]$ and $b_c,e_c \in [0 \dd c]$, return
    the value $\SumQuery{T}{b_r}{b_c}{e_r}{e_c}$, defined as
    $\sum_{(i,j) \in (b_r \dd e_r] \times (b_c \dd e_c]} T[i,j]$.
  \item[Line sum query:]
    Given any $\ell \geq 0$ and any $e_r \in [1 \dd r]$ and $e_c \in [0 \dd c]$
    satisfying $e_c \geq \ell$, return the value $\LineSumQuery{T}{e_r}{e_c}{\ell}$,
    defined as $\SumQuery{T}{e_r-1}{e_c-\ell}{e_r}{e_c}$.
  \item[All-zero query:]
    Given any $b_r,e_r \in [0 \dd r]$ and $b_c,e_c \in [0 \dd c]$, return
    $\AllZero{T}{b_r}{b_c}{e_r}{e_c} \in \{0,1\}$, defined such
    that $\AllZero{T}{b_r}{b_c}{e_r}{e_c} = 1$ holds if and only if,
    for every $(i,j) \in (b_r \dd e_r] \times (b_c \dd e_c]$, we have
    $T[i,j] = \zero$.
  \item[Square all-zero query:]
    Given any $\ell \geq 0$ and any $e_r \in [0 \dd r]$, $e_c \in [0 \dd c]$
    satisfying $e_r \geq \ell$ and $e_c \geq \ell$, return the value
    $\SquareAllZero{T}{e_r}{e_c}{\ell}$, defined such that
    $\SquareAllZero{T}{e_r}{e_c}{\ell} = \AllZero{T}{e_r-\ell}{e_c-\ell}{e_r}{e_c}$.
  \end{description}
\end{definition}
\vspace{0.5ex}

\begin{definition}[2D queries for general alphabet]\label{def:2d-general-problems}
  Let $T \in \Sigma^{r \times c}$, where $r,c \geq 1$. We define
  the following queries:
  \begin{description}[style=sameline,itemsep=1ex]
  \item[Equality query:] Given any $b_r,b_r',h \in [1 \dd r]$ and
  $b_c,b_c',w \in [1 \dd c]$ satisfying $\max(b_r,b_r')+h \leq r+1$
  and $\max(b_c,b_c')+w \leq c+1$, return the value
    $\EqualQuery{T}{b_r}{b_c}{b_r'}{b_c'}{h}{w} \in \{0,1\}$ defined
    such that $\EqualQuery{T}{b_r}{b_c}{b_r'}{b_c'}{h}{w} = 1$
    if and only if
    $T[b_r \dd b_r + h)[b_c \dd b_c + w) =
     T[b_r' \dd b_r' + h)[b_c' \dd b_c' + w)$.
  \item[Square LCE query:] Given any $b_r,b_r' \in [1 \dd r]$
    and $b_c,b_c' \in [1 \dd c]$, return the value
    $\SquareLCE{T}{b_r}{b_c}{b_r'}{b_c'}$ defined as the
    largest $t \in \Zn$ such that
    $T[b_r \dd b_r + t)[b_c \dd b_c + t) =
    T[b_r' \dd b_r' + t)[b_c' \dd b_c' + t)$.
  \item[Line LCE query:] Given any $b_r,b_r' \in [1 \dd r]$,
    $b_c, b_c' \in [1 \dd c]$, and $\ell$ satisfying
    $b_r + \ell \leq r+1$ and $b_r' + \ell \leq r+1$, return
    the value $\LineLCE{T}{b_r}{b_c}{b_r'}{b_c'}{\ell}$,
    defined as the largest $t \in \Zn$ such that
    $T[b_r \dd b_r + \ell)[b_c \dd b_c + t) =
    T[b_r' \dd b_r' + \ell)[b_c' \dd b_c' + t)$.
  \end{description}
\end{definition}

\subsubsection{Hardness Assumptions}\label{sec:hardness-data-structures-hardness-assumptions}

The results in \cref{sec:hardness-data-structures} are based on the
hardness of the following two types of queries.

\begin{definition}[Rank queries]\label{def:rank}
  Let $T \in \Sigma^{n}$.
  For every $j \in [0 \dd n]$ and every $c \in \Sigma$, we define
  \vspace{-0.5ex}
  \[
    \Rank{T}{j}{c} = |\Occ{c}{T(0 \dd j]}| = |\{i \in (0 \dd j] : T[i] = c\}|.
  \]
\end{definition}

\begin{definition}[Symbol occurrence queries]\label{def:symbol-occ}
  Let $T \in \Sigma^{n}$.
  For every $b, e \in [0 \dd n]$ and every $c \in \Sigma$, we define
  \vspace{1ex}
  \[
    \Occurs{T}{b}{e}{c} = 
      \begin{cases}
        1 & \text{if there exists } j \in (b \dd e] \text{ such that }T[j] = c,\\
        0 & \text{otherwise}.
      \end{cases}
  \]
  \vspace{-0.5ex}

  Equivalently, $\Occurs{T}{b}{e}{c} \in \{0,1\}$ is defined such that
  $\Occurs{T}{b}{e}{c} = 1$ holds if and only if $\Occ{c}{T(b \dd e]}
  \neq \emptyset$.
\end{definition}

Currently, there is no
known data structure that, given a 1D SLP $G$ representing a string $T
\in \Sigma^{n}$, where $\Sigma = [0 \dd \sigma)$ and
$\sigma = n^{\bigO(1)}$, uses $\bigO(|G| \cdot \log^{\bigO(1)} n)$
space and answers either rank (\cref{def:rank}) or symbol occurrence
(\cref{def:symbol-occ}) queries in $\bigO(\log^{\bigO(1)} n)$ time.

\subsubsection{Basic Reductions among 2D Queries}\label{sec:hardness-data-structures-basic-reductions}

\begin{proposition}\label{pr:reduce-square-lce-to-line-lce}
  Let $G = (V_l, V_h, V_v, \Sigma, R, S)$ be a 2D SLP representing a
  string $T \in \Sigma^{r \times c}$, and let $n = \max(r, c)$. If
  there exists a data structure using $\bigO(|G| \cdot \log^{\bigO(1)}
  n)$ space that answers line LCE queries
  (\cref{def:2d-general-problems}) on $T$ in $\bigO(\log^{\bigO(1)} n)$ time,
  then there exists a data structure of the same size that answers
  square LCE queries (\cref{def:2d-general-problems}) on $T$ also in
  $\bigO(\log^{\bigO(1)} n)$ time.
\end{proposition}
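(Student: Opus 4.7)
The plan is to reduce each square LCE query to $\bigO(\log n)$ line LCE queries via binary search, so that the square LCE data structure is exactly the same as the line LCE data structure and each query takes $\bigO(\log n)$ times as long. For a square LCE query specified by $(b_r,b_c,b_r',b_c')$, let
\[
  L(t) \;=\; \LineLCE{T}{b_r}{b_c}{b_r'}{b_c'}{t},
\]
which is, by \cref{def:2d-general-problems}, the largest $s\in\Zn$ such that the two height-$t$ strips starting at $(b_r,b_c)$ and $(b_r',b_c')$ agree in their first $s$ columns. By definition, $\SquareLCE{T}{b_r}{b_c}{b_r'}{b_c'}$ is exactly the largest $t\in\Zn$ (bounded by $t_{\max}:=\min(r-b_r+1,r-b_r'+1,c-b_c+1,c-b_c'+1)$) such that $L(t)\ge t$.

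The key observation is that $L$ is nonincreasing: increasing $t$ to $t+1$ imposes an additional row of equality constraints, so the maximal $s$ that satisfies all of them can only decrease. Consequently the predicate $\varphi(t):=[L(t)\ge t]$ is true for all small enough $t$ (note $\varphi(0)$ holds vacuously) and false for all large enough $t$, so there is a unique threshold, which equals the desired square LCE value. I would then implement a standard binary search for this threshold on the interval $[0 \dd t_{\max}]$: maintain $[lo,hi]$ with $\varphi(lo)$ true and either $hi=t_{\max}$ with $\varphi(hi+1)$ false (after an initial check at $t_{\max}$) or already $\varphi(hi+1)$ false, evaluating $\varphi$ at the midpoint with a single line LCE query in each iteration.

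Since $t_{\max}\le n$, the binary search performs $\bigO(\log n)$ line LCE queries, each costing $\bigO(\log^{\bigO(1)} n)$ time by hypothesis, giving total query time $\bigO(\log^{\bigO(1)} n)$. The data structure itself is simply the assumed line LCE structure, so the space stays $\bigO(|G|\cdot\log^{\bigO(1)} n)$. There is no additional preprocessing on top of what the line LCE index already provides.

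The only points requiring a bit of care are boundary and degenerate cases: handling $t=0$ (where the square LCE answer is trivially $0$ if the starting positions are out of range, and the predicate is vacuously true otherwise), correctly computing $t_{\max}$ from $r,c$, and verifying that when $L(t)$ is returned with value $\ge t$ the implied $t\times t$ submatrices indeed coincide (this is immediate from \cref{def:2d-general-problems}). These are all routine, so the main conceptual step is the monotonicity of $L$ together with the reformulation of square LCE as the largest fixed-point-type index $t$ with $L(t)\ge t$; once that is established, the reduction and its complexity analysis are essentially mechanical.
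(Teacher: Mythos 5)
Your proposal is correct and matches the paper's proof essentially verbatim: both rest on the equivalence $\SquareLCE{T}{b_r}{b_c}{b_r'}{b_c'} \geq t \iff \LineLCE{T}{b_r}{b_c}{b_r'}{b_c'}{t} \geq t$ and then binary search over $t$, reusing the line LCE structure unchanged with an $\bigO(\log n)$-factor query slowdown. Your extra remarks on monotonicity of $L(t)$ and the boundary value $t_{\max}$ just make explicit what the paper leaves implicit.
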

\begin{proof}
  It suffices to observe that, for any $b_r,b_r' \in [1 \dd m]$,
  $b_c,b_c' \in [1 \dd n]$, and any integer $t \in \Zn$,
  $\SquareLCE{T}{b_r}{b_c}{b_r'}{b_c'} \geq t$ holds if and only if
  $\LineLCE{T}{b_r}{b_c}{b_r'}{b_c'}{t} \geq t$. Thus, we can use
  binary search to reduce a square LCE query to line LCE queries. The
  query slows down only by a factor of $\bigO(\log n)$.
\end{proof}

\begin{proposition}\label{pr:reduce-line-lce-to-equality}
  Let $G = (V_l, V_h, V_v, \Sigma, R, S)$ be a 2D SLP representing a
  string $T \in \Sigma^{r \times c}$, and let $n = \max(r, c)$. If
  there exists a data structure using $\bigO(|G| \cdot \log^{\bigO(1)}
  n)$ space that answers subrectangle equality queries
  (\cref{def:2d-general-problems}) on $T$ in $\bigO(\log^{\bigO(1)} n)$ time,
  then there exists a data structure of the same size that answers
  line LCE queries (\cref{def:2d-general-problems}) on $T$ also in
  $\bigO(\log^{\bigO(1)} n)$ time.
\end{proposition}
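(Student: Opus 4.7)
The plan is to reduce a line LCE query to a sequence of equality queries via binary search on the answer, exactly in the spirit of \cref{pr:reduce-square-lce-to-line-lce}. The key observation is a monotonicity property: for any $b_r, b_r' \in [1 \dd r]$, any $b_c, b_c' \in [1 \dd c]$, any valid $\ell$, and any $t \in \Zn$ with $b_c + t \leq c + 1$ and $b_c' + t \leq c + 1$, we have $\LineLCE{T}{b_r}{b_c}{b_r'}{b_c'}{\ell} \geq t$ if and only if $\EqualQuery{T}{b_r}{b_c}{b_r'}{b_c'}{\ell}{t} = 1$. The ``only if'' direction is immediate from the definition; the ``if'' direction follows because equality of the $\ell \times t$ subrectangles is precisely the defining condition that witnesses $\LineLCE{T}{b_r}{b_c}{b_r'}{b_c'}{\ell} \geq t$.

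Given this equivalence, the predicate ``the two $\ell \times t$ subrectangles are equal'' is monotone in $t$: if it holds for some value, it holds for all smaller nonnegative values (since equality of wider rectangles forces equality of their width-$t$ prefixes). Hence I would binary search for the largest $t \in [0 \dd t_{\max}]$ satisfying the predicate, where $t_{\max} = \min(c - b_c, c - b_c') + 1$. Each step of the binary search invokes one equality query at cost $\bigO(\log^{\bigO(1)} n)$, and there are $\bigO(\log n)$ steps, since $t_{\max} \leq c \leq n$. The returned value is $\LineLCE{T}{b_r}{b_c}{b_r'}{b_c'}{\ell}$.

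The data structure itself is the same as the one that supports equality queries, so it already uses $\bigO(|G| \cdot \log^{\bigO(1)} n)$ space. The total query time is $\bigO(\log n) \cdot \bigO(\log^{\bigO(1)} n) = \bigO(\log^{\bigO(1)} n)$, yielding the claim. There is no real obstacle here: the argument is structurally identical to \cref{pr:reduce-square-lce-to-line-lce}, the only new ingredient being the elementary monotonicity observation above.
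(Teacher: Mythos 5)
Your proposal is correct and follows essentially the same route as the paper's proof: the equivalence $\LineLCE{T}{b_r}{b_c}{b_r'}{b_c'}{\ell} \geq t \iff \EqualQuery{T}{b_r}{b_c}{b_r'}{b_c'}{\ell}{t} = 1$, combined with a binary search over $t$, using the same data structure and incurring only an $\bigO(\log n)$ factor in query time. The explicit monotonicity observation and the bound $t_{\max} = \min(c-b_c,\,c-b_c')+1$ are fine details that the paper leaves implicit.
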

\begin{proof}
  It suffices to observe that, for any
  $b_r,b_r' \in [1 \dd m]$, $b_c, b_c' \in [1 \dd n]$, $\ell$ satisfying
  $b_r + \ell \leq m+1$ and $b_r' + \ell \leq m+1$, and any integer $t \in \Zn$,
  $\LineLCE{T}{b_r}{b_c}{b_r'}{b_c'}{\ell} \geq t$ holds if and only if
  $\EqualQuery{T}{b_r}{b_c}{b_r'}{b_c'}{\ell}{t} = 1$. Thus, we can use
  binary search to reduce a line LCE query to subrectangle equality
  queries. The query slows down only by a factor of $\bigO(\log n)$.
\end{proof}

\begin{proposition}\label{pr:reduce-square-all-zero-to-square-lce}
  Let $G = (V_l, V_h, V_v, \Sigma, R, S)$, where $\Sigma = \BinaryAlphabet$,
  be a 2D SLP representing a string $T \in \Sigma^{r \times c}$, and let $n = \max(r, c)$. If
  there exists a data structure using $\bigO(|G| \cdot \log^{\bigO(1)} n)$
  space that answers square LCE queries
  (\cref{def:2d-general-problems}) on $T$ in $\bigO(\log^{\bigO(1)} n)$ time,
  then there exists a data structure of the same size that answers
  square all-zero queries (\cref{def:2d-integer-problems}) on $T$ also in
  $\bigO(\log^{\bigO(1)} n)$ time.
\end{proposition}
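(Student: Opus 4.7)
My plan is to prove the reduction by padding $T$ with zeros so that a square all-zero query on $T$ becomes a square LCE query comparing a submatrix of $T$ against a fixed all-zero region of a slightly larger text. Concretely, I would construct a 2D SLP $G'$ representing a text $T' \in \BinaryAlphabet^{(r+n) \times (c+n)}$ whose top-left $r \times c$ block equals $T$ and whose remaining cells are all set to $\zero$. Then, for every valid query with $\ell \geq 1$, I will establish the equivalence
\[
  \SquareAllZero{T}{e_r}{e_c}{\ell} = 1
  \;\Longleftrightarrow\;
  \SquareLCE{T'}{e_r - \ell + 1}{e_c - \ell + 1}{r + 1}{c + 1} \geq \ell,
\]
which holds because the $\ell \times \ell$ square rooted at $(r+1, c+1)$ in $T'$ lies entirely inside the zero pad (using $\ell \leq \min(r,c) \leq n$) and therefore equals $\zero^{\ell \times \ell}$, so the square LCE reaches $\ell$ iff the compared subsquare of $T$ is also all zero. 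The degenerate case $\ell = 0$ is answered in $\bigO(1)$.

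The first step is to bound the size of $G'$. To build $T'$, I augment $G$ with $\bigO(\log n)$ additional variables: a single-cell zero literal, a vertical chain doubling it into an all-zero column of height $r$, a horizontal chain replicating that column $n$ times to form an $r \times n$ right pad $Z_1$, and analogous variables producing an $n \times (c+n)$ bottom pad $Z_2$. A new start nonterminal then combines everything as $(S \vconcat Z_1) \hconcat Z_2$, where $S$ is the old start symbol. Because every monochromatic strip admits an SLP of logarithmic size, the resulting 2D SLG has size $|G| + \bigO(\log n)$, and by \cref{ob:2d-slg-to-2d-slp} it can be converted to a 2D SLP $G'$ with $|G'| = \bigO(|G| + \log n) = \bigO(|G| \cdot \log^{\bigO(1)} n)$.

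Applying the assumed square LCE data structure to $G'$ then uses $\bigO(|G'| \cdot \log^{\bigO(1)} n) = \bigO(|G| \cdot \log^{\bigO(1)} n)$ space and, for each incoming square all-zero query, performs constant-time arithmetic to form the target square LCE query on $T'$ and compare the result against $\ell$, all in $\bigO(\log^{\bigO(1)} n)$ time. There is no serious obstacle; the points that require care are purely boundary-related, namely checking that $(r+1, c+1)$ and $(e_r-\ell+1, e_c-\ell+1)$ are valid positions in $T'$, that the padded $n \times n$ block at the bottom-right is large enough to accommodate any admissible $\ell$, and that the $\ell=0$ case is handled separately. The correctness itself rests on the trivial observation that over the binary alphabet, a matrix is the all-zero matrix iff it equals the all-zero matrix of matching shape, which is exactly what the square LCE threshold test verifies.
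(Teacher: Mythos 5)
Your proposal is correct and follows essentially the same route as the paper's proof: extend $T$ with an all-zero pad generated by $\bigO(\log n)$ extra grammar rules, then answer a square all-zero query with a single square LCE query anchored in the pad and thresholded at $\ell$. The only (immaterial) differences are that the paper appends just one $r \times c$ zero block to the right of $T$ and anchors the LCE at $(1, c+1)$, and it invokes \cref{lm:2d-slp-min-size} to absorb the $+\bigO(\log n)$ into $|G|$, whereas you pad on two sides and absorb it into the polylog factor.
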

\begin{proof}

  Let $T' = T \vconcat \zero^{r \times c} \in \Zn^{r \times 2c}$.
  Let $G'$ be a 2D SLP such that $\Lang{G'} = \{T'\}$
  and $|G'| = |G| + \Theta(\log r + \log c)$
  (such a 2D SLP is straightforward to construct from $G$).
  By \cref{lm:2d-slp-min-size}, it holds that $|G| = \Omega(\log r + \log c)$,
  and hence $|G'| = \Theta(|G|)$.

  The data structure for answering square all-zero queries on $T$ consists of a single component:
  the data structure answering square LCE queries for $G'$.
  It needs $\bigO(|G'| \log^{\bigO(1)} \max(r, 2c)) =
  \bigO(|G| \log^{\bigO(1)} n)$ space and answers square LCE queries on $T'$ in $\bigO(\log^{\bigO(1)} \max(r, 2c)) =
  \bigO(\log^{\bigO(1)} n)$ time.

  Let $\ell \geq 0$ and $e_r \in [0 \dd r]$, $e_c \in [0 \dd c]$ be such that $e_r \geq \ell$ and $e_c \geq \ell$.
  Using the above structure, we compute $\SquareAllZero{T}{e_r}{e_c}{\ell}$ by computing
  $x = \SquareLCE{T'}{e_r - \ell}{e_c - \ell}{1}{c+1}$ and returning $\SquareAllZero{T}{e_r}{e_c}{\ell} = 1$ if and only
  if $x \geq \ell$. This takes $\bigO(\log^{\bigO(1)} n)$ time.
\end{proof}

\subsubsection{Alphabet Reduction for Rank and Symbol Occurrence Queries}\label{sec:rank-and-symbol-occ-alphabet-reduction}

\begin{proposition}\label{pr:slp-rank-and-symbol-occ-alphabet-reduction}
  Assume that, for every 1D SLP $G = (V, \Sigma, R, S)$ representing a string $T \in \Sigma^{n}$,
  where $\Sigma = [0 \dd \sigma)$ and $\sigma = \bigO(\min(|G|,|T|))$,
  there exists a data structure of size $\bigO(|G| \cdot \log^{\bigO(1)} n)$
  that answers rank queries (resp.\ symbol occurrence queries) (see \cref{def:rank,def:symbol-occ})
  on $T$ in $\bigO(\log^{\bigO(1)} n)$ time.
  Then, for every 1D SLP $\hat{G} = (\hat{V}, \hat{\Sigma}, \hat{R}, \hat{S})$
  representing $\hat{T} \in \hat{\Sigma}^{\hat{n}}$,
  where $\hat{\Sigma} = [0 \dd \hat{\sigma})$ and $\hat{\sigma} = \hat{n}^{\bigO(1)}$,
  there exists a data structure of size $\bigO(|\hat{G}| \cdot \log^{\bigO(1)} \hat{n})$
  that answers rank (resp.\ symbol occurrence) queries on $\hat{T}$ in $\bigO(\log^{\bigO(1)} \hat{n})$ time.
\end{proposition}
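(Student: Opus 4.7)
The plan is to replace $\hat{G}$ by an equivalent SLP $G$ over a much smaller ``effective'' alphabet so that the assumed hypothesis applies to $G$, and then translate each incoming query symbol using a small dictionary. First, I would scan the literal rules of $\hat{G}$ to collect the set $\Sigma_{\mathrm{used}} \subseteq \hat{\Sigma}$ of terminal symbols that appear on the right-hand side of some rule; since the SLP definition forces every symbol occurring in $\Exp{\hat{G}}{\hat{S}}$ to be the right-hand side of at least one literal rule, we obtain $\sigma := |\Sigma_{\mathrm{used}}| \leq |\hat{V}| \leq |\hat{G}|$, and trivially $\sigma \leq \hat{n}$. Fix any bijection $\pi\colon \Sigma_{\mathrm{used}} \to [0 \dd \sigma)$ and form $G = (V, \Sigma, R, S)$ from $\hat{G}$ by renaming each literal rule $N \to a$ to $N \to \pi(a)$, leaving all binary rules untouched. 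Then $\Lang{G} = \{\pi(\hat{T})\}$ (where $\pi$ is applied elementwise), $|G| = |\hat{G}|$, and the alphabet $\Sigma = [0 \dd \sigma)$ satisfies $\sigma = \bigO(\min(|G|, \hat{n}))$, so the assumed hypothesis applies to $G$ and yields a data structure $D$ of size $\bigO(|\hat{G}| \cdot \log^{\bigO(1)} \hat{n})$ answering rank (resp.\ symbol occurrence) queries on $T := \pi(\hat{T})$ in $\bigO(\log^{\bigO(1)} \hat{n})$ time.

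Next, to answer queries on $\hat{T}$ I need to translate a query symbol $\hat{c} \in \hat{\Sigma}$ to its image $\pi(\hat{c})$ when $\hat{c} \in \Sigma_{\mathrm{used}}$, and to detect the opposite case quickly. Since $\hat{\sigma} = \hat{n}^{\bigO(1)}$, each element of $\hat{\Sigma}$ fits in $\bigO(\log \hat{n})$ bits, so I would store the elements of $\Sigma_{\mathrm{used}}$ together with their $\pi$-values in a sorted array of $\sigma$ words and implement the lookup via binary search in $\bigO(\log \sigma) = \bigO(\log \hat{n})$ time and $\bigO(\sigma) = \bigO(|\hat{G}|)$ space. (A static perfect hash table would give $\bigO(1)$ lookup, but $\bigO(\log \hat{n})$ is already within the polylog budget, so the simpler choice suffices.)

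To answer a rank query $\Rank{\hat{T}}{j}{\hat{c}}$ (resp.\ a symbol occurrence query $\Occurs{\hat{T}}{b}{e}{\hat{c}}$), I would look up $\hat{c}$ in the dictionary: if $\hat{c} \notin \Sigma_{\mathrm{used}}$ then $\hat{c}$ does not occur anywhere in $\hat{T}$, so return $0$; otherwise set $c := \pi(\hat{c})$ and return $\Rank{T}{j}{c}$ (resp.\ $\Occurs{T}{b}{e}{c}$) obtained from $D$. Because $\pi$ is a bijection applied elementwise, positions in $T$ and $\hat{T}$ are in perfect correspondence and $T[i] = c$ iff $\hat{T}[i] = \hat{c}$, so the returned values are correct. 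The total space is $\bigO(|\hat{G}| \cdot \log^{\bigO(1)} \hat{n})$ and the total query time is $\bigO(\log^{\bigO(1)} \hat{n})$. There is no real obstacle in this argument; the only point worth checking carefully is the inequality $\sigma \leq |\hat{G}|$, which relies on the SLP convention that each distinct terminal appearing in the expansion must be introduced by a dedicated literal rule, and which is exactly what makes the alphabet-size hypothesis of the assumed data structure applicable to $G$.
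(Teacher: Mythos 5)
Your construction is essentially the one the paper uses: remap the terminals onto a contiguous range $[0 \dd \sigma)$ via a bijection, keep the binary rules of $\hat{G}$ untouched, apply the assumed structure to the recoded SLP, and at query time translate the query symbol with a sorted array and binary search, returning $0$ when the symbol is absent. All of that is fine, and the correctness argument for the translated queries is exactly the paper's.

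There is, however, one step that does not hold as written. You define $\Sigma_{\mathrm{used}}$ as the set of terminals appearing on the right-hand side of \emph{some} literal rule of $\hat{G}$ and assert that $\sigma = |\Sigma_{\mathrm{used}}| \leq \hat{n}$ is trivial. It is not: the SLP definition in the paper does not require every nonterminal to be reachable from $\hat{S}$, so $\hat{G}$ may contain many literal rules whose nonterminals never occur in the derivation of $\hat{T}$. In that case $\sigma$ can exceed $\hat{n}$ (e.g., $\hat{T}$ of length $2$ but hundreds of unused literal rules with distinct terminals), and then the hypothesis's requirement $\sigma = \bigO(\min(|G|,|T|))$ fails, so the assumed data structure cannot be invoked on your $G$. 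Note also that you flag $\sigma \leq |\hat{G}|$ as the delicate inequality, but that one is immediate (distinct right-hand-side terminals are at most the number of literal rules); the delicate one is $\sigma \leq \hat{n}$. The fix is exactly the extra step the paper takes: restrict attention to the symbols actually occurring in $\hat{T}$ (equivalently, to literal rules used when expanding $\hat{S}$ into $\hat{T}$) and discard the unused nonterminals before recoding; then every surviving literal terminal occurs in $\hat{T}$, giving both $\sigma \leq \hat{n}$ and $\sigma \leq |G|$, and the rest of your argument goes through verbatim. The justification you give for returning $0$ on symbols outside your dictionary (every symbol of $\Exp{\hat{G}}{\hat{S}}$ is the right-hand side of some literal rule) remains valid after this pruning, since pruning only removes rules that never contribute to $\hat{T}$.
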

\begin{proof}

  Let $c > 0$ and $c' > 0$ be constants such that, for every 1D SLP
  $G = (V, \Sigma, R, S)$ representing a string $T \in \Sigma^{n}$,
  where $\Sigma = [0 \dd \sigma)$ and $\sigma = \bigO(\min(|G|,|T|))$,
  there exists a data structure of size $\bigO(|G| \cdot \log^{c} n)$
  that answers rank (resp.\ symbol occurrence) queries on $T$ in $\bigO(\log^{c'} n)$ time.
  Let $\Sigma_{\hat{T}} = \{\hat{T}[i] : i \in [1 \dd |\hat{T}|]\}$ be the set of characters occurring in $\hat{T}$.
  Let $f : \Sigma_{\hat{T}} \rightarrow [0 \dd |\Sigma_{\hat{T}}|)$ be a function defined so that, for every
  $c \in \Sigma_{\hat{T}}$, it holds $f(c) = |\{c' \in \Sigma_{\hat{T}} : c' \prec c\}|$.
  Let $G'$ be the SLP obtained from $\hat{G}$ by setting $\Rhs{G'}{N} = f(c)$ for every
  $N \in \hat{V}$ such that $\Rhs{\hat{G}}{N} = c \in [0 \dd \sigma)$,
  leaving right-hand sides of all other nonterminals unmodified, and removing from $\hat{G}$ all nonterminals
  that are not used when expanding $\hat{S}$ into $\hat{T}$. Note that we then have
  $|\Sigma_{\hat{T}}| \leq |G'| \leq |\hat{G}|$ and $|G'| = \bigO(\hat{n})$.
  In particular, $|\Sigma_{\hat{T}}| = \bigO(\min(|G'|, \hat{n})) = \bigO(\min(|G'|, |T'|))$, where
  $T' = \bigodot_{i=1,\dots,\hat{n}} f(T[i])$ is the string represented by $G'$.
  Finally, let $A[1 \dd |\Sigma_{\hat{T}}|]$ be an array containing all elements
  of $\Sigma_{\hat{T}}$ in increasing order.

  The data structure answering rank (resp.\ symbol occurrence) queries on $\hat{T}$ consists of the following components:
  \begin{enumerate}
  \item The data structure answering rank (resp.\ symbol occurrence) queries from the assumption in the claim applied to $G'$.
    It requires $\bigO(|G'| \cdot \log^{c} \hat{n})$ space.
    We can apply this data structure to $G'$ because the
    represented string $T'$ is over the alphabet $[0 \dd |\Sigma_{\hat{T}}|)$, and it holds
    $|\Sigma_{\hat{T}}| = \bigO(\min(|G'|, |T'|))$.
  \item The array $A$ stored in plain form, which requires $\bigO(|\Sigma_{\hat{T}}|) = \bigO(|G'|)$ space.
  \end{enumerate}
  In total, the structure uses $\bigO(|G'|  + |G'| \cdot \log^{c} \hat{n})
  = \bigO(|\hat{G}| \cdot \log^{c} \hat{n}) = \bigO(|\hat{G}| \cdot \log^{\bigO(1)} \hat{n})$ space.

  Let $j \in [0 \dd \hat{n}]$ (resp.\ $b,e \in [0 \dd \hat{n}]$) and $c \in [0 \dd \hat{\sigma})$.
  Using the above structure, we compute $\Rank{\hat{T}}{j}{c}$ (resp.\ $\Occurs{\hat{T}}{b}{e}{c}$)
  as follows:
  \begin{enumerate}
  \item Using binary search on array $A$, in $\bigO(\log |\Sigma_{\hat{T}}|) = \bigO(\log \hat{n})$ time we check
    whether $c \in \Sigma_{\hat{T}}$. If $c \not\in \Sigma_{\hat{T}}$, then
    we return that $\Rank{\hat{T}}{j}{c} = 0$ (resp.\ $\Occurs{\hat{T}}{b}{e}{c} = 0$).
    Let us thus assume
    that $c \in \Sigma_{\hat{T}}$, and let $k \in [1 \dd |\Sigma_{\hat{T}}|]$ be such that $A[k] = c$ (the index $k$
    is easy to compute during the binary search). Note that we then have $f(c) = k-1$.
  \item Using the data structure from the claim applied to $G'$, in
    $\bigO(\log^{c'} \hat{n})$ time,
    we compute $r = \Rank{T'}{j}{k-1}$ (resp.\ $r = \Occurs{T'}{b}{e}{k-1}$) and return $r$ as the answer.
    Correctness follows from $f(c) = k-1$ and the fact that
    $\Rank{\hat{T}}{j}{c} = \Rank{T'}{j}{f(c)}$ (resp.\ $\Occurs{\hat{T}}{b}{e}{c} = \Occurs{T'}{b}{e}{f(c)}$)
    holds for $c \in \Sigma_{\hat{T}}$.
  \end{enumerate}
  In total, the query takes $\bigO(\log \hat{n} + \log^{c'} \hat{n}) = \bigO(\log^{\bigO(1)} \hat{n})$ time.
\end{proof}

\subsubsection{Reductions from Rank Queries over 1D SLPs}\label{sec:reductions-from-rank}

\begin{definition}[Symbol marking vector]\label{def:mark-char}
  Let $\sigma \in \Zp$ and $T \in [0 \dd \sigma)^{n}$. For any
  $c \in [0 \dd \sigma)$, we define $\MarkChar{T}{c} \in
  \BinaryAlphabet^{1 \times n}$ such that, for every $j \in [1 \dd n]$,
  $\MarkChar{T}{c}[1,j] = 1$ if and only if $T[j] = c$.
\end{definition}

\begin{definition}[Alphabet marking matrix]\label{def:mark-all-chars}
  Let $\sigma \in \Zp$ and $T \in [0 \dd \sigma)^{n}$. We then define
  (see \cref{def:mark-char})
  \[
    \MarkAllChars{T}{\sigma} =
      \textstyle\hconcat_{c=0,\dots,\sigma-1} \MarkChar{T}{c}
      \in \BinaryAlphabet^{\sigma \times n}.
  \]
\end{definition}

\begin{observation}\label{ob:mark-all-chars-concat}
  Let $\sigma \in \Zp$, $T_1 \in [0 \dd \sigma)^{n_1}$,
  $T_2 \in [0 \dd \sigma)^{n_2}$, and $T_3 = T_1 \cdot T_2$.
  Then, it holds $\MarkAllChars{T_3}{\sigma} =
  \MarkAllChars{T_1}{\sigma} \vconcat \MarkAllChars{T_2}{\sigma}$
  (\cref{def:mark-all-chars}).
\end{observation}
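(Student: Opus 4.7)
The plan is to prove the observation by a direct entry-by-entry verification, after first checking that the shapes on the two sides agree. Since $\MarkAllChars{T_i}{\sigma}$ is a matrix in $\BinaryAlphabet^{\sigma \times n_i}$ for $i \in \{1,2\}$ and the two share the same number of rows ($\sigma$), the right-hand side $\MarkAllChars{T_1}{\sigma} \vconcat \MarkAllChars{T_2}{\sigma}$ is well-defined and lies in $\BinaryAlphabet^{\sigma \times (n_1+n_2)}$. On the other hand, $T_3 = T_1 \cdot T_2$ has length $n_1 + n_2$, so $\MarkAllChars{T_3}{\sigma} \in \BinaryAlphabet^{\sigma \times (n_1+n_2)}$ as well. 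Hence both sides have the same shape, and it remains only to compare entries.

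The core step is to fix an arbitrary symbol $c \in [0 \dd \sigma)$ and inspect the row of index $c+1$ (i.e., the row corresponding to $\MarkChar{T}{c}$ in \cref{def:mark-all-chars}) on both sides. By unfolding \cref{def:mark-all-chars}, the $(c+1)$-th row of $\MarkAllChars{T_3}{\sigma}$ equals $\MarkChar{T_3}{c}$, while the $(c+1)$-th row of $\MarkAllChars{T_1}{\sigma} \vconcat \MarkAllChars{T_2}{\sigma}$ equals $\MarkChar{T_1}{c} \vconcat \MarkChar{T_2}{c}$ by the definition of $\vconcat$ (horizontal concatenation preserves row indices). I would then check cell by cell: for $j \in [1 \dd n_1]$, we have $T_3[j] = T_1[j]$, so by \cref{def:mark-char} both entries equal $1$ iff $T_1[j] = c$; for $j \in [n_1+1 \dd n_1+n_2]$, we have $T_3[j] = T_2[j - n_1]$, so both entries equal $1$ iff $T_2[j-n_1] = c$. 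This establishes equality in row $c+1$, and since $c$ was arbitrary, the whole matrices agree.

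There is no real obstacle here: the only subtlety is bookkeeping the conventions that $\vconcat$ glues matrices horizontally (extending the column range) while $\MarkAllChars{T}{\sigma}$ itself is built by $\hconcat$ of single-row slices (stacking them vertically into rows indexed by the alphabet). Once those conventions are kept straight, the proof is a one-line unfolding of definitions.
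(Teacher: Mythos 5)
Your proof is correct, and it matches the paper's treatment: the paper states this as an observation with no written proof precisely because it follows by the entry-by-entry unfolding of \cref{def:mark-char,def:mark-all-chars} that you carry out, including the one genuine subtlety you flag (that $\vconcat$ extends columns while $\MarkAllChars{T}{\sigma}$ stacks the rows $\MarkChar{T}{c}$ via $\hconcat$). Nothing is missing.
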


\begin{lemma}\label{lm:reduce-rank-to-line-sum}
  Let $\sigma \in \Zp$ and $T \in [0 \dd \sigma)^{n}$. Denote
  $T' = \MarkAllChars{T}{\sigma}$ (\cref{def:mark-all-chars}).
  For every $j \in [0 \dd n]$ and every $c \in [0 \dd \sigma)$, it holds
  (see \cref{def:rank,def:2d-integer-problems}):
  \[
    \Rank{T}{j}{c} = \LineSumQuery{T'}{c+1}{j}{j}.
  \]
\end{lemma}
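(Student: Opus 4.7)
The plan is to prove the identity by a direct unfolding of the three definitions involved (the rank query, the line sum query expressed via a sum query, and the alphabet marking matrix) and by verifying that they describe the same set of indices where $T$ takes value $c$.

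First I would apply the definition of line sum (\cref{def:2d-integer-problems}) to rewrite
\[
  \LineSumQuery{T'}{c+1}{j}{j} = \SumQuery{T'}{c}{0}{c+1}{j},
\]
then unfold the sum query into the explicit double sum
\[
  \SumQuery{T'}{c}{0}{c+1}{j} = \sum_{(i,k)\in (c\dd c+1]\times(0\dd j]} T'[i,k] = \sum_{k=1}^{j} T'[c+1,k],
\]
since the outer range $(c\dd c+1]$ contains only the single row index $c+1$.

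Next I would invoke \cref{def:mark-all-chars} to identify row $c+1$ of $T'$ with the vector $\MarkChar{T}{c}$; that is, $T'[c+1,k] = \MarkChar{T}{c}[1,k]$ for every $k\in[1\dd n]$. By \cref{def:mark-char}, this entry equals $1$ if $T[k]=c$ and $0$ otherwise, so
\[
  \sum_{k=1}^{j} T'[c+1,k] = |\{k\in[1\dd j] : T[k]=c\}| = |\{k\in(0\dd j] : T[k]=c\}|,
\]
which by \cref{def:rank} is precisely $\Rank{T}{j}{c}$, completing the proof.

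I do not anticipate any real obstacle here, since the entire argument is a chain of definitional rewrites; the only small subtlety is to check that the row index $c+1$ coming from the $1$-based indexing of $T'$ matches the $0$-based symbol $c$ used in $\MarkAllChars{T}{\sigma}$ (see \cref{def:mark-all-chars}, where the outer concatenation runs $c=0,\dots,\sigma-1$, so the $(c+1)$-st horizontal block is $\MarkChar{T}{c}$), and that the chosen sum query rectangle $(c\dd c+1]\times(0\dd j]$ has height exactly one so that the double sum collapses to a single sum over $k\in[1\dd j]$.
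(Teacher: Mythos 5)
Your proposal is correct and follows essentially the same route as the paper's proof: rewrite the line sum as the sum query $\SumQuery{T'}{c}{0}{c+1}{j}$, identify row $c+1$ of $T'$ with $\MarkChar{T}{c}$, and observe that summing its $0/1$ entries over $(0 \dd j]$ counts exactly the occurrences of $c$ in $T(0 \dd j]$, i.e., $\Rank{T}{j}{c}$. The indexing subtlety you flag (row $c+1$ versus symbol $c$) is indeed the only point needing care, and you resolve it correctly.
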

\begin{proof}
  By \cref{def:2d-integer-problems,def:mark-char,def:mark-all-chars}, it holds
  \[
    \LineSumQuery{T'}{c+1}{j}{j} =
    \SumQuery{T'}{c}{0}{c+1}{j} =
    \textstyle\sum_{t \in (0 \dd j]} \MarkChar{T}{c}[1,t].
  \]
  Consequently, by \cref{def:rank}, we obtain:
  \begin{align*}
    \Rank{T}{j}{c}
      &= |\{t \in (0 \dd j] : T[t] = c\}|\\
      &= |\{t \in (0 \dd j] : \MarkChar{T}{c}[1,t] = 1\}|\\
      &= \textstyle\sum_{t \in (0 \dd j]} \MarkChar{T}{c}[1,t]\\
      &= \LineSumQuery{T'}{c+1}{j}{j}.
      \qedhere
  \end{align*}
\end{proof}

\begin{lemma}\label{lm:mark-all-chars-grammar}
  Let $\sigma \in \Zp$, $T \in [0 \dd \sigma)^{n}$, and let $G$
  be a 1D SLP representing $T$, i.e., such that $\Lang{G} = \{T\}$.
  There exists a 2D SLG $G'$ of size $|G'| = \bigO(|G| + \sigma)$ such that
  $\Lang{G'} = \{\MarkAllChars{T}{\sigma}\}$ (\cref{def:mark-all-chars}).
\end{lemma}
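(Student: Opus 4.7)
The plan is to mirror the rule structure of $G$ one level up, using \cref{ob:mark-all-chars-concat}: 1D concatenation lifts to column-concatenation $\vconcat$ of the corresponding marking matrices, all of which have exactly $\sigma$ rows. Hence every binary rule of $G$ can be simulated by a single \emph{vertical} rule in $G'$; the only fresh ingredient is a compact gadget producing the $\sigma\times 1$ single-character marking columns $\MarkAllChars{a}{\sigma}$ for each terminal $a$ that appears in $G$.

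The character gadget uses a linear ``zero-column chain.'' I would first introduce two literal nonterminals $Z_\zero, Z_\one \in V_{l}$ expanding to $\zero$ and $\one$, and then, for $i = 2, \ldots, \sigma - 1$, a horizontal nonterminal $C_i \in V_{h}$ with $\Rhs{G'}{C_i} = C_{i-1}\, Z_\zero$, writing $C_1 := Z_\zero$ as shorthand. A short induction shows that $\Exp{G'}{C_i}$ is the $i\times 1$ all-zero column, and the whole chain has total size $\bigO(\sigma)$. Next, for every $a \in [0 \dd \sigma)$ that occurs as a terminal right-hand side in some rule of $G$, I would add a horizontal nonterminal $M_a \in V_{h}$ with $\Rhs{G'}{M_a} = C_a\, Z_\one\, C_{\sigma - a - 1}$, omitting the out-of-range $C$-factor when $a \in \{0, \sigma-1\}$ and taking $M_0 := Z_\one$ in the degenerate case $\sigma = 1$. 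Each such rule has constant length, there are at most $|G|$ distinct such terminals, and by construction $\Exp{G'}{M_a} = \MarkAllChars{a}{\sigma}$; this layer contributes $\bigO(|G|)$.

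Finally, I would mirror $G$ itself. For each $N \in V$ of $G$, if $\Rhs{G}{N} = a \in \Sigma$ alias $N' := M_a$; if $\Rhs{G}{N} = AB$ introduce $N' \in V_{v}$ with $\Rhs{G'}{N'} = A'B'$. The vertical rule is well-formed since both $\Exp{G'}{A'}$ and $\Exp{G'}{B'}$ have $\sigma$ rows. Set the start nonterminal of $G'$ to $S'$ and order the variables of $G'$ so that the $N'$'s come first (in the reverse topological order inherited from $G$), then the $M_a$'s, then the $C_i$ chain, and finally $Z_\zero, Z_\one$. A straightforward induction along this ordering, applying \cref{ob:mark-all-chars-concat} at each branching rule, gives $\Exp{G'}{N'} = \MarkAllChars{\Exp{G}{N}}{\sigma}$ for every $N \in V$, so $\Lang{G'} = \{\MarkAllChars{T}{\sigma}\}$. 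Summing the three layers yields $|G'| = \bigO(|G| + \sigma)$. The one design choice that matters for the size bound, and thus the main (mild) obstacle, is the zero-column gadget: a doubling construction would force $\bigO(\log\sigma)$ variables per distinct terminal and inflate the total to $\bigO(|G|\log\sigma)$, while the linear chain keeps us at $\bigO(|G|+\sigma)$.
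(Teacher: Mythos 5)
Your proposal is correct and follows essentially the same construction as the paper: a linear chain of $\bigO(\sigma)$ horizontal nonterminals generating all-zero columns, constant-size horizontal rules assembling each single-symbol marking column $\zero^{a}\,\one\,\zero^{\sigma-a-1}$, and a vertical mirror of $G$'s rules whose correctness follows by induction via \cref{ob:mark-all-chars-concat}. The minor deviations (building marking columns only for terminals occurring in $G$, aliasing terminal-rule nonterminals instead of adding unary rules, and omitting boundary factors rather than using an empty nonterminal) do not change the argument or the $\bigO(|G|+\sigma)$ bound.
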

\begin{proof}

  Denote $G = (V, \Sigma, R, S)$, $V = \{N_1, \dots, N_{g}\}$, and let $s \in [1 \dd g]$
  be such that $S = N_{s}$. We then let
  $G' = (V_{l}, V_{h}, V_{v}, \Sigma', R', S')$, where
  \begin{itemize}
  \item $\Sigma' = \{\zero, \one\}$,
  \item $V_{l} = \{M_{\zero}, M_{\one}\}$,
  \item $V_{h} = \{Z_0, \dots, Z_{\sigma-1}, X_0, \dots, X_{\sigma-1}\}$,
  \item $V_{v} = \{Y_1, \dots, Y_g\}$,
  \item $S' = Y_{s}$.
  \end{itemize}
  We set $\Rhs{G'}{M_{\zero}} = \zero$ and $\Rhs{G'}{M_{\one}} = \one$.
  We also set $\Rhs{G'}{Z_{0}} = \emptystring$, and, for every $i \in [1 \dd \sigma)$,
  we let $\Rhs{G'}{Z_{i}} = Z_{i-1} \cdot M_{\zero}$. For every $i \in [0 \dd \sigma)$, we
  also define $\Rhs{G'}{X_{i}} = Z_{i} \cdot M_{\one} \cdot Z_{\sigma-i-1}$.
  Lastly, we define $\Rhs{G'}{Y_i}$ for all $i \in [1 \dd g]$. We consider two cases:
  \begin{itemize}
  \item If $|\Rhs{G}{N_i}| = 1$, then, letting $c \in [0 \dd \sigma)$ be such that
    $\Rhs{G}{N_i} = c$, we set $\Rhs{G'}{Y_i} = X_{c}$.
  \item If $|\Rhs{G}{N_i}| = 2$, then, letting $j_1,j_2 \in [1 \dd g]$ be such that
    $\Rhs{G}{N_i} = N_{j_1} \cdot N_{j_2}$, we set $\Rhs{G'}{Y_i} = Y_{j_1} \cdot Y_{j_2}$.
  \end{itemize}
  The size of the above grammar is $|G'| = \bigO(g + \sigma) = \bigO(|G| + \sigma)$.

  To show that $\Lang{G'} = \{\MarkAllChars{T}{\sigma}\}$, first observe that,
  for every $a \in [0 \dd \sigma)$, it holds
  $\Exp{G'}{X_a} =
  \zero^{a \times 1} \hconcat
  \one^{1 \times 1} \hconcat
  \zero^{(\sigma-a-1) \times 1} \in
  \BinaryAlphabet^{\sigma \times 1}$.
  By induction on the length of the expansion, we thus
  obtain that, for every $i \in [1 \dd g]$, it holds
  \[
    \Exp{G'}{Y_i} = \MarkAllChars{\Exp{G}{N_i}}{\sigma} \in \BinaryAlphabet^{\sigma \times |\Exp{G}{N_i}|}.
  \]
  In particular,
  \[
    \Lang{G'}
      = \{\Exp{G'}{S'}\}
      = \{\Exp{G'}{Y_s}\}
      = \{\MarkAllChars{\Exp{G}{N_s}}{\sigma}\}
      = \{\MarkAllChars{T}{\sigma}\}.
      \qedhere
  \]
\end{proof}

\begin{proposition}\label{pr:reduce-rank-to-line-sum-small-sigma}
  Assume that, for every 2D SLP $G = (V_{l}, V_{h}, V_{v}, \Sigma, R, S)$, where $\Sigma = \BinaryAlphabet$,
  representing a 2D string $T \in \BinaryAlphabet^{r \times c}$,
  there exists a data structure of size $\bigO(|G| \cdot \log^{\bigO(1)} n)$
  that answers line sum queries (\cref{def:2d-integer-problems}) on $T$ in $\bigO(\log^{\bigO(1)} n)$ time, where
  $n = \max(r, c)$.
  Then, for every 1D SLP $\hat{G} = (\hat{V}, \hat{\Sigma}, \hat{R}, \hat{S})$ representing $\hat{T} \in \hat{\Sigma}^{\hat{n}}$,
  where $\hat{\Sigma} = [0 \dd \hat{\sigma})$ and $\hat{\sigma} = \bigO(\min(|\hat{G}|, |\hat{T}|))$,
  there exists a data structure of size $\bigO(|\hat{G}| \cdot \log^{\bigO(1)} \hat{n})$ that answers rank
  queries (\cref{def:rank}) in $\bigO(\log^{\bigO(1)} \hat{n})$ time.
\end{proposition}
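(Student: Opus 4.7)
The plan is to compose the three ingredients already built in this section. Given the 1D SLP $\hat{G}$ representing $\hat{T} \in [0 \dd \hat{\sigma})^{\hat{n}}$, I first invoke \cref{lm:mark-all-chars-grammar} to obtain a 2D SLG $G'$ of size $\bigO(|\hat{G}| + \hat{\sigma})$ such that $\Lang{G'} = \{T'\}$, where $T' = \MarkAllChars{\hat{T}}{\hat{\sigma}} \in \BinaryAlphabet^{\hat{\sigma} \times \hat{n}}$. Because $\hat{\sigma} = \bigO(\min(|\hat{G}|, |\hat{T}|)) = \bigO(|\hat{G}|)$, this size simplifies to $|G'| = \bigO(|\hat{G}|)$. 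I then apply \cref{ob:2d-slg-to-2d-slp} to convert $G'$ into a 2D SLP $G''$ with $|G''| = \Theta(|G'|) = \bigO(|\hat{G}|)$ and $\Lang{G''} = \{T'\}$.

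Next, I apply the hypothesized data structure for line sum queries to the 2D SLP $G''$. Let $n = \max(\Rows{T'}, \Cols{T'}) = \max(\hat{\sigma}, \hat{n})$. Since $\hat{\sigma} = \bigO(\hat{n})$, we have $n = \bigO(\hat{n})$, so $\log^{\bigO(1)} n = \log^{\bigO(1)} \hat{n}$. The resulting structure uses $\bigO(|G''| \cdot \log^{\bigO(1)} n) = \bigO(|\hat{G}| \cdot \log^{\bigO(1)} \hat{n})$ space and answers line sum queries on $T'$ in $\bigO(\log^{\bigO(1)} \hat{n})$ time; this is the sole component of the data structure I construct.

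To answer a rank query $\Rank{\hat{T}}{j}{c}$ for $j \in [0 \dd \hat{n}]$ and $c \in [0 \dd \hat{\sigma})$, I invoke \cref{lm:reduce-rank-to-line-sum}, which gives the identity $\Rank{\hat{T}}{j}{c} = \LineSumQuery{T'}{c+1}{j}{j}$. This single call to the line sum structure answers the rank query in $\bigO(\log^{\bigO(1)} \hat{n})$ time, completing the reduction.

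There is essentially no technical obstacle here; each step is an off-the-shelf use of a prior result. The only bookkeeping concern is verifying that the two different size parameters ($n$ for the 2D instance versus $\hat{n}$ for the 1D instance) agree up to constant factors after taking logs, which follows from the hypothesis $\hat{\sigma} = \bigO(\hat{n})$; and that the required alphabet assumption of the hypothesis is satisfied, which is immediate because $T'$ is binary.
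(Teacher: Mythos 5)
Your proposal is correct and follows essentially the same route as the paper's proof: construct the 2D SLP for $\MarkAllChars{\hat{T}}{\hat{\sigma}}$ via \cref{lm:mark-all-chars-grammar} and \cref{ob:2d-slg-to-2d-slp}, apply the hypothesized line sum structure (noting $n' = \bigO(\hat{n})$ since $\hat{\sigma} = \bigO(\hat{n})$), and answer each rank query with a single line sum query using \cref{lm:reduce-rank-to-line-sum}. The only difference is cosmetic: the paper fixes explicit exponents $c, c'$ for the polylog bounds, while you carry the $\bigO(1)$ exponents implicitly.
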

\begin{proof}

  Let $c > 0$ and $c' > 0$ be constants such that, for every 2D SLP
  $G = (V_{l}, V_{h}, V_{v}, \Sigma, R, S)$ (where $\Sigma = \BinaryAlphabet$)
  representing a 2D string $T \in \BinaryAlphabet^{r \times c}$, the structure from the claim uses
  $\bigO(|G| \cdot \log^{c} n)$ space (where $n = \max(r, c)$) and
  answers line sum queries on $T$ in $\bigO(\log^{c'} n)$ time.
  Let $G'$ be a 2D SLP of size $|G'| = \bigO(|\hat{G}| + \hat{\sigma}) = \bigO(|\hat{G}|)$ such that
  $\Lang{G'} = \{\MarkAllChars{\hat{T}}{\hat{\sigma}}\}$ (\cref{def:mark-all-chars}). Such a 2D SLP exists
  by \cref{lm:mark-all-chars-grammar} and \cref{ob:2d-slg-to-2d-slp}.
  Denote $T' = \MarkAllChars{\hat{T}}{\hat{\sigma}}$ and $n' = \max(\Rows{T'}, \Cols{T'})$.
  Note that, since $\Rows{T'} = \hat{\sigma} = \bigO(\hat{n})$ and $\Cols{T'} = \hat{n}$, it holds $n' = \bigO(\hat{n})$.

  The data structure answering rank queries on $\hat{T}$ consists of a single component:
  the data structure answering line sum queries from the claim applied to $G'$. Note that
  we can apply this structure to $G'$ because $T'$ is over a binary alphabet. The structure
  needs $\bigO(|G'| \cdot \log^{c} n') = \bigO(|\hat{G}| \cdot \log^{c} \hat{n}) =
  \bigO(|\hat{G}| \cdot \log^{\bigO(1)} \hat{n})$ space.

  Let $j \in [0 \dd \hat{n}]$ and $c \in [0 \dd \hat{\sigma})$. Using the above structure, we compute
  $\Rank{\hat{T}}{j}{c}$ as follows:
  \begin{enumerate}
  \item First, in $\bigO(\log^{c'} n') = \bigO(\log^{c'} \hat{n})$ time, we compute
    $s = \LineSumQuery{T'}{c+1}{j}{j}$.
  \item By \cref{lm:reduce-rank-to-line-sum},
    it holds $s = \Rank{\hat{T}}{j}{c}$.
    We thus return $s$ as the answer.
  \end{enumerate}
  The query takes $\bigO(\log^{c'} \hat{n}) = \bigO(\log^{\bigO(1)} \hat{n})$ time.
\end{proof}

\begin{proposition}\label{pr:reduce-rank-to-line-sum}
  Assume that, for every 2D SLP $G = (V_{l}, V_{h}, V_{v}, \Sigma, R, S)$, where $\Sigma = \BinaryAlphabet$,
  representing a 2D string $T \in \BinaryAlphabet^{r \times c}$,
  there exists a data structure of size $\bigO(|G| \cdot \log^{\bigO(1)} n)$
  that answers line sum queries (\cref{def:2d-integer-problems}) on $T$ in $\bigO(\log^{\bigO(1)} n)$ time, where
  $n = \max(r, c)$.
  Then, for every 1D SLP $\hat{G} = (\hat{V}, \hat{\Sigma}, \hat{R}, \hat{S})$
  representing $\hat{T} \in \hat{\Sigma}^{\hat{n}}$,
  where $\hat{\Sigma} = [0 \dd \hat{\sigma})$ and $\hat{\sigma} = \hat{n}^{\bigO(1)}$,
  there exists a data structure of size $\bigO(|\hat{G}| \cdot \log^{\bigO(1)} \hat{n})$
  that answers rank queries on $\hat{T}$ in $\bigO(\log^{\bigO(1)} \hat{n})$ time.
\end{proposition}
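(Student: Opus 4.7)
The plan is to derive this statement by composing two already-established reductions: \cref{pr:reduce-rank-to-line-sum-small-sigma}, which handles the case of a small alphabet, with \cref{pr:slp-rank-and-symbol-occ-alphabet-reduction}, which lifts rank data structures from the small-alphabet regime to the $\hat{\sigma} = \hat{n}^{\bigO(1)}$ regime. No new construction is needed.

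More concretely, first I would unpack the hypothesis: we are given that for every binary 2D SLP $G$ encoding a text in $\BinaryAlphabet^{r \times c}$ there exists a data structure of size $\bigO(|G|\cdot \log^{\bigO(1)} n)$ answering line sum queries in $\bigO(\log^{\bigO(1)} n)$ time. Applying \cref{pr:reduce-rank-to-line-sum-small-sigma} directly to this hypothesis yields the conclusion that for every 1D SLP $\tilde{G}$ encoding a string $\tilde{T} \in \tilde{\Sigma}^{\tilde{n}}$ with $\tilde{\Sigma} = [0 \dd \tilde{\sigma})$ and $\tilde{\sigma} = \bigO(\min(|\tilde{G}|, |\tilde{T}|))$, there is a data structure of size $\bigO(|\tilde{G}| \cdot \log^{\bigO(1)} \tilde{n})$ answering rank queries on $\tilde{T}$ in $\bigO(\log^{\bigO(1)} \tilde{n})$ time.

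Next, I would feed this intermediate guarantee into \cref{pr:slp-rank-and-symbol-occ-alphabet-reduction}, whose exact purpose is to take the hypothesis ``rank queries can be supported in the stated bounds whenever the alphabet size is $\bigO(\min(|G|,|T|))$'' and upgrade it to ``rank queries can be supported in the same bounds whenever the alphabet size is polynomial in the text length.'' Matching the hypothesis of \cref{pr:slp-rank-and-symbol-occ-alphabet-reduction} with what we obtained in the previous step, its conclusion gives exactly the statement of the present proposition: for every 1D SLP $\hat{G}$ representing $\hat{T} \in \hat{\Sigma}^{\hat{n}}$ with $\hat{\Sigma} = [0 \dd \hat{\sigma})$ and $\hat{\sigma} = \hat{n}^{\bigO(1)}$, there is a data structure of size $\bigO(|\hat{G}| \cdot \log^{\bigO(1)} \hat{n})$ answering rank queries on $\hat{T}$ in $\bigO(\log^{\bigO(1)} \hat{n})$ time.

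Since the proof is a direct chaining of two black-box reductions, there is essentially no obstacle beyond bookkeeping. The only thing I would verify is that the $\bigO(\log^{\bigO(1)} \cdot)$ overheads introduced by the two reductions compose cleanly, which is immediate since polylog overheads are closed under composition, and that the alphabet-size conditions match at the interface, which they do by construction of \cref{pr:slp-rank-and-symbol-occ-alphabet-reduction}.
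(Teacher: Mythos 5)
Your proposal is correct and matches the paper's proof exactly: the paper also establishes this proposition by combining \cref{pr:reduce-rank-to-line-sum-small-sigma} with the alphabet reduction of \cref{pr:slp-rank-and-symbol-occ-alphabet-reduction}. The chaining and bookkeeping you describe are precisely what the paper's (one-line) proof relies on.
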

\begin{proof}
  The result follows by combining \cref{pr:reduce-rank-to-line-sum-small-sigma} and
  \cref{pr:slp-rank-and-symbol-occ-alphabet-reduction}.
\end{proof}

\reducefromrank*
\begin{proof}
  The reduction from rank queries on a 1D SLP to line sum queries on a 2D SLP follows
  from \cref{pr:reduce-rank-to-line-sum}. The hardness for sum queries follows immediately,
  since line sum queries are a special case of sum queries.
\end{proof}

\subsubsection{Reductions from Symbol Occurrence
  Queries over 1D SLPs}\label{sec:reductions-from-symbol-occ}

\begin{definition}[Extended alphabet marking matrix]\label{def:ext-mark-all-chars}
  Let $\sigma \in \Zp$ and $T \in [0 \dd \sigma)^{n}$.
  Denote $Z = \zero^{(n-1) \times n}$.
  We then define (see \cref{def:mark-char})
  \[
    \ExtMarkAllChars{T}{\sigma} =
      \textstyle\hconcat_{a=0,\dots,\sigma-1}
        \big( \MarkChar{T}{a} \hconcat Z \big)
      \in \BinaryAlphabet^{n\sigma \times n}.
  \]
\end{definition}

\begin{lemma}\label{lm:reduce-symbol-occ-to-square-all-zero}
  Let $\sigma \in \Zp$ and $T \in [0 \dd \sigma)^{n}$. Denote
  $T' = \ExtMarkAllChars{T}{\sigma}$ (\cref{def:ext-mark-all-chars}). For
  every $a \in [0 \dd \sigma)$ and every $b, e \in [0 \dd n]$ satisfying
  $b < e$, the following conditions are equivalent:
  \begin{enumerate}
  \item\label{lm:reduce-symbol-occ-to-square-all-zero-it-1}
    $\Occurs{T}{b}{e}{a} = 0$ (\cref{def:symbol-occ}),
  \item\label{lm:reduce-symbol-occ-to-square-all-zero-it-2}
    $\SquareAllZero{T'}{n \cdot (a+1)}{e}{e-b} = 1$ (\cref{def:2d-integer-problems}).
  \end{enumerate}
\end{lemma}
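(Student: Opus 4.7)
The plan is to analyze the block structure of $T' = \ExtMarkAllChars{T}{\sigma}$ and identify exactly which entries of $T'$ are inspected by the right-hand square query. By \cref{def:ext-mark-all-chars}, $T'$ decomposes into $\sigma$ consecutive bands of $n$ rows; band $a' \in [0 \dd \sigma)$ occupies rows $(a'n \dd n(a'+1)]$ and consists of the single marking row $\MarkChar{T}{a'}$ together with $n-1$ zero rows contributed by $Z = \zero^{(n-1) \times n}$. The parameter $e_r = n(a+1)$ in the query is precisely the last row of band $a$, which (by the placement dictated by \cref{def:ext-mark-all-chars}) is the row holding $\MarkChar{T}{a}$.

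First, I would unfold $\SquareAllZero{T'}{n(a+1)}{e}{e-b}$ using \cref{def:2d-integer-problems}, so that condition~(\ref{lm:reduce-symbol-occ-to-square-all-zero-it-2}) becomes the assertion that the submatrix
\[
  T'\bigl(n(a+1)-(e-b) \dd n(a+1)\bigr]\bigl(b \dd e\bigr]
\]
is identically zero. Since $0 < e - b \leq n$, this row range has length at most $n$ and lies entirely within band $a$, so no marking row for any other symbol $a' \neq a$ can enter the window.

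Second, I would observe that within band $a$ the only potentially non-zero row is the one carrying $\MarkChar{T}{a}$ (included because $e_r = n(a+1)$ equals its row index), while every other row of the window sits inside the $Z$ pad of band $a$ and is identically zero. Thus the submatrix vanishes if and only if $\MarkChar{T}{a}[1, j] = \zero$ for every $j \in (b \dd e]$. Finally, by \cref{def:mark-char} this happens exactly when no $j \in (b \dd e]$ satisfies $T[j] = a$, which by \cref{def:symbol-occ} is condition~(\ref{lm:reduce-symbol-occ-to-square-all-zero-it-1}), i.e.\ $\Occurs{T}{b}{e}{a} = 0$.

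The main delicate step is the row arithmetic in the first two steps: verifying that $(n(a+1) - (e-b) \dd n(a+1)]$ is contained in band $a$ (immediate from $e - b \leq n$) and that the marking row for $a$ is always included (because its row index coincides with $n(a+1)$, the last row of the window). Everything else is a direct unfolding of \cref{def:mark-char,def:symbol-occ,def:2d-integer-problems,def:ext-mark-all-chars}.
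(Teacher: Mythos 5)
Your overall strategy is the same as the paper's: unfold $\SquareAllZero{T'}{n(a+1)}{e}{e-b}$ into the submatrix $T'(n(a+1)-(e-b) \dd n(a+1)](b \dd e]$, check that it stays inside band $a$, identify it as the segment $\MarkChar{T}{a}(b \dd e]$ padded with zero rows, and chain the equivalences back to $\Occurs{T}{b}{e}{a}$ via \cref{def:mark-char,def:symbol-occ}. The paper's proof is exactly this computation.

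However, the one step you yourself single out as delicate is asserted incorrectly relative to \cref{def:ext-mark-all-chars} as written. Since $A \hconcat B$ places $A$ at the smaller row indices, the band of symbol $a$ is $\MarkChar{T}{a} \hconcat Z$ with the marking row as its \emph{first} row, i.e.\ row $an+1$ of $T'$ --- not its last row $n(a+1)$, as you claim is ``dictated'' by the definition. Under that literal reading, the window $(n(a+1)-(e-b) \dd n(a+1)] \times (b \dd e]$ consists solely of rows of the zero pad whenever $e-b<n$, so the marking row is not included, the submatrix is trivially all-zero, and your key inclusion (and indeed the lemma itself, read literally) fails --- e.g.\ $n=2$, $T=(0,0)$, $a=0$, $b=0$, $e=1$ gives $\Occurs{T}{b}{e}{a}=1$ while $\SquareAllZero{T'}{n(a+1)}{e}{e-b}=1$. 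Your reading --- marking row at the bottom of each band, equivalently band $= Z \hconcat \MarkChar{T}{a}$ --- is the one under which the anchor $e_r = n(a+1)$ and the lemma are correct, and it is evidently the intended construction; the paper's own proof is loose at the very same spot (it writes the window as $\MarkChar{T}{a}(b \dd e]$ stacked \emph{above} $\zero^{(e-b-1)\times(e-b)}$, which matches neither orientation). So your argument is sound under the intended reading, but it cannot simply be cited as a consequence of \cref{def:ext-mark-all-chars} as stated: to be airtight you must either adjust the definition to $Z \hconcat \MarkChar{T}{a}$ (with the corresponding change in the grammar of \cref{lm:ext-mark-all-chars-grammar}) or keep the definition and anchor the query at $e_r = an + (e-b)$.
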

\begin{proof}
  Observe that, by \cref{def:ext-mark-all-chars}, it holds
  \[
    T'(n \cdot (a+1)-(e-b) \dd n \cdot (a+1)](b \dd e]
    = \MarkChar{T}{a}(b \dd e] \hconcat \zero^{(e-b-1) \times (e-b)}.
  \]
  This implies that each adjacent pair of statements in the following
  sequence is equivalent:
  \begin{itemize}
  \item $\SquareAllZero{T'}{n \cdot (a+1)}{e}{e-b} = 1$,
  \item $T'(n \cdot (a+1)-(e-b) \dd n \cdot (a+1)](b \dd e] = \zero^{(e-b) \times (e-b)}$,
  \item $\MarkChar{T}{a}(b \dd e] = \zero^{1 \times (e-b)}$,
  \item for every $j \in (b \dd e]$, it holds $T[j] \neq a$,
  \item $\Occurs{T}{b}{e}{a} = 0$.
  \end{itemize}
  In particular, the first condition is equivalent to the last one.
\end{proof}

\begin{lemma}\label{lm:ext-mark-all-chars-grammar}
  Let $\sigma \in \Zp$, $T \in [0 \dd \sigma)^{n}$ (where $n \geq 2$),
  and let $G$ be a 1D SLP representing $T$, i.e.,
  such that $\Lang{G} = \{T\}$. There exists
  a 2D SLG $G'$ of size $|G'| = \bigO(|G| + \sigma)$ such that
  $\Lang{G'} = \{\ExtMarkAllChars{T}{\sigma}\}$ (\cref{def:ext-mark-all-chars}).
\end{lemma}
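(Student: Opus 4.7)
The plan is to mirror the construction from the proof of \cref{lm:mark-all-chars-grammar}, but with enlarged column nonterminals that expand to single-$\one$ columns of height $n\sigma$ instead of height $\sigma$. The key observation is that every column of $\ExtMarkAllChars{T}{\sigma}$ is a column of length $n\sigma$ containing exactly one $\one$: specifically, column $j$ has its unique $\one$ at row $n\cdot T[j]+1$. So it suffices to build a column nonterminal $X_a$ for each $a \in [0 \dd \sigma)$ producing such a single-$\one$ column, and then rebuild the structure of $G$ on top by replacing each terminal $a$ with $X_a$.

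First I would introduce literal nonterminals $M_{\zero},M_{\one}$ and construct, via standard binary doubling inside $V_{h}$ (the class built using $\hconcat$), nonterminals $B$ and $B'$ with $\Exp{G'}{B} = \zero^{n\times 1}$ and $\Exp{G'}{B'} = \zero^{(n-1)\times 1}$; both chains cost $\bigO(\log n)$ rules, and the assumption $n\geq 2$ ensures $B'$ is nonempty. Next I build a chain $P_0,P_1,\ldots,P_{\sigma-1}$ inside $V_h$ with $\Rhs{G'}{P_0}=\emptystring$ and $\Rhs{G'}{P_i} = P_{i-1} \cdot B$, so that $\Exp{G'}{P_i} = \zero^{ni\times 1}$, contributing $\bigO(\sigma)$ to the size. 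Finally, for every $a \in [0 \dd \sigma)$ I add $X_a \in V_h$ with
\[
  \Rhs{G'}{X_a} = P_a \cdot M_{\one} \cdot B' \cdot P_{\sigma-1-a},
\]
and a direct length check ($na + 1 + (n-1) + n(\sigma-1-a) = n\sigma$) confirms that $\Exp{G'}{X_a}$ is indeed a column of length $n\sigma$ with a $\one$ at row $na+1$ and $\zero$s everywhere else.

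The remaining layer faithfully copies the structure of $G$ on top: for each nonterminal $N_i$ of $G$ I introduce $Y_i \in V_v$ (concatenating children via $\vconcat$) with $\Rhs{G'}{Y_i} = X_c$ when $\Rhs{G}{N_i} = c \in \Sigma$, and $\Rhs{G'}{Y_i} = Y_{j_1}\cdot Y_{j_2}$ when $\Rhs{G}{N_i} = N_{j_1}\cdot N_{j_2}$, and set $S' = Y_s$ for $N_s = S$. A routine induction on the expansion order of $G$ shows that $\Exp{G'}{Y_i}$ equals the side-by-side placement $X_{\Exp{G}{N_i}[1]} \vconcat \cdots \vconcat X_{\Exp{G}{N_i}[k]}$ (with $k = |\Exp{G}{N_i}|$); note that in general this is \emph{not} equal to $\ExtMarkAllChars{\Exp{G}{N_i}}{\sigma}$, whose height depends on $|\Exp{G}{N_i}|$. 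However, specialized to $Y_s$ we obtain a matrix whose $j$th column equals $X_{T[j]}$, which matches $\ExtMarkAllChars{T}{\sigma}$ column by column.

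The only nontrivial thing to verify is the size budget, which is where the shared sub-structure becomes essential: the auxiliary nonterminals $M_{\zero}, M_{\one}, B, B'$ cost $\bigO(\log n)$, the $P_i$ chain costs $\bigO(\sigma)$, the $\sigma$ rules for the $X_a$ contribute $\bigO(\sigma)$, and the $Y_i$ layer contributes $\bigO(|G|)$, yielding $|G'|=\bigO(|G|+\sigma+\log n)$. Since $|G|=\Omega(\log n)$ by \cref{lm:slp-min-size} (using $n\geq 2$), this collapses to $|G'|=\bigO(|G|+\sigma)$ as required.
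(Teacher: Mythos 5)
Your proposal is correct and follows essentially the same construction as the paper: single-$\one$ columns of height $n\sigma$ built from $\bigO(\log n)$ zero-column doubling nonterminals plus an $\bigO(\sigma)$ chain of $\zero^{ni\times 1}$ blocks, with the $Y_i$ layer copying $G$ and \cref{lm:slp-min-size} absorbing the $\log n$ term. The only (welcome) difference is that you state the induction for intermediate $Y_i$ more carefully, observing that their expansions have columns $X_{\Exp{G}{N_i}[j]}$ of height $n\sigma$ rather than literally being $\ExtMarkAllChars{\Exp{G}{N_i}}{\sigma}$, which the paper glosses over.
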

\begin{proof}

  Denote $G = (V, \Sigma, R, S)$, $V = \{N_1, \dots, N_{g}\}$, and let $s \in [1 \dd g]$
  be such that $S = N_{s}$. Denote $k = \lceil \log (n-1) \rceil$ and let
  $(p_1, \dots, p_{k'}) \in \Zn^{+}$ be an increasing sequence satisfying $n-1 = \sum_{i=1}^{k'} 2^{p_i}$.
  We then let $G' = (V_{l}, V_{h}, V_{v}, \Sigma', R', S')$, where
  \begin{itemize}
  \item $\Sigma' = \{\zero, \one\}$,
  \item $V_{l} = \{M_{\zero}, M_{\one}\}$,
  \item $V_{h} = \{Z_0, \dots, Z_{k-1}, Z, Z',
                   Z'_0, \dots, Z'_{\sigma-1},
                   X_0, \dots, X_{\sigma-1}\}$,
  \item $V_{v} = \{Y_1, \dots, Y_g\}$,
  \item $S' = Y_{s}$.
  \end{itemize}
  We set $\Rhs{G'}{M_{\zero}} = \zero$ and $\Rhs{G'}{M_{\one}} = \one$.
  We also set $\Rhs{G'}{Z_0} = M_{\zero}$, and for every $i \in [1 \dd k)$,
  $\Rhs{G'}{Z_i} = Z_{i-1} \cdot Z_{i-1}$.
  Next, we set $\Rhs{G'}{Z} = \bigodot_{i=1}^{k'} Z_{p_i}$ and $\Rhs{G'}{Z'} = Z \cdot M_{\zero}$.
  We also set $\Rhs{G'}{Z'_0} = \emptystring$,
  and for every $i \in [1 \dd \sigma)$, $\Rhs{G'}{Z'_{i}} = Z'_{i-1} \cdot Z'$.
  For every $i \in [0 \dd \sigma)$, we
  let $\Rhs{G'}{X_{i}} = Z'_{i} \cdot M_{\one} \cdot Z \cdot Z'_{\sigma-i-1}$.
  Lastly, we define $\Rhs{G'}{Y_i}$ for all $i \in [1 \dd g]$. We consider two cases:
  \begin{itemize}
  \item If $|\Rhs{G}{N_i}| = 1$, then letting $c \in [0 \dd \sigma)$ be such that
    $\Rhs{G}{N_i} = c$, we set $\Rhs{G'}{Y_i} = X_{c}$.
  \item If $|\Rhs{G}{N_i}| = 2$, then letting $j_1, j_2 \in [1 \dd g]$ be such that
    $\Rhs{G}{N_i} = N_{j_1} \cdot N_{j_2}$, we set $\Rhs{G'}{Y_i} = Y_{j_1} \cdot Y_{j_2}$.
  \end{itemize}
  The size of the above grammar is $|G'| = \bigO(k + g + \sigma) = \bigO(|G| + \sigma)$, where
  we used that $k = \Theta(\log n) = \bigO(g)$ (\cref{lm:slp-min-size}).

  To show that $\Lang{G'} = \{\ExtMarkAllChars{T}{\sigma}\}$, first observe that
  $\Exp{G'}{Z} = \zero^{(n-1) \times 1}$ and $\Exp{G'}{Z'} = \zero^{n \times 1}$. Thus,
  for every $a \in [0 \dd \sigma)$, it holds
  $\Exp{G'}{X_a} =
  \zero^{a n \times 1} \hconcat
  \one^{1 \times 1} \hconcat
  \zero^{(n-1) \times 1} \hconcat
  \zero^{(\sigma-a-1) n \times 1} \in
  \BinaryAlphabet^{\sigma n \times 1}$.
  By induction on the length of the expansion, we thus obtain that for every $i \in [1 \dd g]$, it holds
  \[
    \Exp{G'}{Y_i} = \ExtMarkAllChars{\Exp{G}{N_i}}{\sigma} \in \BinaryAlphabet^{\sigma n \times |\Exp{G}{N_i}|}.
  \]
  In particular,
  \begin{align*}
    \Lang{G'}
      &= \{\Exp{G'}{S'}\}
      = \{\Exp{G'}{Y_s}\}\\
      &= \{\ExtMarkAllChars{\Exp{G}{N_s}}{\sigma}\}
      = \{\ExtMarkAllChars{T}{\sigma}\}.
      \qedhere
  \end{align*}
\end{proof}

\begin{proposition}\label{pr:reduce-symbol-occ-to-square-all-zero-small-sigma}
  Assume that, for every 2D SLP $G = (V_{l}, V_{h}, V_{v}, \Sigma, R, S)$,
  where $\Sigma = \BinaryAlphabet$, representing a 2D string $T \in \BinaryAlphabet^{r \times c}$,
  there exists a data structure of size $\bigO(|G| \cdot \log^{\bigO(1)} n)$
  that answers square all-zero queries (\cref{def:2d-integer-problems}) on $T$
  in $\bigO(\log^{\bigO(1)} n)$ time, where $n = \max(r, c)$.
  Then, for every 1D SLP $\hat{G} = (\hat{V}, \hat{\Sigma}, \hat{R}, \hat{S})$
  representing $\hat{T} \in \hat{\Sigma}^{\hat{n}}$, where
  $\hat{\Sigma} = [0 \dd \hat{\sigma})$ and $\hat{\sigma} = \bigO(\min(|\hat{G}|, |\hat{T}|))$,
  there exists a data structure of size $\bigO(|\hat{G}| \cdot \log^{\bigO(1)} \hat{n})$ 
  that answers symbol occurrence queries (\cref{def:symbol-occ})
  in $\bigO(\log^{\bigO(1)} \hat{n})$ time.
\end{proposition}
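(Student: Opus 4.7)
The plan is to follow the same template used for \cref{pr:reduce-rank-to-line-sum-small-sigma}, but with the reduction of \cref{lm:reduce-symbol-occ-to-square-all-zero} in place of \cref{lm:reduce-rank-to-line-sum}, and with the 2D grammar construction of \cref{lm:ext-mark-all-chars-grammar} in place of \cref{lm:mark-all-chars-grammar}. First I would fix constants $c, c' > 0$ coming from the hypothesis, so that for every binary 2D SLP $G$, square all-zero queries are answered in $\bigO(|G| \cdot \log^{c} n)$ space and $\bigO(\log^{c'} n)$ query time, where $n = \max(r,c)$.

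Next, given the input 1D SLP $\hat{G}$ over alphabet $[0 \dd \hat{\sigma})$ with $\hat{\sigma} = \bigO(\min(|\hat{G}|, \hat{n}))$, I would apply \cref{lm:ext-mark-all-chars-grammar} to obtain a 2D SLG $G''$ of size $\bigO(|\hat{G}| + \hat{\sigma}) = \bigO(|\hat{G}|)$ with $\Lang{G''} = \{\ExtMarkAllChars{\hat{T}}{\hat{\sigma}}\}$, then invoke \cref{ob:2d-slg-to-2d-slp} to convert it in linear time to a 2D SLP $G'$ of size $\bigO(|\hat{G}|)$ representing the same matrix $T' = \ExtMarkAllChars{\hat{T}}{\hat{\sigma}}$ over the binary alphabet. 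The dimensions are $\Rows{T'} = \hat{n} \hat{\sigma}$ and $\Cols{T'} = \hat{n}$; since $\hat{\sigma} = \bigO(\hat{n})$, the maximum dimension $n'$ satisfies $n' = \bigO(\hat{n}^{2})$, hence $\log n' = \Theta(\log \hat{n})$. The only stored component is the hypothetical square all-zero structure instantiated on $G'$, which uses $\bigO(|G'| \cdot \log^{c} n') = \bigO(|\hat{G}| \cdot \log^{\bigO(1)} \hat{n})$ space, as required.

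To answer a query $\Occurs{\hat{T}}{b}{e}{a}$, I would first handle the degenerate case $b \geq e$ (return $0$); otherwise, invoke the stored structure to compute $z = \SquareAllZero{T'}{\hat{n} \cdot (a+1)}{e}{e-b}$ in $\bigO(\log^{c'} n') = \bigO(\log^{\bigO(1)} \hat{n})$ time, and return $1 - z$. The correctness of this last step is exactly the equivalence proven in \cref{lm:reduce-symbol-occ-to-square-all-zero}. Since the parameters $n \cdot (a+1)$ and $e-b$ in the reduction are within the valid ranges of the 2D matrix $T'$ (i.e., $\hat{n}(a+1) \leq \hat{n}\hat{\sigma} = \Rows{T'}$ and $e-b \leq \hat{n} = \Cols{T'}$), the invocation is well-defined.

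I do not anticipate any real obstacle: all ingredients are already in place, and the proof is a direct parallel of \cref{pr:reduce-rank-to-line-sum-small-sigma}. The only small subtlety worth checking explicitly is that $n' = \bigO(\hat{n}^{2})$ (not just $\bigO(\hat{n})$), which is harmless because $\log n' = \Theta(\log \hat{n})$ keeps both the space bound $|\hat{G}| \cdot \log^{\bigO(1)} \hat{n}$ and the query-time bound $\log^{\bigO(1)} \hat{n}$ intact after the $\polylog$ absorption.
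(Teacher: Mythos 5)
Your proposal is correct and follows essentially the same route as the paper's proof: instantiate the hypothesized square all-zero structure on the 2D SLP for $\ExtMarkAllChars{\hat{T}}{\hat{\sigma}}$ obtained via \cref{lm:ext-mark-all-chars-grammar} and \cref{ob:2d-slg-to-2d-slp}, and answer $\Occurs{\hat{T}}{b}{e}{a}$ as $1-z$ using \cref{lm:reduce-symbol-occ-to-square-all-zero}, with the same $n' = \bigO(\hat{n}^2)$ observation absorbing the dimension blow-up into the polylog factors. No gaps.
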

\begin{proof}

  Let $c > 0$ and $c' > 0$ be constants such that, for every 2D SLP
  $G = (V_{l}, V_{h}, V_{v}, \Sigma, R, S)$ (where $\Sigma = \BinaryAlphabet$)
  representing a 2D string $T \in \BinaryAlphabet^{r \times c}$, the structure from the claim uses
  $\bigO(|G| \cdot \log^{c} n)$ space (where $n = \max(r, c)$), and
  answers square all-zero queries on $T$ in $\bigO(\log^{c'} n)$ time.
  Let $G'$ be a 2D SLP of size $|G'| = \bigO(|\hat{G}| + \hat{\sigma}) = \bigO(|\hat{G}|)$ such that
  $\Lang{G'} = \{\ExtMarkAllChars{\hat{T}}{\hat{\sigma}}\}$ (\cref{def:ext-mark-all-chars}). Such a 2D SLP exists
  by \cref{lm:ext-mark-all-chars-grammar} and \cref{ob:2d-slg-to-2d-slp}.
  Denote $T' = \ExtMarkAllChars{\hat{T}}{\hat{\sigma}}$ and $n' = \max(\Rows{T'}, \Cols{T'})$.
  Note that, since $\Rows{T'} = \hat{n} \cdot \hat{\sigma} = \bigO(\hat{n}^2)$
  and $\Cols{T'} = \hat{n}$, it follows that $n' = \bigO(\hat{n}^2)$.

  The data structure answering symbol occurrence queries on $\hat{T}$ consists of a single component:
  the data structure answering square all-zero queries from the claim applied to $G'$. Note that
  we can apply this structure to $G'$ because $T'$ is over a binary alphabet. The structure
  requires $\bigO(|G'| \cdot \log^{c} n') = \bigO(|\hat{G}| \cdot \log^{c} \hat{n}) =
  \bigO(|\hat{G}| \cdot \log^{\bigO(1)} \hat{n})$ space.

  Let $b,e \in [0 \dd \hat{n}]$ and $c \in [0 \dd \hat{\sigma})$. Using the above structure, we compute
  $\Occurs{\hat{T}}{b}{e}{c}$ as follows:
  \begin{enumerate}
  \item If $b \geq e$, we return that $\Occurs{\hat{T}}{b}{e}{c} = 0$. Assume now that $b < e$.
  \item In $\bigO(\log^{c'} n') = \bigO(\log^{c'} \hat{n})$ time we compute
    $z = \SquareAllZero{T'}{\hat{n} \cdot (c+1)}{e}{e-b}$.
  \item By \cref{lm:reduce-symbol-occ-to-square-all-zero},
    it holds $\Occurs{\hat{T}}{b}{e}{c} = 1 - z$.
    We thus return $1 - z$ as the answer.
  \end{enumerate}
  The query takes $\bigO(\log^{c'} \hat{n}) = \bigO(\log^{\bigO(1)} \hat{n})$ time.
\end{proof}

\begin{proposition}\label{pr:reduce-symbol-occ-to-square-all-zero}
  Assume that, for every 2D SLP $G = (V_{l}, V_{h}, V_{v}, \Sigma, R, S)$,
  where $\Sigma = \BinaryAlphabet$, representing a 2D string $T \in \BinaryAlphabet^{r \times c}$,
  there exists a data structure of size $\bigO(|G| \cdot \log^{\bigO(1)} n)$
  that answers square all-zero queries (\cref{def:2d-integer-problems}) on $T$
  in $\bigO(\log^{\bigO(1)} n)$ time, where $n = \max(r, c)$.
  Then, for every 1D SLP $\hat{G} = (\hat{V}, \hat{\Sigma}, \hat{R}, \hat{S})$
  representing $\hat{T} \in \hat{\Sigma}^{\hat{n}}$, where
  $\hat{\Sigma} = [0 \dd \hat{\sigma})$ and $\hat{\sigma} = \hat{n}^{\bigO(1)}$,
  there exists a data structure of size $\bigO(|\hat{G}| \cdot \log^{\bigO(1)} \hat{n})$ 
  that answers symbol occurrence queries (\cref{def:symbol-occ})
  in $\bigO(\log^{\bigO(1)} \hat{n})$ time.
\end{proposition}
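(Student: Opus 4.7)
The plan is to mirror almost exactly the two-step strategy used one subsection earlier to establish \cref{pr:reduce-rank-to-line-sum} from \cref{pr:reduce-rank-to-line-sum-small-sigma}. The only difference here is that the intermediate 2D query is the square all-zero query rather than the line sum query, but the combinatorial machinery (marking matrices, alphabet reduction) is already in place.

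First I would invoke \cref{pr:reduce-symbol-occ-to-square-all-zero-small-sigma}: by the hypothesis of the current proposition, for every 2D SLP over $\BinaryAlphabet$ there is a data structure of size $\bigO(|G|\cdot\log^{\bigO(1)} n)$ answering square all-zero queries in $\bigO(\log^{\bigO(1)} n)$ time. That proposition then yields, for every 1D SLP $\tilde{G}$ representing a string $\tilde{T}\in\tilde{\Sigma}^{\tilde{n}}$ with $\tilde{\Sigma}=[0\dd\tilde{\sigma})$ and $\tilde{\sigma}=\bigO(\min(|\tilde{G}|,|\tilde{T}|))$, a data structure of size $\bigO(|\tilde{G}|\cdot\log^{\bigO(1)}\tilde{n})$ supporting symbol occurrence queries on $\tilde{T}$ in $\bigO(\log^{\bigO(1)}\tilde{n})$ time. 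In other words, the small-alphabet case of symbol occurrence queries is already handled.

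Next I would apply \cref{pr:slp-rank-and-symbol-occ-alphabet-reduction}, specifically its symbol occurrence variant, to lift the small-alphabet conclusion to the polynomial-alphabet regime. That proposition takes exactly a hypothesis of the form just obtained (polylogarithmic-time symbol occurrence queries on 1D SLPs whose alphabet size is $\bigO(\min(|G|,|T|))$) and produces the same guarantee for 1D SLPs with alphabet size $\hat{\sigma}=\hat{n}^{\bigO(1)}$, at the cost of only the standard $\bigO(\log \hat n)$ overhead from searching the sorted list of used characters. Composing the two reductions gives precisely the statement of \cref{pr:reduce-symbol-occ-to-square-all-zero}.

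There is essentially no obstacle to this chain; the only thing to double-check is that the two blow-up factors compose into a single $\log^{\bigO(1)}\hat{n}$ factor in both space and time, which is immediate since each step multiplies by $\log^{\bigO(1)}\hat{n}$ (for the alphabet reduction, the extra cost is additive $\bigO(|\hat{G}|)$ space and $\bigO(\log\hat{n})$ time, both absorbed into the polylogarithmic bounds). Thus the proof is simply: \emph{Combine \cref{pr:reduce-symbol-occ-to-square-all-zero-small-sigma} with \cref{pr:slp-rank-and-symbol-occ-alphabet-reduction}.}
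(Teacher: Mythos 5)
Your proposal is correct and matches the paper's own proof, which likewise establishes \cref{pr:reduce-symbol-occ-to-square-all-zero} by combining \cref{pr:reduce-symbol-occ-to-square-all-zero-small-sigma} with the alphabet reduction of \cref{pr:slp-rank-and-symbol-occ-alphabet-reduction}. Nothing further is needed.
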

\begin{proof}
  The result follows by combining \cref{pr:reduce-symbol-occ-to-square-all-zero-small-sigma} and
  \cref{pr:slp-rank-and-symbol-occ-alphabet-reduction}.
\end{proof}

\reducefromsymbolocc*
\begin{proof}
  The reduction from symbol occurrence queries on a 1D SLP to square all-zero queries on a 2D SLP follows
  by \cref{pr:reduce-symbol-occ-to-square-all-zero}. The hardness for other queries follows by reducing
  square all-zero queries to other queries as follows:
  \begin{itemize}
  \item square all-zero queries are a special case of all-zero queries,
  \item all-zero queries easily reduce to sum queries,
  \item square all-zero queries reduce to square LCE queries by \cref{pr:reduce-square-all-zero-to-square-lce},
  \item square LCE queries reduce to line LCE queries by \cref{pr:reduce-square-lce-to-line-lce},
  \item line LCE queries reduce to equality queries by \cref{pr:reduce-line-lce-to-equality}.
  \end{itemize}
  Note that all reductions hold for a binary alphabet.
\end{proof}

\clearpage

\appendix

\section{Appendix}\label{sec:app}

\subsection{Pseudo-code from Section~\ref{sec:1d-log}}

\begin{figure}[h!]
  \centering
  \begin{minipage}{.48\linewidth}
    \begin{algorithm}[H]
      \caption{\texttt{LeftMap}$(t,\LevelId,\delta)$}
      \Input{$t \in [1 \dd |V|]$, $\LevelId \in \Zn$, and\\
        $\delta \in [1 \dd |\Exp{G}{N_t}|]$ s.t.\\
        $\delta \leq 2^{\LevelId+1}$.}
      \Output{$t' \in [1 \dd |V|]$ s.t.\\
        $N_{t'} = \LeftMap{G}{N_t}{\LevelId}{\delta}$.}
      $m \gets A_{\rm expsize}[t]$\\
      $k \gets \lceil \delta / 2^{\LevelId} \rceil - 1$\\
      $(b,e) \gets (k \cdot 2^{\LevelId}, \min(m, (k+1) \cdot 2^{\LevelId}))$\\
      $h \gets H^{\rm left}[t,\LevelId,k]$\\
      \If{$e - b = 1$}{
        \Return $(h, 1, \DirLeft)$\\
      }
      \Else{
        $\alpha \gets O^{\rm left}[t,\LevelId,k]$\\
        $(x,y) \gets A_{\rm rhs}[h]$\\
        $\ell \gets A_{\rm expsize}[x]$\\
        \If{$\delta - b \leq \ell - \alpha$}{
          \Return $(x, (\ell {-} \alpha) {-} (\delta {-} b) {+} 1, \DirRight)$\\
        }
        \Else{
          \Return $(y, (\delta {-} b) {-} (\ell {-} \alpha), \DirLeft)$\\
        }
      }
    \end{algorithm}
  \end{minipage}
  \hfill
  \begin{minipage}{.51\linewidth}
    \begin{algorithm}[H]
      \caption{\texttt{RightMap}$(t,\LevelId,\delta)$}
      \Input{$t \in [1 \dd |V|]$, $\LevelId \in \Zn$, and\\
        $\delta \in [1 \dd |\Exp{G}{N_t}|]$ s.t.\\
        $\delta \leq 2^{\LevelId+1}$.}
      \Output{$t' \in [1 \dd |V|]$ s.t.\\
        $N_{t'} = \RightMap{G}{N_t}{\LevelId}{\delta}$.}
      $m \gets A_{\rm expsize}[t]$\\
      $k \gets \lceil \delta / 2^{\LevelId} \rceil - 1$\\
      $(b,e) \gets (k \cdot 2^{\LevelId}, \min(m, (k+1) \cdot 2^{\LevelId}))$\\
      $h \gets H^{\rm right}[t,\LevelId,k]$\\
      \If{$e - b = 1$}{
        \Return $(h, 1, \DirLeft)$\\
      }
      \Else{
        $\beta \,{\gets}\, A_{\rm expsize}[h] {-} (O^{\rm right}[t,\LevelId,k] {+} (e-b))$\\
        $(x,y) \gets A_{\rm rhs}[h]$\\
        $\ell \gets A_{\rm expsize}[y]$\\
        \If{$\delta - b \leq \ell - \beta$}{
          \Return $(y, (\ell {-} \beta) {-} (\delta {-} b) {+} 1, \DirLeft)$\\
        }
        \Else{
          \Return $(x, (\delta {-} b) {-} (\ell {-} \beta), \DirRight)$\\
        }
      }
    \end{algorithm}
  \end{minipage}

  \vspace{1ex}
  \begin{minipage}{.999\linewidth}
    \begin{algorithm}[H]
      \caption{\texttt{RandomAccess}$(i)$}
      \Input{Position $i \in [1 \dd n]$.}
      \Output{The symbol $\Exp{G}{S}[i]$.}
      $(t,\delta,c) \gets (1,i,\DirLeft)$\\
      $\LevelId \gets \lceil \log n \rceil$\\
      \While{$\LevelId \geq 0$}{
        \tcp{Invariant: $\delta \in [1 \dd |\Exp{G}{N_t}|]$, $\delta \leq 2^{\LevelId+1}$, and
          $\Exp{G}{S}[i] = \Access{G}{N_t}{\delta}{c}$}\label{fig:1d-log-access-invariant}
        \If{$c = \DirLeft$}{
          $(t',\delta',c') \gets \texttt{LeftMap}(t,\LevelId,\delta)$\\
        }
        \Else{
          $(t',\delta',c') \gets \texttt{RightMap}(t,\LevelId,\delta)$\\
        }
        \tcp{Invariant: $\delta' \in [1 \dd |\Exp{G}{N_{t'}}|]$, $\delta' \leq 2^{\LevelId}$, and
          $\Access{G}{N_t}{\delta}{c} = \Access{G}{N_{t'}}{\delta'}{c'}$}
        $(t,\delta,c) \gets (t',\delta',c')$\\
        $\LevelId \gets \LevelId-1$\\
      }
      \tcp{Invariant: $\delta = 1$ and $|\Exp{G}{N_t}| = 1$}
      \Return $A_{\rm rhs}[t]$\label{fig:1d-log-access-return}\\
    \end{algorithm}
  \end{minipage}
  \caption{Random access to the text in $\bigO(|G| \log n)$ space.}\label{fig:1d-log-access}
\end{figure}

\clearpage

\subsection{Pseudo-code from Section~\ref{sec:2d-log}}

\begin{figure}[h!]
  \centering
  \begin{minipage}{.99\linewidth}
    \begin{algorithm}[H]
      \caption{\texttt{TopLeftMap}$(t,\LevelId_r,\LevelId_c,\delta_r,\delta_c)$}
      \Input{$t \in [1 \dd |V|]$,
        $\LevelId_r,\LevelId_c \in \Zn$,
        $\delta_r \in [1 \dd \Rows{\Exp{G}{N_t}}]$, and
        $\delta_c \in [1 \dd \Cols{\Exp{G}{N_t}}]$ s.t.\\
        $\delta_r \leq 2^{\LevelId_r + 1}$ and $\delta_c \leq 2^{\LevelId_c + 1}$.}
      \Output{$t' \in [1 \dd |V|]$ s.t.\\
        $N_{t'} = \TopLeftMap{G}{N_t}{\LevelId_r}{\LevelId_c}{\delta_r}{\delta_c}$.}
      $m_r \gets A_{\rm rows}[t]$\\
      $m_c \gets A_{\rm cols}[t]$\\
      $k_r \gets \lceil \delta_r / 2^{\LevelId_r} \rceil - 1$\\
      $k_c \gets \lceil \delta_c / 2^{\LevelId_c} \rceil - 1$\\
      $(b_r,e_r) \gets (k_r \cdot 2^{\LevelId_r}, \min(m_r , (k_r + 1) \cdot 2^{\LevelId_r}))$\\
      $(b_c,e_c) \gets (k_c \cdot 2^{\LevelId_c}, \min(m_c , (k_c + 1) \cdot 2^{\LevelId_c}))$\\
      $h \gets H^{NW}[t,\LevelId_r,\LevelId_c,k_r,k_c]$\\
      \If{$e_r - b_r = 1$ {\bf and} $e_c - b_c = 1$}{
        \Return $(h,1,1,\DirTop,\DirLeft)$ \\	
      }
      \Else{
        $(\alpha_r,\alpha_c) \gets O^{NW}[t,\LevelId_r,\LevelId_c,k_r,k_c]$\\
        $(x,y) \gets A_{\rm rhs}[h]$\\
        \If{$A_{\rm horiz}[h] = 1$}{
          $\ell \gets A_{\rm rows}[x]$\\
          \If{$\delta_r - b_r \leq \ell - \alpha_r$}{
            \Return $(x, (\ell - \alpha_r) - (\delta_r - b_r) + 1 , \alpha_c + (\delta_c - b_c), \DirBottom, \DirLeft)$\\
          }
          \Else{
            \Return $(y, (\delta_r - b_r) - (\ell - \alpha_r), \alpha_c + (\delta_c - b_c), \DirTop, \DirLeft)$\\
          }
        }
        \Else{
          $\ell \gets A_{\rm cols}[x]$\\
          \If{$\delta_c - b_c \leq \ell - \alpha_c$}{
            \Return $(x, \alpha_r + (\delta_r - b_r), (\ell - \alpha_c) - (\delta_c - b_c) + 1, \DirTop, \DirRight)$\\
          }
          \Else{
            \Return $(y, \alpha_r + (\delta_r - b_r), (\delta_c - b_c) - (\ell - \alpha_c), \DirTop, \DirLeft)$\\
          }
        }
      }
    \end{algorithm}
  \end{minipage}
  \caption{The implementation of the $\texttt{TopLeftMap}$ subroutine used
    in the algorithm in \cref{fig:2d-log-access}.}\label{fig:2d-top-left-map}
\end{figure}

\clearpage

\begin{figure}[h!]
  \begin{minipage}{.999\linewidth}
    \begin{algorithm}[H]
      \caption{\texttt{RandomAccess}$(i,j)$}
      \Input{A pair $(i,j) \in [1 \dd \Rows{\Exp{G}{S}}] \times [1 \dd \Cols{\Exp{G}{S}}]$.}
      \Output{The symbol $\Exp{G}{S}[i,j]$.}
      $(t,\delta_r,\delta_c,c_r,c_c) \gets (1,i,j,\DirTop,\DirLeft)$\label{lineone}\\
      $(\LevelId_r,\LevelId_c) \gets (\lceil \log A_{\rm rows}[1] \rceil,\lceil \log A_{\rm cols}[1] \rceil)$\\
      \While{$\LevelId_r > 0$ {\bf or} $\LevelId_c > 0$}{\label{fig:2d-log-access-while}
        \tcp{Invariant: $\delta_r \in [1 \dd \Rows{\Exp{G}{N_t}}]$,
                        $\delta_c \in [1 \dd \Cols{\Exp{G}{N_t}}]$,
                        $\delta_r \leq 2^{\LevelId_r+1}$,
                        $\delta_c \leq 2^{\LevelId_c + 1}$, and
                        $\Exp{G}{S}[i,j] = \AccessTwoDim{G}{N_t}{\delta_r}{\delta_c}{c_r}{c_c}$}\label{linethree}
        \If{$(c_r,c_c) = (\DirTop,\DirLeft)$}{
          $(t', \delta'_r, \delta'_c , c'_r , c'_c) \gets
            \texttt{TopLeftMap}(t, \LevelId_r, \LevelId_c, \delta_r, \delta_c)$\label{linefive}\\
        }
        \ElseIf{$(c_r,c_c) = (\DirTop,\DirRight)$}{
          $(t', \delta'_r, \delta'_c , c'_r , c'_c) \gets
            \texttt{TopRightMap}(t, \LevelId_r, \LevelId_c, \delta_r, \delta_c)$\label{lineseven}\\
        }
        \ElseIf{$(c_r,c_c) = (\DirBottom,\DirLeft)$}{
          $(t', \delta'_r, \delta'_c , c'_r , c'_c) \gets
            \texttt{BottomLeftMap}(t, \LevelId_r, \LevelId_c, \delta_r, \delta_c)$\label{linenine}\\
        }
        \Else{
          $(t', \delta'_r, \delta'_c , c'_r , c'_c) \gets
            \texttt{BottomRightMap}(t, \LevelId_r, \LevelId_c, \delta_r, \delta_c)$\label{lineeleven}\\
        }
        \tcp{Invariant: $\delta'_r \in [1 \dd \Rows{\Exp{G}{N_{t'}}}]$,
                        $\delta'_c \in [1 \dd \Cols{\Exp{G}{N_{t'}}}]$, and
                        $\AccessTwoDim{G}{N_{t}}{\delta_r}{\delta_c}{c_r}{c_c} =
                        \AccessTwoDim{G}{N_{t'}}{\delta'_r}{\delta'_c}{c'_r}{c'_c}$}\label{lineelevenandhalf}
        $(t, \delta_r, \delta_c, c_r, c_c) \gets (t', \delta'_r, \delta'_c , c'_r , c'_c)$\\
        \While{$2^{\LevelId_r} > A_{\rm rows}[t]$}{
          $\LevelId_r \gets \LevelId_r - 1$\label{linefourteen}
        }
        \While{$2^{\LevelId_c} > A_{\rm cols}[t]$}{
          $\LevelId_c \gets \LevelId_c - 1$\label{linesixteen}
        }
        \tcp{Either $\LevelId_r$ or $\LevelId_c$ was reduced by at least one in the two loops above}
      }
      \tcp{Invariant: $\delta_r = \delta_c = 1$ and $N_t \in V_{l}$}
      \Return $A_{\rm rhs}[t]$
    \end{algorithm}
  \end{minipage}
  \caption{Random access to the 2D string in $\bigO(|G| \log^2 n)$
    space.}\label{fig:2d-log-access}
\end{figure}

\bibliographystyle{alphaurl}
\bibliography{paper}

\end{document}